%% file: main.tex
\documentclass{article}
\input{head}

\title{Dynamic Entropy-Encoded Arrays in $O(1)$ Time with Nearly Optimal Space}
\author{
  Guy E. Blelloch\thanks{Carnegie Mellon University. \texttt{guyb@cs.cmu.edu}}
  \and
  Yang Hu\thanks{Tsinghua University. \texttt{y-hu22@mails.tsinghua.edu.cn}}
  \and
  William Kuszmaul\thanks{Carnegie Mellon University. \texttt{kuszmaul@cmu.edu}}
  \and
  Tianxiao Li\thanks{Independent researcher. \texttt{modulo998244353@gmail.com}}
  \and
  Renfei Zhou\thanks{Carnegie Mellon University. \texttt{renfeiz@andrew.cmu.edu}}
}
\date{}

\begin{document}

\maketitle

\begin{abstract}
We show how to implement a dynamic array $A[1, n]$ with symbols from a fixed alphabet $\Sigma$, while supporting $O(1)$-time queries and updates, and using a total space of
\[
  \log \binom{|\Sigma|}{m} + \left(1 + O\left(\frac{\log \log n}{\log n}\right)\right) \cdot \left(\sum_{\sigma \in \Sigma} f_\sigma \log (n / f_\sigma)\right) + n / \polylog n
\]
bits, where $f_\sigma$ denotes the frequency of each symbol $\sigma \in \Sigma$ and $m$ denotes the number of distinct symbols with non-zero frequencies. This resolves a long-standing open question as to whether one can achieve space bounds close to that of arithmetic coding, while supporting $O(1)$-time operations, whenever the entropy is at least $n/\polylog n$.

We also prove a nearly matching space lower bound: up to a factor of $O(\log \log n)$, the entropy-dependent multiplicative overhead of our construction is optimal among $O(1)$-time solutions when $|\Sigma|=O(\sqrt n)$ and the entropy $\sum_{\sigma \in \Sigma} f_\sigma \log (n / f_\sigma)$ lies between $n/\log^{O(1)}n$ and $(1/100)n\log n$. Finally, we present several applications of our results, resolving two open problems having to do with space-efficient dictionaries and filters.
\end{abstract}

\input{intro}

\input{prelim}

\input{prefix-code}

\input{static-arithmetic-code}

\input{arithmetic-code}

\input{arithmetic-code-large-alphabet}

\input{lower-bound}

\input{applications}

\section*{Acknowledgements} 

This research was funded by NSF grant CCF-2504471, NSF grant CCF-2119069, Jane Street Graduate Research Fellowship and MongoDB PhD Fellowship.

\bibliographystyle{alpha}
\bibliography{ref.bib}

\appendix

\input{app_poisson}

\end{document}

%% file: head.tex
\usepackage[T1]{fontenc}
\usepackage[margin=1in]{geometry}
\usepackage{amsfonts,amsmath,amsthm,amssymb,mathtools}  %

\mathtoolsset{centercolon}
\usepackage{xfrac,nicefrac}
\usepackage{mathdots}
\usepackage{mleftright}  %
\let\left\mleft
\let\right\mright

\usepackage{xspace}
\xspaceaddexceptions{]\}}  %
\usepackage{regexpatch}

\usepackage{bm,bbm,dsfont}  %
\usepackage{caption}
\usepackage[normalem]{ulem}
\usepackage{enumitem}

\usepackage{graphicx}
\usepackage{float}
\usepackage{subcaption}  %
\usepackage{tcolorbox}
\usepackage{tikz}
\usetikzlibrary{decorations.pathreplacing}
\usetikzlibrary{calc}
\usetikzlibrary{positioning}
\usetikzlibrary{arrows.meta}

\usepackage[linesnumbered,boxed,ruled,vlined]{algorithm2e}

\SetCommentSty{mycommfont}

\usepackage{thmtools,thm-restate}
\usepackage[colorlinks,citecolor=blue,linkcolor=blue,urlcolor=red]{hyperref}

\theoremstyle{plain}

\newtheorem{theorem}{Theorem}[section]  %
\newtheorem{lemma}[theorem]{Lemma}

\newtheorem{corollary}[theorem]{Corollary}

\newtheorem{claim}[theorem]{Claim}

\theoremstyle{definition}  %



\usepackage[capitalise]{cleveref}
\crefname{algocf}{Algorithm}{Algorithms}
\Crefname{algocf}{Algorithm}{Algorithms}
\crefname{claim}{Claim}{Claims}
\Crefname{claim}{Claim}{Claims}

\usepackage{xurl}

\newfloat{Distribution}{htbp}{loa}
\crefname{Distribution}{Distribution}{Distributions}
\Crefname{Distribution}{Distribution}{Distributions}
\newfloat{Protocol}{htbp}{loa}
\crefname{Protocol}{Protocol}{Protocols}
\Crefname{Protocol}{Protocol}{Protocols}

\SetKwProg{Function}{Function}{:}{}

\SetKwProg{CodeBlock}{}{}{}
\SetKwProg{Repeat}{Repeat}{:}{}

\DeclarePairedDelimiter{\bk}{(}{)}
\DeclarePairedDelimiter{\Bk}{[}{]}
\DeclarePairedDelimiter{\BK}{\{}{\}}

\DeclarePairedDelimiterX\mysetbase[2]{\lbrace}{\rbrace}{#1\,\delimsize\vert\,#2}
\NewDocumentCommand{\myset}{sO{}m m}{%
  \IfBooleanTF{#1}%
    {\mysetbase*{#3}{#4}}%
    {\mysetbase[#2]{#3}{#4}}%
}

\DeclareMathOperator*{\E}{\mathbb{E}}

\let\Pr\PrAux
\DeclareMathOperator{\poly}{poly}

\DeclareMathOperator{\polylog}{polylog}

\renewcommand{\tilde}{\widetilde}

\newcommand{\eps}{\varepsilon}

\renewcommand{\epsilon}{\eps}

\newcommand{\defn}[1]{\emph{\boldmath\textbf{#1}}}

\usepackage{regexpatch}
\makeatletter
\xpatchcmd\thmt@restatable{%
\csname #2\@xa\endcsname\ifx\@nx#1\@nx\else[{#1}]\fi
}{%
\ifthmt@thisistheone
\csname #2\@xa\endcsname\ifx\@nx#1\@nx\else[{#1}]\fi
\else
\csname #2\@xa\endcsname[{Restated}]
\fi}{}{}
\makeatother

\makeatletter
\let\oldparagraph\paragraph
\renewcommand{\paragraph}[1]{%
  \oldparagraph{\boldmath #1}%
}
\makeatother

%% file: intro.tex
\section{Introduction}

Perhaps the most basic problem in data structures is that of storing a dynamic array $A[1, n]$, where each entry $A[i]$ comes from a fixed alphabet $\Sigma=[|\Sigma|]$\footnote{We adopt the convention that $[m]=\{1,\dots,m\}$.}, and where we wish to support efficient updates (change $A[i]$ to some new value $v$) and queries (return $A[i]$). The alphabet itself does not change over time.

Supposing $|\Sigma|$ is a power of two, the dynamic array problem is trivial to solve with $O(1)$-time operations and space $n \log |\Sigma|$ bits.\footnote{Interestingly, when $|\Sigma|$ is not a power of two, it is still possible---albeit with a highly nontrivial construction \cite{dodis2010changing}---to use $\lceil n \log |\Sigma| \rceil$ bits of space (together with $O(\log n)$ precomputed word constants) with $O(1)$-time queries.} Moreover, this space bound is optimal if, say, the entries $A[1], A[2], \ldots, A[n]$ are independent and uniformly random in $\Sigma$. 

A common scenario, however, is that different symbols $\sigma \in \Sigma$ appear with very different frequencies than one another. In such settings, much better space bounds are (at least, in principle) possible. If each character $\sigma \in \Sigma$ appears $f_\sigma$ times---and if one ignores the space needed to store the $f_i$s themselves---then one might hope to use space close to the empirical entropy\footnote{Throughout this paper, we consider only the basic zero-order entropy, with no context used to predict the next symbol.} of the array, given by
\begin{equation}H(A) := \sum_{\sigma \in \Sigma} f_\sigma \log (n / f_\sigma)
\label{eq:zeroentropy}
\end{equation}
bits.

It has been known since the 1970s that, if one does not care about time-efficiency (and, again, if one does not have to store the $f_\sigma$s), one can use arithmetic coding \cite{arithmeticoriginal} to achieve a space bound within two bits of $H(A)$. It has remained open, however, what the best space bounds are for a solution that supports $O(1)$-time operations, and, in particular, whether a bound of the form $(1 + o(1))H(A)$ is possible. This problem, first introduced implicitly by Shannon in 1948 \cite{shannon1948mathematical}, has stood for decades as one of the most basic open questions in space-efficient data structures \cite{shannon1948mathematical,huffman1952method,arithmeticoriginal,mitzenmacher2001compressed,patrascu2008succincter,jansson2012cram,grossi2013dynamic,li2023dynamic}.

\paragraph{This paper: dynamic succinct entropy-encoded arrays. }
The main result of this paper is an affirmative answer to the above question.

\begin{restatable}{theorem}{Main}
    
    \label{thm:arithmetic-code-poly-n}
    Let $c>1$ be a constant. Let $n$ be an integer. Let $w=\Omega(\log n)$ denote the machine-word size. Let $\Sigma=[|\Sigma|]$ be an alphabet, where $|\Sigma|=2^{O(w)}$. There exists a data structure in the virtual memory model that stores an array of symbols $A[1],\dots,A[n]\in \Sigma$, and supports the following operations in constant time in the worst case:
    \begin{itemize}
        \item Update: Given $1\le i\le n$, change $A[i]$ to some $s\in \Sigma$.
        \item Query: Given $1\le i\le n$, return $A[i]$.
    \end{itemize}
    At any point in time, the data structure uses
    \begin{align}
        \label{eq:ourbound}
        \log\binom{|\Sigma|}{m}+\left(1 + O\left(\frac{\log \log n}{\log n}\right)\right) \cdot H(A)+O(n/(\log n)^c)
    \end{align}
    bits of space, where $m$ is the \emph{current} number of distinct symbols in the array. Both the time and space bounds are with high probability in $n$.

\end{restatable}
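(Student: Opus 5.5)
We build the structure in three layers: a static block code, a dynamic arithmetic-code layer for moderate alphabets, and a reduction from large alphabets to moderate ones.

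\textbf{Layer 1: static blocks.} We partition $A$ into blocks of $b = \Theta(\log n/\log\log n)$ consecutive positions. For a block whose contents have local frequencies $g_\sigma$, we use the static arithmetic-coding construction of the earlier section. It stores the block in about $\sum_\sigma g_\sigma \log(n/f_\sigma)$ bits, using global frequency estimates rather than local ones. The additive overhead per block is $O(1)$ bits, or $O(b\cdot\log\log n/\log n)$ relative to the entropy contribution. Decoding a single symbol takes $O(1)$ time via lookup tables on $O(\log n)$-bit chunks.

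\textbf{Layer 2: dynamic maintenance, moderate alphabets.} The global estimates $\log(n/f_\sigma)$ are rounded to multiples of $1/\log n$-ish granularity, so a single update changes only $O(1)$ code parameters. Changing one symbol changes its block's encoded length by a variable amount. We therefore store the variable-length block encodings in a dynamic memory-allocation layer, as in the virtual-memory model. Blocks are grouped into superblocks of $\polylog n$ blocks, each kept in a slack-based dynamic allocator. The total slack is $O(n/(\log n)^c)$ bits. Rebuilding happens lazily, with deamortization over subsequent updates, and the randomized (w.h.p.) guarantees come from hashing the allocator positions. When $f_\sigma$ drifts by a constant factor, we re-encode all blocks in the background using incremental rebuilding. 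Since the code lengths use rounded $\log(n/f_\sigma)$ values, drift of a $(1\pm 1/\log n)$ factor costs only $O(H(A)/\log n)$ extra bits.

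\textbf{Layer 3: large alphabets.} The first term $\log\binom{|\Sigma|}{m}$ pays for storing which $m$ symbols occur. We maintain a dynamic succinct dictionary mapping present symbols to ranks in $[m]$, and run the moderate-alphabet construction on the ranks. The large-alphabet section of the paper handles symbols with very small frequency by storing them explicitly at $\log(n/f_\sigma)$ cost, using their rank plus position.

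\textbf{Accounting.} Summing block encodings gives $\sum_\sigma f_\sigma\log(n/f_\sigma)$ plus an overhead of one extra $O(\log\log n)$-bit rounding term per $\Theta(\log n)$ bits of content. This yields the multiplicative factor $1+O(\log\log n/\log n)$. Pointers and metadata contribute $O(n/(\log n)^c)$ bits.

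\textbf{Main obstacle.} The hardest step is making the per-block cost genuinely multiplicative rather than additive when $H(A)$ is tiny, i.e.\ when most blocks are nearly constant. Here additive $O(1)$ bits per block would give $n\log\log n/\log n$, far above $n/(\log n)^c$. To handle this, we would make block lengths adaptive: a block contains about $\log n$ bits of entropy rather than a fixed number of symbols, so dominant-symbol runs are encoded inside larger units.
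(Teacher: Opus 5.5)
There are genuine gaps. The most serious is in Layer 2.

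\textbf{Frequency growth between rebuilds.} A global re-encode costs time proportional to the number of blocks, about $H/\log n$. With $O(1)$ worst-case work per operation, consecutive rebuilds must therefore be $\Omega(H/\log n)$ updates apart. Your rule ``re-encode everything when some $f_\sigma$ drifts by a constant factor'' would fire on almost every update, since a rare symbol's frequency doubles in one step. Within one rebuild window a single symbol can grow from $1$ to $\Theta(H/\log n)$ copies. Its new copies must be written with a codeword of the current, essentially Kraft-tight code, and that code has no room to give $\sigma$ a shorter codeword without recoding other symbols. Using $\sigma$'s stale $\approx\log n$-bit codeword then costs $\Theta(H)$ extra bits when, e.g., $H=\Theta(n\log n)$. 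Rounding $\log(n/f_\sigma)$ only absorbs $(1\pm 1/\log n)$-factor drift and does not address this.

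The missing idea is the paper's placeholder distribution $\widetilde D$. It reserves a $(\log n)^{-10c}$ fraction of the probability mass, split over $\lfloor\log n\rfloor+2$ levels with $\min(|\Sigma|,2^{k+1})$ level-$k$ placeholders each. Each newly written $\sigma$ is encoded by a level-$k$ placeholder assigned to $\sigma$ on the fly, where $k$ is least with $2^{-k}<f_\sigma/n$. The potential $\Delta^{(t)}=\sum_i\max\{0,-\log\widetilde D^{(0)}(\widetilde a_i^{(t)})+\log D^{(t)}(a_i^{(t)})\}$ then grows by only $O(\log\log n)$ per update. The written entry costs at most $-\log D^{(t)}(\sigma)+O(\log\log n)$ bits, and the overhead of the old copies of $\sigma$ rises by at most $f_\sigma\log((f_\sigma+1)/f_\sigma)<\log e$. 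Hence rebuilding every $\Theta(H/\log n)$ updates suffices. Each rebuild takes $O(1)$ time per block via lookup tables, i.e.\ $O(H/\log n)$ time, and is deamortized chunk by chunk.

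\textbf{Layer 3 fails on space.} A stable assignment of present symbols to ranks in $[m]$ depends on insertion history, not on the set alone. Supporting rank-to-symbol lookups therefore costs about $\log m!\approx m\log m$ bits beyond $\log\binom{|\Sigma|}{m}$. That is $\Theta(H)$ when, e.g., every symbol occurs twice. Keeping ranks inside $[m]$ under deletions also forces relabeling a surviving symbol and re-encoding all its copies.

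The paper instead uses stable hashcodes in $[n\polylog n]$ from a quotient hash $h:[U]\to[b]\times[U/b]$ with $b=n\log^{2c}n$. A frontyard key's hashcode is its bucket $h(x)_1$, which is already the key under which $h(x)_2$ is stored. So the code costs nothing and Backward is just $h^{-1}$; only the $O(n/\log^{2c}n)$ backyard keys pay $O(\log n)$ bits. The alphabet $[n\polylog n]$ is then handled by writing the first $T_{\text{freq}}=\log^{10c}n$ copies of each symbol explicitly, each costing only $O(\log\log n)$ bits above $\log(n/f_\sigma)$. Frequencies are kept in a two-level dictionary at $O(\log\log n)$ bits per present symbol.

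\textbf{Decoding high-cost symbols.} The explicit symbols also shrink the table that decodes high-cost singleton symbols from $O(|\Sigma|\log^2 n)$ to $n/\polylog n$. Your Layers 1--2 ignore this issue. Tables indexed by $\epsilon\log n$ bits cannot decode a longer codeword, while a direct codeword table has one entry per symbol.

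\textbf{Block location.} The obstacle you flag at the end is real but left unsolved. Entropy-adaptive blocks need $O(1)$-time location of the block containing position $i$, using only $O(\log\log n)$ bits of metadata per block. The paper obtains this from chunks of $(\log n)^{3c}$ positions, which is one source of the $n/(\log n)^c$ term. Within each chunk it uses a constant-depth B-tree with $O(\log w)$-bit pointers and leaves stored in per-length VMs.
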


For a fixed support of size $m$ and fixed positive frequencies $\{f_\sigma\}$, the number of possible arrays is $n!/\prod_\sigma f_\sigma!$. Moreover, the choices arising from different supports are disjoint. Thus, even when the multiset of frequencies is known in advance, a counting lower bound is
\begin{align*}
    \log\binom{|\Sigma|}{m}+\log\bk*{\frac{n!}{\prod_\sigma f_\sigma!}}.
\end{align*}
The second term is at most $H(A)$, which is the entropy benchmark against which we state our result. The exponent in the additive $n/\polylog n$ term can be chosen to be an arbitrarily large constant. Thus, our result achieves the desired $(1+o(1))H(A)$ bound whenever $H(A)=\Omega(n/\log^d n)$ for some constant $d$. Our solution is randomized, and offers all of its guarantees with high probability in $n$.

We further prove---via a reduction from the dynamic dictionary problem \cite{li2023tight}---that the above bound \eqref{eq:ourbound} is nearly optimal in a broad parameter regime.

\begin{restatable}{theorem}{LbMain}
    
    \label{thm:arith_lb}
    Let $n$ be an integer. Let $H_0\in [n/\log^{O(1)}n,(1/100)\cdot n\log n]$ be a parameter of our choice. There exists an alphabet $\Sigma=[|\Sigma|]$, where $|\Sigma|=O(\sqrt{n})$, such that: Suppose that there exists a data structure in the word RAM model with $w=\Theta(\log n)$ that stores an array $a_1,\dots,a_n$ of $n$ symbols from $\Sigma$, and supports updates and queries in worst-case constant time (with high probability over the internal randomness of the dictionary). Then such a data structure must use
    \begin{align*}
        H+\omega(H/\log n)
    \end{align*}
    bits of space, where $H=\sum_{\sigma\in \Sigma}f_\sigma\log(n/f_\sigma)$ is the current entropy of the array, and $f_\sigma$ denotes the current number of occurrences of $\sigma$ in the array. The lower bound holds even when we require that $H\in [0.9H_0,1.1H_0]$ at all times.

\end{restatable}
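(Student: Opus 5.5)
We prove \cref{thm:arith_lb} by reducing from the lower bound for dynamic dictionaries of \cite{li2023tight}. That result says: any dictionary storing $k$ keys from a universe of size $U$, with worst-case $O(1)$-time insertions, deletions and membership queries in the word RAM with $w=\Theta(\log n)$, must use $\log\binom{U}{k}+\Omega(k\log\log U)$ bits, or more generally $\log\binom{U}{k}+\omega(k)$ bits in the regime we need. The plan is to encode such a dictionary as an entropy-encoded array whose entropy is essentially $\log\binom{U}{k}$, and to choose $k$ so that the dictionary's redundancy becomes $\omega(H/\log n)$.

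\textbf{The encoding.} Split the universe into $n$ buckets of size $U/n$, one per array position. The symbol $A[i]$ encodes the set of keys in bucket $i$, so the alphabet is the family of small subsets of a bucket. We use a two-symbol-class alphabet with the empty bucket as a distinguished symbol, and we enforce that each bucket holds at most one key by first applying a random hash or permutation, handling collisions via a small overflow stash. Then $|\Sigma|=1+U/n$, and we take $U/n=O(\sqrt n)$.
\begin{itemize}
\item With $k$ nonempty positions spread roughly evenly over the $U/n$ nonempty symbols, the entropy is $H\approx k\log(n/k)+k\log(U/n)\approx\log\binom{U}{k}$, up to lower-order terms of order $k$.
\item Solving for $k$ given $H_0$: with $U/n$ fixed, $H\approx k\log(nU/(kn))$. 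Since $H_0\le n\log n/100$, we get $k\le n/10$, say, and $H_0\ge n/\polylog n$ gives $k\ge n/\polylog n$.
\end{itemize}

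\textbf{Operations.} An insertion or deletion of a key is one array update plus $O(1)$ word operations, and a membership query is one array query. The $O(1)$-time array therefore yields an $O(1)$-time dictionary. The space difference, namely the gap between $H$ and $\log\binom{U}{k}$ caused by frequency imbalance among symbols, must be made $O(H/\log n)$. The dictionary lower bound then forces the array structure to use $H+\omega(H/\log n)$ bits.

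\textbf{Checking the numbers.} The dictionary redundancy is $\Omega(k\log\log n)$, and $H/\log n=\Theta(k\log(U/k)/\log n)=O(k)$ since $\log(U/k)=O(\log n)$. So $k\log\log n=\omega(H/\log n)$, as required.

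\textbf{Main obstacle.} The hardest part is making the reduction's entropy exactly controlled, meaning within $o(k\log\log n)$ of $\log\binom{U}{k}$ and inside $[0.9H_0,1.1H_0]$ at all times. Adversarial key sets in the hard instance could concentrate on few symbols, which lowers $H$ while the dictionary space needed stays the same, so the reduction would not lose anything there. However, the lower bound instance must keep $H$ in the allowed window. To handle this, I would:
\begin{itemize}
\item apply a random permutation of the universe so that symbol frequencies are balanced with high probability;
\item pad the array with dummy positions holding fixed symbols to tune $H$;
\item handle bucket collisions by mapping them through a stash of size $o(k)$ that costs $o(k\log\log n)$ bits.
\end{itemize}
I would also need to check that the hard distribution of \cite{li2023tight} keeps $k$ within constant factors over time, so that $H$ stays in the window.
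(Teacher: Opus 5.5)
\textbf{The gap: bucket collisions.} The reduction breaks at the bucketing/stash step. You reduce from a \emph{membership} dictionary on $[U]$ with $U=n\cdot O(\sqrt n)$, and you need each of the $n$ buckets to hold at most one key. But the theorem must cover $H_0$ up to $(1/100)n\log n$. The array entropy is at most $k\log(en/k)+k\log|\Sigma|+O(\log n)$ with $|\Sigma|=O(\sqrt n)$, so this range forces $k=\Theta(n)$ (about $n/50$). Then $k$ keys in $n$ buckets produce about $k^2/(2n)=\Theta(k)$ collisions. No random hash or permutation helps, because the keys of the hard distribution are already uniform.

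A stash of $\Theta(k)$ keys is fatal. In an ordinary hash table it wastes $\Theta(\log n)$ bits per stashed key beyond its information content, i.e.\ $\Theta(k\log n)=\Theta(H)$ bits. In an $O(1)$-time succinct dictionary it pays, on $\Theta(k)$ keys, the very $\omega(1)$-bits-per-key redundancy you are trying to lower-bound, so no contradiction follows.

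Your slack is also smaller than you assume. For $O(1)$-time dictionaries the bound of \cite{li2023tight} is only $\Omega(k\log^* k)$, barely $\omega(k)$, not $\Omega(k\log\log U)$. The latter would contradict the $O(c)$-time dictionaries of \cite{bender2022optimal}, which waste only $O(\log^{(c)}n)$ bits per key. So the reduction may lose only $o(k\log^* n)$ bits. Even for $k=o(n)$, an explicit stash costs about $(k^2/n)\log n$ bits, which fits only when $H_0=o(n\log^* n)$.

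\textbf{The missing idea: use the key-value version.} Reduce from the \emph{key-value} lower bound (\cref{thm:li2023tight}). It already applies when the key universe is only a constant factor larger than the number of keys, provided values have $\Theta(\log n)$ bits. Take the key universe to be the array positions $[n]$ themselves, with values in $[V']$ for $V'=\sqrt n$. Let $a_i$ be the value of key $i$, or a null symbol $V'+1$ if $i$ is absent. In your language, the bucket index is the key and the offset is the value. The hard distribution never puts two keys in one bucket, so there are no collisions and no stash.

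Choose the number of keys $n'$ so that $\log\binom{n}{n'}+n'\log V'=H_0\pm O(\log n)$; then $n'=\Theta(H_0/\log n)\le n/3$. The hard distribution always holds $n'$ or $n'-1$ keys. Hence $H\le\log\binom{n}{n'}+n'\log V'+O(n')$ deterministically, and $H\ge H_0-O(n')$ w.h.p.\ by a Chernoff bound on the value frequencies. So $H$ stays in $[0.9H_0,1.1H_0]$, and the $\Omega(n'\log^* n)$ redundancy is $\omega(H/\log n)$. No padding is needed either; tuning $n'$ suffices.
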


In the parameter regimes stated in \cref{thm:arith_lb}, the theorem implies that the multiplicative overhead on the $H(A)$ term in our construction is optimal up to an $o(\log \log n)$ factor among all $O(1)$-time solutions. In these regimes, the additive $n/\polylog n$ term is negligible after choosing its exponent sufficiently large. Thus, our upper and lower bounds establish near-optimality in a broad parameter regime.

Along the way, we also consider the (intuitively simpler) problem of encoding a dynamic array of codewords from a known prefix code $C$, where each codeword is at most $O(\log n)$ bits. For this problem, we give a constant-time solution using space $(1 + O(\log \log n / \log n)) \cdot \sum_i |A[i]|$ bits. We also show that this bound is nearly optimal in the regime where $\sum_i |A[i]|=\Omega(n)$.

\paragraph{Applications to dictionaries and filters. }
Finally, we argue that entropy-encoded arrays should be viewed not just as a natural stand-alone problem, but also as a (surprisingly powerful) \emph{tool} for constructing space-efficient solutions to other data-structural problems. 

To demonstrate this point, we discuss in Section \ref{sec:applications} how to use our results to construct new space-efficient dictionaries and filters. In each case, we are able to resolve a long-standing open question by simply taking an existing data structure (that is time-efficient but not yet very space-efficient) and storing it directly in an entropy-encoded array.

We begin by revisiting the problem of constructing a space-efficient hash table mapping keys $k \in U$ to values $v \in V$. This problem, also known as the \emph{dynamic unordered dictionary problem}, is subject to an information-theoretic space requirement of 
\[
S(|U|, |V|, n) = \log \binom{|U|}{n} + n\log |V|.
\]
bits, where $n$ is the current number of keys. A solution is said to be succinct if it uses $(1 + o(1)) \cdot S(|U|, |V|, n)$ bits of space. We present, in Section \ref{sec:applications}, a succinct solution that achieves $O(1)$-time operations, all with high probability in the current value of $n$. This resolves one of the open questions (the ``dynamic resizing'' question) left by Arbitman, Naor, and Segev \cite{arbitman2010backyard} in their 2010 paper on succinct dictionaries with high-probability guarantees. Prior to our construction, the problem had been resolved only with the additional restriction that $n \ll |U| \cdot |V|$ \cite{bender2023iceberg, bender2022optimal}; in the so-called \emph{small-universe} regime, where $n$ and $|U| \cdot |V|$ are close, the only known solutions remained incompatible with resizing \cite{arbitman2010backyard, bender2022optimal, bender2023iceberg}.

We resolve this problem with a remarkably simple construction: When the data structure enters the small-universe regime, we simply transition to an encoding in which we store the data structure as an array of size $|U|$ with values in $V \cup \{\texttt{null}\}$ (the $\texttt{null}$ value means that a key is not present). Although such an array would, in plain text, be highly space-inefficient, if we instead entropy-encode the array, we arrive at an efficient and succinct data structure.

Next, we turn our attention to the problem of storing a succinct fingerprint filter. Given a \defn{false-positive rate} parameter $\epsilon \in (0, 1)$, and setting $\delta = \ln (1 / (1 - \epsilon))$, a \defn{fingerprint filter} \cite{pagh2005optimal,bender2018bloom,bercea2020dynamic,bender2022optimal,kuszmaul2025fingerprint} is a data structure that approximates a set $S$ of up to $n$ keys by first mapping each key $k \in S$ to a random fingerprint in $[n\delta^{-1} + O(1)]$, and then storing the \emph{multiset} of these fingerprints, while supporting efficient insertions, deletions, and membership queries. Such a data structure can be used to perform \defn{approximate-membership queries} on the set $S$, where a query on a key $k$ is answered by checking whether $k$'s fingerprint is in the filter. Such queries are guaranteed to have a false-positive rate of $\epsilon$, meaning that a query on a key $k \in S$ is guaranteed to return true, while a query on a key $k \notin S$ is guaranteed to return false with probability at least $1 - \epsilon$.

In the setting where $\epsilon^{-1} = \omega(1)$, it is known how to construct succinct fingerprint filters with $O(1)$-time operations \cite{pagh2005optimal,bender2018bloom,bercea2020dynamic,bender2022optimal}. When $\epsilon$ is a constant in $(0,1)$, however, it has remained an open question whether one can construct an $O(1)$-time solution that is also succinct \cite{kuszmaul2025fingerprint}.

Using entropy-encoded arrays, we present two fingerprint-filter constructions that are each succinct for constant $\epsilon$. The first construction is to simply store an array of $n \delta^{-1} + O(1)$ counters $C_1, C_2, \ldots$, where $C_i$ counts the occurrences of each fingerprint. We show that, so long as $\epsilon^{-1}=O(\polylog n)$ and $1-\epsilon=\Omega(1)$, we can directly entropy-encode this array in order to arrive at a succinct\footnote{Here ``succinct'' means that the space usage is at most $(1+o(1))$ times the cost of storing the multiset of fingerprints.} fingerprint filter supporting $O(1)$-time operations, with high probability.

Our second fingerprint-filter construction is equally simple: We take a classic solution known as the \emph{quotient filter} \cite{pagh2005optimal,bender2011don}, which \emph{a priori} is not succinct for any value of $\delta^{-1}$, and we store it in an entropy-encoded array. The result is a new filter that is succinct whenever $\epsilon>1/\sqrt{n}$ and $1-\epsilon=\Omega(1)$, while supporting $O(1)$ expected-time operations.

\subsection{Paper Overview}

The technical body of the paper is structured as follows. We prove our main result by presenting a sequence of four increasingly sophisticated data structures, each of which builds on the previous ones to get stronger and stronger guarantees.

\paragraph{\Cref{sec:prefix}: Dynamic Prefix-Coded Arrays.} We begin by considering the following basic problem: Given a \emph{fixed} prefix code $C$, and an array $A[1, n]$ of $O(\log n)$-bit codewords, how space efficiently can we encode the array while still supporting $O(1)$-time queries and updates? The main result of \Cref{sec:prefix} is a succinct solution to this problem, using 
\begin{equation}\left(1 + O\left(\frac{\log \log n}{\log n}\right)\right) \cdot \sum_i |A[i]|
\label{eq:prefixbound}
\end{equation}
bits of space. We also prove later in the paper (Section \ref{sec:lower}) that this bound is nearly tight---namely, that any constant-time solution must use $(1 + \omega(1/\log n)) \cdot \sum_i |A[i]|$ bits of space.

Our basic approach in \Cref{sec:prefix} is as follows. We partition the array into \emph{chunks} of $\polylog n$ entries each. Then, within each chunk, we maintain a (dynamic) partition of entries into \emph{blocks}, where the sum of the sizes of the entries in each block is always roughly $\epsilon \log n$ bits for some small positive constant $\epsilon$.

Because each block stores only $\epsilon \log n$ bits of information, we can use lookup-table techniques to implement operations on a given block in $O(1)$ time. The challenge then is to maintain the blocks in such a way that they can be efficiently accessed and updated. At a high level, our solution to this is to store the blocks (within a given chunk) as the leaves of a B-tree-like structure, where each internal node of the tree stores $\Theta(\log n)$ bits of information. Here, we rely on an insight that has also been used in several recent papers on space-efficient hashing \cite{bender2024modern,bender2022optimal}---namely, that a B-tree with only $\polylog n$ leaves can be implemented using $O(\log \log n)$-bit pointers, allowing for the internal nodes to have fanout $\Theta(\log n / \log \log n)$. This fanout is large enough that the \emph{depth} of the B-tree remains $O(1)$, allowing us to interact with the tree in $O(1)$ time per operation. With this high-level structure, we are able to build a relatively simple data structure achieving \eqref{eq:prefixbound}.

\paragraph{\Cref{sec:fixeddist}: Dynamic Entropy-Encoded Arrays with a Fixed Distribution.} Next we turn to the problem of storing a dynamic entropy-encoded array, but where the space bound we wish to achieve is determined by some \emph{fixed} distribution $D$ over $\Sigma$, rather than by the (dynamically changing) empirical distribution. The main result of \Cref{sec:fixeddist} is a solution to this problem, with $O(1)$-time operations, that uses space
\begin{equation}\left(1 + O\left(\frac{\log \log n}{\log n} \right) \right) \bk*{\sum_{i = 1}^n -\log D(A[i])} + n / \polylog n
\label{eq:fixeddist}
\end{equation}
bits. 

The main difference between the construction in \Cref{sec:fixeddist} and the construction in \Cref{sec:prefix} is the following. In \Cref{sec:prefix}, we broke the elements into blocks, and then encoded \emph{each} element using a fixed prefix code. Now, in \Cref{sec:fixeddist}, we break the elements into blocks of total $D$-entropy $\Theta(\log n)$ and encode each \emph{entire block} using a special prefix code designed to store the block in space close to its total entropy. In this code, a block is partitioned into $O(1)$ intervals, each of $D$-entropy $O(\epsilon\log n)$ unless it consists of a single higher-entropy symbol. By defining the blocks in the right way, and being careful about certain details, we are able to achieve \eqref{eq:fixeddist}.

\paragraph{\Cref{sec:sublinear}: Dynamic Entropy-Encoded Arrays over a Sublinear Alphabet. }Next, we consider the problem of storing a dynamic array $A$ in space close to $H(A)$ (defined above in \eqref{eq:zeroentropy}), but subject to the constraint that the alphabet $\Sigma$ has size at most $n / \polylog n$. (The constraint lets us store up to $\polylog n$ bits of information about each symbol $\sigma \in \Sigma$ without compromising our space efficiency.) Specifically, we are able to construct a dynamic constant-time solution using space
\begin{equation}\left(1 + O\left(\frac{\log \log n}{\log n} \right) \right) H(A) + n / \polylog n + O(|\Sigma| \cdot \log^2 n).
\label{eq:sublingood}
\end{equation}
bits. 

Here, there are two key algorithmic insights that are needed. The first is that, given the data structure already constructed in \Cref{sec:fixeddist} using some distribution $D$, and given a \emph{new} distribution $D'$ that we would like to use in place of $D$, we can rebuild the data structure based on $D'$ (i.e., encode the array using $H_{D'}(A)$ bits instead of $H_D(A)$ bits) \emph{remarkably fast}---in time roughly $O((H_D(A) + H_{D'}(A)) / \log n)$, where $H_D(A)$ and $H_{D'}(A)$ denote the entropy of the array according to distributions $D$ and $D'$, respectively.

This suggests the following high-level structure for an algorithm. Select checkpoints $t_1, t_2, \ldots$ at which the data structure is rebuilt (in a deamortized fashion) according to the empirical distribution $D^{(t_i)}$ at time $t_i$. If at a checkpoint $t_i$ we have $H_{D^{(t_i)}}(A) = R$ for some $R$, then set the next checkpoint $t_{i + 1}$ to occur $\Theta(R / \log n)$ operations in the future. The challenge, then, is to somehow survive between checkpoints, while preserving the space bound \eqref{eq:sublingood}.

To survive between checkpoints, we use a second algorithmic idea. When specifying the distribution used in the $i$-th checkpoint, we actually use a modified distribution $\widetilde{D}^{(t_i)}$ (rather than $D^{(t_i)}$) in which we reserve a $1/\polylog n$ fraction of the density to encode a collection of special \emph{placeholder symbols} with various entropies. These placeholder symbols do not, \emph{a priori}, correspond to any symbols in $\Sigma$. Then, between checkpoints $t_i$ and $t_{i + 1}$, when the frequency of a symbol $\sigma \in \Sigma$ becomes significantly larger than $n\cdot \widetilde{D}^{(t_i)}(\sigma)$, we assign some placeholder symbol $\tau$ to be used in place of $\sigma$; if the frequency of $\sigma$ eventually becomes even much larger still, then we assign another symbol $\tau'$, with $\widetilde{D}^{(t_i)}(\tau') > \widetilde{D}^{(t_i)}(\tau)$, to be used instead, and so on.

If this is done in the right way, then we can get away with encoding the data structure according to distribution $\widetilde{D}^{(t_i)}$ \emph{for the entire time window} $[t_i, t_{i + 1})$. Critically, although the distribution $\widetilde{D}^{(t_i)}$ is unchanging, the mapping of symbols in $\widetilde{D}^{(t_i)}$'s alphabet $\Sigma'$ to the actual alphabet $\Sigma$ evolves continuously, and this is enough for our space bound to survive between rebuilds. 

\paragraph{\Cref{sec:fullresult}: Dynamic Entropy-Encoded Arrays over an Arbitrary Alphabet. }The final challenge is to extend our solution from \Cref{sec:sublinear} to support alphabets of arbitrary size (so long as characters fit in a machine word). This is achieved in \Cref{sec:fullresult}, where we achieve our final space bound of 
\begin{equation}\log \binom{|\Sigma|}{m} + \left(1 + O\left(\frac{\log \log n}{\log n}\right)\right) \cdot H(A) + n / \polylog n
\label{eq:largealph}
\end{equation}
bits.

To achieve this, we begin with a more modest goal, supporting alphabets of size $|\Sigma| = n \polylog n$. Critically, in this parameter regime, we can exploit the following fact: For each symbol $\sigma \in \Sigma$, we are happy to store up to $\polylog n$ copies of the symbol \emph{explicitly} in the array. This is because, if the symbol $\sigma$ were to show up only $\polylog n$ times, its entropy $-\log (f_\sigma / n)$ would be $\log n - O(\log \log n)$ bits, which is within $O(\log \log n)$ of the number of bits needed to encode $\sigma$ in plain text. Making use of this fact, along with a careful strategy for space-efficiently storing metadata (e.g., frequency counts) for each symbol, we are able to achieve \eqref{eq:largealph} for alphabets of size up to $n \polylog n$.

Finally, to support large alphabets, we perform what is essentially a universe reduction step in order to reduce to an alphabet of size $n \polylog n$. For this, we are able to adapt a data structure due to Demaine, auf der Heide, Pagh, and P\v{a}tra\c{s}cu \cite{demaine2006dictionariis}, which maps keys in a set of size $O(n)$ to (stable!) hash codes in $[n \polylog n]$. This allows us to remap keys in order to simulate an alphabet of size $n \polylog n$.

Putting the pieces together in the right way, we are able to achieve the space bound \eqref{eq:largealph} for any alphabet size, which gives the main result of the paper.

\paragraph{\Cref{sec:lower}: Lower Bounds.} Having established our upper bounds, we complement them with nearly matching lower bounds. Here, we set $|\Sigma| = O(\sqrt{n})$, and we prove that any $O(1)$-time dynamic array must use
\[
(1 + \omega(1/\log n)) H(A)
\]
bits of space. This is achieved via a direct reduction to a lower bound previously proven by Li et al.~\cite{li2023tight} for succinct dynamic dictionaries. More precisely, for any prescribed $H_0\in[n/\log^{O(1)}n,(1/100)n\log n]$, the hard instances satisfy $H(A)\in[0.9H_0,1.1H_0]$ throughout the execution.

As noted earlier, we also give a second lower bound for the prefix-code section. For every prescribed $H_0\in[100n,(1/100)n\log n]$, we construct a prefix code and hard instances whose total codeword length remains $\Theta(H_0)$, and show that any $O(1)$-time dynamic array for that code must use space at least
\[
(1 + \omega(1/\log n)) \sum_i |A[i]|
\]
bits. This second lower bound is also proven via a reduction, although the argument (perhaps surprisingly) ends up requiring several additional ideas that the first lower bound does not.

Combined, our lower bounds tell us that, in the parameter regimes just stated, the multiplicative space \emph{redundancies} of the constructions in \Cref{sec:prefix,sec:fullresult} are optimal up to an $o(\log \log n)$ factor.

\paragraph{\Cref{sec:applications}: Applications.} Finally, we complete the paper with \Cref{sec:applications}, which presents the applications to dictionaries and filters already discussed earlier.

\subsection{Prior Work}

The problem of entropy-encoding an array was first introduced implicitly by Shannon in a seminal 1948 paper \cite{shannon1948mathematical}. Shannon showed how to construct a prefix code that can encode the array using space $H(A) + O(n)$ bits. This result was subsequently improved by Huffman \cite{huffman1952method}, who showed how to construct a provably optimal prefix code for a given set of frequencies $f_1, f_2, \ldots, f_{|\Sigma|}$. Although Huffman's construction is optimal for prefix codes, it still may use space as much as $H(A) + O(n)$ bits.

The next major milestone was due to Rissanen, who, in 1976, introduced the notion of \emph{arithmetic coding} \cite{arithmeticoriginal}. Rissanen proved that arithmetic coding uses space at most $H(A) + 2$ bits (assuming the frequencies $f_1, f_2, \ldots$ are already known).

These approaches \cite{shannon1948mathematical, huffman1952method, arithmeticoriginal} support efficient encoding/decoding of the \emph{entire array}, but do not support efficient local queries/updates to a \emph{specific position} in the array. This has proven to be a bottleneck in many real-world applications. A common way around this, albeit with far-from-$O(1)$-time operations, is to break the array into chunks and encode each chunk separately (see, e.g., \cite{pearlman2001trends, xie2006code}).

The next major milestone was a query-efficient construction due to P\v{a}tra\c{s}cu \cite{patrascu2008succincter}, who gave a static array, supporting $O(t)$-time queries, that uses space 
\[
O(|\Sigma| \log n) + H(A) + \frac{n}{((\log n)/{t})^t} + \tilde{O}(n^{3/4}).
\]

Prior to our work, it was only known how to construct efficient dynamic solutions in settings where the alphabet is very small. A line of work by Jansson, Sadakane, and Sung
\cite{jansson2012cram} and Grossi, Raman, Rao, and Venturini \cite{grossi2013dynamic} culminated with a solution supporting $O(1)$-time operations with space $H(A) + n \cdot O((\log |\Sigma| + \log \log n) / \log_{|\Sigma|} n)$. If $\log |\Sigma| = o(\sqrt{\log n})$, then this is within $o(n)$ bits of optimal. In addition to achieving good space bounds for small alphabets, the constructions in \cite{jansson2012cram,grossi2013dynamic} have the further advantage that they are relatively simple, and variations can even be implemented in practice \cite{klitzke2016general}. More recent work by Li, Liang, Yu, and Zhou \cite{li2023dynamic} gives a dynamic solution for constant-size alphabets that uses space $H(A) + n / \polylog n$, but with $\polylog\log n$-time operations.

A closely related problem is that of compressing a length-$n$ \emph{string}, while supporting rank/select/access queries. This has been studied both in the static setting \cite{DBLP:journals/tcs/FerraginaV07,DBLP:conf/soda/GrossiGV03,DBLP:journals/algorithmica/BarbayCGNN14}, and in the dynamic setting \cite{makinen2008dynamic,he2010succinct,jansson2012cram,DBLP:conf/soda/NavarroN13,grossi2013dynamic}, where the goal is to support character insertions/deletions. Notably, Navarro and Nekrich give a solution that supports $O(\log n / \log \log n)$-time operations while using space at most
\[
O(|\Sigma| \log^{1 + \epsilon} n) + \sum_{i \in \Sigma} f_i \log (n / f_i) + O(n),
\]
so long as $|\Sigma| \le \poly(n)$.

Finally, a more basic question than entropy-encoding an array is the problem of storing an array of objects, where each object has a different size. If there are $n$ objects whose sizes sum to $m$ (and that each fit in a machine word), then it is known how to construct solutions that support $O(1)$-time queries and updates, while using space $O(n) + (1 + \epsilon) m$ bits, for some arbitrarily small positive constant $\epsilon \in (0, 1)$ \cite{blandford2008compact, poyias2017compact}; as well as solutions that use $O(n \log \log n) + m$ bits \cite{jansson2012cram}, provided each object is at most $O(\log n)$ bits. An immediate consequence of these results is that, for any \emph{fixed} prefix code $C$, it is possible to maintain a dynamic array with entries encoded by code $C$ using space $O(n) + (1 + \epsilon) m$, where $m$ is the encoding length of the array. A natural question is whether one can hope to reduce this space usage to $(1 + o(1)) m + o(n)$, while still supporting $O(1)$-time operations. This question is answered in the affirmative in Section \ref{sec:prefix} of the current paper, which, in turn, serves as a warmup for several of the other more advanced constructions in the paper.

%% file: prelim.tex
\section{Preliminaries}\label{sec:prelims}

All of our data structures operate in the word-RAM model with some machine-word size $w$. In most applications, $w = \Theta(\log n+\log|\Sigma|)$, but to maximize generality, we will typically present results parameterized by an arbitrary $w = \Omega(\log n+\log|\Sigma|)$. We will model memory as an infinite tape, and measure the space usage of a data structure by the size of the prefix of that tape that the data structure uses.

We will typically use $\Sigma$ to denote the alphabet of characters, and $A$ to denote the array of size $n$ that we are storing. For notational convenience, we will often refer to the $i$-th character $A[i]$ in the array simply by $a_i$. For a given character $\sigma$, we will tend to use $f_\sigma$ to denote the (current) number of occurrences of that character in the array. 

Given a distribution $D$ over $\Sigma$, we will refer to the quantity $-\log D(\sigma)$ as the \defn{$D$-entropy} of a character $\sigma \in \Sigma$, and to the quantity $\sum_i -\log D(a_i)$ as the \defn{$D$-entropy} of the array $A$. As a shorthand, we will often use $H$ to denote the $D$-entropy of the entire array, where $D$ is the \defn{empirical distribution}, given by $D(\sigma) = f_\sigma / n$. Note that all logarithms are base 2 in this paper. We will use $\ln$ for natural logarithms.

Finally, when discussing the memory management of an algorithm, it will sometimes be helpful to refer to the notion of a virtual memory (VM) \cite{li2023dynamic}. A VM simulates a tape of memory that some component of the algorithm uses. The following lemma \cite{li2023dynamic} formalizes the idea that, with standard algorithmic techniques, one can ``simulate'' having access to many separate VMs, even though in reality they must be stored together on a single memory tape. This simulation allows one to access/edit each VM in $O(1)$ time, while also supporting an allocation/release operation on each VM which increases/decreases the length of the VM by up to one machine word.
\begin{lemma}[{\cite[Remark 5.2]{li2023dynamic}}]
    \label{lem:polylog-concat}
    Let there be $B$ VMs of $\ell_1,\dots,\ell_B$ bits respectively ($\ell_i$ changes over time), where $\ell_i\le L$ always holds. When the word size $w=\Omega(\log (BL))$, we can simulate all these (``logical'') VMs using one (``physical'') VM of $\bk*{\sum_{i=1}^{B}\ell_i}+O(B\sqrt{Lw}+Bw)$ bits, such that each word access/edit or allocation/release in any logical VM takes constant time to complete.
\end{lemma}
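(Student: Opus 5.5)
The plan is a standard paging scheme, with the page size tuned to balance two costs: the waste in partially filled pages and the size of the page tables. Fix a page size $P$, a multiple of $w$ with $P=\Theta(\sqrt{Lw})$; if $L<w$ we simply take $P=w$. Each logical VM $i$ is viewed as a sequence of $\lceil \ell_i/P\rceil$ pages, where only the last page may be partially filled. Because $P$ is a multiple of $w$, an aligned word of a logical VM never straddles two pages; an unaligned word access splits into at most two page accesses.

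The physical VM has two parts.
\begin{itemize}
\item \textbf{Header.} A fixed-size header of $O(B\cdot (L/P)\cdot w + Bw)$ bits. For each logical VM it stores the current length $\ell_i$ in $O(w)$ bits, and a page table of $\lceil L/P\rceil$ slots, each holding an $O(\log(BL))=O(w)$-bit physical page index.
\item \textbf{Page area.} Directly after the header, a contiguous page area holding all allocated pages packed with no gaps. Each physical page is stored together with an $O(w)$-bit back-pointer naming its owner VM and its logical page number.
\end{itemize}
To access word $j$ of VM $i$, we compute the logical page $\lfloor j/P\rfloor$, read the physical page index from the table, and access the corresponding offset. This takes $O(1)$ time.

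Allocation and release maintain the invariant that the page area is a gap-free prefix.
\begin{itemize}
\item \textbf{Allocation.} When allocating a word in VM $i$, if the last page still has room we only increment $\ell_i$. Otherwise we append a fresh page (with its back-pointer) at the end of the page area, which grows the physical VM by $O(P)$ bits, and record it in the page table.
\item \textbf{Release.} When a release empties the last page of VM $i$, we move the physically last page of the page area into the freed slot. Using that page's back-pointer, we update the single page-table entry pointing to it, and then shrink the physical VM.
\end{itemize}
Moving one page costs $O(P/w)$ word operations, not $O(1)$. I would handle this first by deamortizing: the move is performed lazily, $O(1)$ words per subsequent operation, with a constant number of ``in-transit'' pages tracked in the header. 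Alternatively, one can keep $O(1)$ slack pages so that a hole is filled gradually. Either way the extra space is $O(P)$ bits in total, which is well within budget.

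For the space bound, the page area uses $\sum_i \lceil \ell_i/P\rceil (P+O(w)) \le \sum_i \ell_i + B P + O(\sum_i \ell_i w/P + Bw)$ bits. Since $\ell_i\le L$, the back-pointer term is $O(BLw/P)$. The header contributes $O(BLw/P + Bw)$. With $P=\Theta(\sqrt{Lw})$ the total overhead is $O(B\sqrt{Lw}+Bw)$, as claimed.

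I expect the main obstacle to be making the compaction step worst-case $O(1)$ while keeping the physical VM a gap-free prefix at every moment, since the lemma's space bound is measured at all times. The cleanest route I would try is the following. Any release that frees a page initiates a background copy of the last physical page into the hole, advanced by $O(1)$ words per operation. During the copy, accesses to that page are redirected through a header flag indicating which copy is authoritative for each offset. A new freeing event cannot arrive before the copy completes, because each VM needs $\Omega(P/w)$ word-level releases to empty another page. The exception is when several VMs free pages in quick succession; there I would bound the number of simultaneous pending copies by $B$ with a per-VM slot, costing $O(BP)$ extra bits, still within $O(B\sqrt{Lw})$.
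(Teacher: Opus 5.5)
The paper gives no proof of this lemma; it is imported from \cite[Remark 5.2]{li2023dynamic}, so your argument has to stand on its own. Your static layout and your space accounting are correct and yield the $O(B\sqrt{Lw}+Bw)$ overhead. The layout is: pages of $P=\Theta(\sqrt{Lw})$ bits, per-VM page tables, back-pointers, and compaction by moving the physically last page into a hole.

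The gap is in the deamortization, which you rightly single out as the crux. Your bound of at most $B$ pending copies rests on the claim that a VM needs $\Omega(P/w)$ word-level releases before it can empty another page. Under your rules (free a page as soon as it becomes empty; always append new pages at the end) this is false, because a VM whose length is a multiple of $P$ can alternately allocate and release one word. Concretely, let $X,Y$ be two VMs whose last pages $x,y$ each hold a single word and are, in this order, the last two pages of the page area. The four operations ``release in $X$, allocate in $X$, release in $Y$, allocate in $Y$'' return to the same configuration. Each of the two releases frees the second-to-last page and so triggers a move of the last page, which your scheme copies at a cost of $\Theta(P/w)$ words. Copying work is therefore demanded at rate $\Theta(P/w)$ per operation while you supply $O(1)$. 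Either the queue of pending copies grows without bound, or (if copies whose source is freed two steps later are abandoned) the number of unfilled holes does. Neither the worst-case time nor the space bound follows. Copying only the used words happens to make this cycle cheap, but then your rate argument no longer applies and some other amortization would be needed.

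A second step is also not constant time. Appending or dropping a page changes the physical VM's length by $P/w$ words at once. Under the interface of the preliminaries, a VM grows or shrinks by one word per $O(1)$-time operation. The physical VM has to respect this, because the paper nests the lemma: inside \cref{lem:B-tree}, and again over the chunks of \cref{thm:prefix-code-alg}.

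The missing ingredient is hysteresis. Let each VM hold at most one spare, completely empty trailing page. Request a new page only when all pages the VM holds are full, and return a page only when a release leaves two completely empty trailing pages. Then consecutive page events of the same VM are separated by at least $P/w-1$ operations on that VM. Hence any window of $T$ operations contains at most $B+O(Tw/P)$ page events.

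Each event generates $O(P/w)$ words of work:
\begin{itemize}
\item filling a hole by copying the last live page, where work wasted because a copy's source page is freed mid-copy is charged to that free event;
\item growing or shrinking the physical VM by one page, one word at a time.
\end{itemize}
Process this work from a single global queue at a sufficiently large constant number of words per operation. A busy-period argument then shows that at most $B+O(1)$ page events are pending at any moment. A pre-grown slack of $B+O(1)$ pages at the end of the physical VM lets every page request be served immediately.

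All of this, together with the at most $2P$ bits of spare and partial pages per VM, costs $O(BP)=O(B\sqrt{Lw})$ extra bits. With this modification your construction proves the lemma.
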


%% file: prefix-code.tex
\section{Dynamic Prefix-Coded Arrays}\label{sec:prefix}

In this section, we describe our algorithm for the most basic setting---storing a dynamic array of codewords from a fixed prefix code.

\begin{theorem}
    \label{thm:prefix-code-alg}
    Let $n$ be an integer and let $\epsilon>0$ be a constant parameter of our choice. Let $w=\Omega(\log n)$ denote the machine-word size, such that $n\ge w^{10}$. Let $\Sigma\subseteq\{0,1\}^{*}$ be a prefix code (that is, for any distinct $s_1,s_2\in \Sigma$, $s_1$ is not a prefix of $s_2$), such that any codeword $s\in \Sigma$ has length $|s|\in [1,L_{\text{max}}]$, where $L_{\text{max}}=\Theta(w)$ is a parameter. There exists a data structure in the virtual memory model that stores an array of codewords $a_1,\dots,a_n\in \Sigma$, and supports the following operations in constant time in the worst case:
    \begin{itemize}
        \item $\text{Update}(i,s)$: Given $1\le i\le n$, change $a_i$ to $s\in \Sigma$.
        \item $\text{Query}(i)$: Given $1\le i\le n$, return $a_i$.
    \end{itemize}
    The data structure uses $H\cdot (1+O((\log w)/w))$ bits of space, where $H$ is defined as $\sum_{i=1}^{n}|a_i|$. The data structure assumes access to a lookup table that depends only on the prefix code $\Sigma$ and that uses $2^{\epsilon w}$ bits of space.
\end{theorem}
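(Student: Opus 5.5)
We follow the outline from the introduction. Partition the array into \emph{chunks} of $K = w^{c}$ consecutive positions for a suitable constant $c$ (e.g.\ $c=3$). Since $n\ge w^{10}$ there are $n/K$ chunks, and every codeword has length at least $1$, so $H\ge n$. Hence any $O(w)$-bit per-chunk overhead costs $O(nw/K)=O(H/w^{c-1})$ bits, which is negligible. Each chunk gets its own VM, and all chunks are combined with \cref{lem:polylog-concat}. A chunk occupies at most $L=O(Kw)$ bits, so the simulation overhead is $O((n/K)(\sqrt{Lw}+w))=O(n w/\sqrt{K})$, again negligible for $c\ge 3$. It therefore suffices to store a single chunk of $K$ codewords, with total length $H_C$, in $H_C(1+O(\log w/w))+O(w)$ bits with $O(1)$-time operations.

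Within a chunk, maintain a partition of the positions into consecutive \emph{blocks}. The invariant is that each block's total code length lies in $[\frac{\epsilon}{2} w,\, 2\epsilon w]$, except that a block may consist of a single codeword longer than this, up to $L_{\max}$ bits. A block is stored as the plain concatenation of its codewords, padded to a length class: allocate $\lceil \ell/g\rceil g$ bits with granularity $g=\Theta(\log w)$, or store the exact length in the node described next. Because the code is prefix-free and a non-singleton block has at most $2\epsilon w$ bits, a lookup table indexed by the block contents plus an index $j$ and a replacement codeword can return the $j$-th codeword, the block's count of codewords, and the bit string after replacement. The index is $O(\epsilon w)$ bits and the replacement codeword is $O(w)$ bits, but the replacement can be split so that the table is indexed only by the $\epsilon w$-bit block and small offsets, with the new codeword spliced in by word shifts. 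Choosing constants appropriately gives table size $2^{\epsilon w}$. This table depends only on $\Sigma$.

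To locate the block containing position $i$ and its rank within it, store the blocks as leaves of a B-tree of constant depth. There are at most $K$ leaves, so pointers and counts need $O(\log K)=O(\log w)$ bits, and a node of $\Theta(w)$ bits has fanout $\Theta(w/\log w)$. The depth is therefore $O(c)=O(1)$. Each node packs child pointers, subtree position counts, and leaf bit-lengths; a rank/search within a node uses a lookup table or word-level parallel prefix sums in $O(1)$ time. Leaf storage is managed by keeping blocks in a memory pool inside the chunk VM, with one free slot per size class. Block sizes take $O(w/\log w)$ classes if we use granularity $g$, or we use exact lengths and compact by swapping with the last block of the same class, updating one parent pointer.

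Space accounting: non-singleton blocks carry at least $\frac{\epsilon}{2}w$ bits of payload each. The per-block overhead is $O(\log w)$ bits for its entry in the parent node, plus $O(\log w)$ padding, plus an amortized share of internal nodes, which is $O(\log w)$ since there are $O(\#\text{leaves}\cdot \log w/w)$ nodes of $O(w)$ bits. The total redundancy is therefore $O(H_C\log w/w)$, and singleton long blocks are charged the same way. An update changes one block's length by at most $L_{\max}$. If the block leaves the allowed range, split it in two or merge it with a neighbour (and split again if needed); this is $O(1)$ leaf changes and is done by table/word operations. The B-tree maintenance must be worst-case $O(1)$, and this is the step I expect to be the main obstacle. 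Because the depth is constant and fanout is $\Theta(w/\log w)$, node splits and merges are rare, but worst-case bounds require deamortization. I would handle it with the standard technique of letting node occupancy range over a constant-factor window and performing incremental rebuilding of an overfull node in the background, $O(1)$ work per update. Alternatively, since a node is only $O(w)$ bits, a node split itself is $O(1)$ word operations; since the depth is constant, a cascading split touches only $O(1)$ nodes, which already gives worst-case $O(1)$ without deamortization. I would verify this simpler route first and fall back on background rebuilding only if parent updates on splitting a leaf-level node (shifting counts in packed fields) fail to fit in $O(1)$ word operations.
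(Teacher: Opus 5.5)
Your architecture is the paper's: chunks in separate VMs combined with \cref{lem:polylog-concat}, a constant-depth B-tree per chunk with $O(\log w)$-bit pointers and fanout $\Theta(w/\log w)$, leaves grouped by length, and lookup tables. The worst-case rebalancing you flag as the main obstacle is not a real obstacle. The paper uses exactly your second option: split or merge the $O(1)$ nodes on the root-to-leaf path, each costing $O(1)$ word operations.

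You genuinely differ on long codewords. You make every codeword longer than $2\epsilon w$ its own leaf, so a single table lookup decodes any other block. The paper instead uses blocks of $[L_{\max},4L_{\max}]$ bits and decodes them $\frac12\epsilon w$ bits at a time. It also modifies $\Sigma$ by inserting the $O(\log w)$-bit length of each long codeword after its first $\frac12\epsilon w$ bits, so the decoder can skip over it. The paper's choice buys an invariant that is always satisfiable, since any codeword fits in a block together with its neighbours. Yours, as stated, is not: a length-$1$ codeword between two codewords of length $L_{\max}>2\epsilon w$ can lie in no block whose length is in $[\frac{\epsilon}{2}w,2\epsilon w]$. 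Keep the upper bound, but replace the lower bound by maximality: no two adjacent blocks have combined length at most $2\epsilon w$. This still gives $O(H_C/(\epsilon w)+1)$ blocks per chunk and is restorable with $O(1)$ splits and merges.

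The real gap is memory management, which is where the paper's proof spends its care. There are three problems.

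(i) With $K=w^3$, \cref{lem:polylog-concat} costs $O(nw/\sqrt K)=O(n/\sqrt w)$. This is not $O(H\log w/w)$ when $H=\Theta(n)$, so you need $K=\Omega(w^4/\log^2 w)$ already at this level.

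(ii) A ``memory pool inside the chunk VM with one free slot per size class'' is not yet a mechanism. Keeping $\Theta(w)$ independently growing and shrinking arrays on one tape with worst-case $O(1)$ operations is precisely what \cref{lem:polylog-concat} provides. The paper therefore applies it inside each chunk, to one VM per exact leaf length plus one VM for internal nodes. That costs $O(K^{1/2}w^2)$ bits per chunk (see \cref{lem:B-tree}), not $O(w)$. For $K=w^3$ this exceeds the chunk's payload, which can be as small as $w^3$ bits. This is why the paper uses chunks of $w^{10}$ entries; any $K\ge w^6$ works.

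(iii) ``Updating one parent pointer'' after moving the last block of a class into a hole requires finding that block's parent. Your leaves are bare codeword strings and your internal nodes store only relative counts, so a relocated node does not know where it sits. Parent back-pointers fail, because relocating an internal node would then force rewriting its $\Theta(w/\log w)$ children. The paper stores each leaf's key (its starting index) in the leaf and child intervals in internal nodes. A relocated node can then read off its interval and level and re-search from the root in $O(1)$ time. The same is needed when internal nodes are relocated, which your proposal does not address.

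With these repairs your route goes through.
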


In this section, as an abuse of notation, when we refer to the \defn{entropy} of a set of entries in the array, we mean the total lengths of all codewords in the set. (We will revert to using the word ``entropy'' in more standard ways in subsequent sections.)

\subsection{A Two-Level Structure}

Our data structure uses a two-level structure to store the entries. We partition the array into disjoint \defn{chunks}, each containing exactly $w^{10}$ entries. Within each chunk, we further group the entries into \defn{blocks}, such that the total entropy of a block (that is, the total lengths of the codewords) is $\Theta(w)$.

Within each chunk, we will use a resizable B-tree to maintain the block structure. At a high level, the leaves of this B-tree will be the blocks themselves, and the internal nodes will each store $\Theta(w / \log w)$ pointers that are each encoded with $O(\log w)$ bits. Across all the B-trees (one per chunk) these pointers will collectively contribute $H\cdot O((\log w)/w)$ bits of redundancy.

Formally, we partition the array into chunks of $w^{10}$ entries each, and construct a separate VM for handling each chunk. These VMs are concatenated using \cref{lem:polylog-concat}, which uses $O(\sqrt{w^{11}\cdot w}+w)=O(w^6)$ bits per VM (the VM for each chunk has size at most $O(w^{10}\cdot w)$). Note that the last chunk may have fewer than $w^{10}$ entries. In this case, we merge it with the second-to-last chunk, so that every chunk contains $\Theta(w^{10})$ entries. It now remains to construct a data structure for each chunk.

Within each chunk, we further partition the entries into disjoint blocks. Each block satisfies that the total length of all codewords in it $\sum_i |a_i|$ is in the range $[L_{\max},4L_{\max}]=\Theta(w)$ (recall that $L_{\max}$ is the maximum length of any codeword). 

Throughout the execution, when the length of an entry $a_i$ changes, we split/merge blocks to ensure that the total length of every block is in the right range. It is not hard to see that for each update, we only need to split/merge a constant number of blocks.

\subsection{Managing Memory with a Tiny B-tree}
\label{sec:B-tree}

As noted earlier, to store each block, we borrow a high-level idea that has emerged in recent years in the hash-table literature \cite{bender2024modern, bender2022optimal}---that we can use external-memory data structures (and specifically B-trees) as part of an in-memory data structure, where each machine word in memory acts as a ``block'' for the B-tree.

Our high-level approach is to use a dynamically resizable B-tree to maintain the block structure within each chunk. The B-tree stores a set of key-value pairs, where each key corresponds to the starting point of a block, and the value associated to a key is an $O(w)$-bit string, which is the concatenation of all codewords in the block. When given the index $i$ of an entry, the B-tree returns the identity of the block containing $i$, as well as the concatenation of codewords.

In the following, we show that it is possible to implement such a B-tree using $O(\log w)$ bits of extra space per key (in addition to storing the values). 

\begin{lemma}
    \label{lem:B-tree}
    Let $w$ be the machine word size, and let $m=\poly w$. Let $\epsilon>0$ be a constant of our choice. There exists a data structure for storing a set $S$ of keys from $[m]$, where each key $i$ is associated with a value $s_i$ which is a bit string of length $O(w)$, supporting the following operations in constant time:
    \begin{itemize}
        \item $\text{Insert}(i,s)$: Insert $i$ to $S$, and let its value be $s_i=s$. It is guaranteed that $i$ did not exist in $S$.
        \item $\text{Delete}(i)$: Delete key $i$ from $S$. It is guaranteed that $i$ existed in $S$.
        \item $\text{Query}(i)$: Find the predecessor of $i$ in $S$ (denote by $i'$), and return $(i',s_{i'})$ as the answer.
    \end{itemize}
    At any point in time, the data structure uses
    \begin{align*}
        \sum_{i\in S}\bk*{|s_i|+O(\log w)}+O(m^{1/2}w^2)
    \end{align*}
    bits of space. It also assumes access to a lookup table of size $2^{\epsilon w}$, which only depends on $m$.
\end{lemma}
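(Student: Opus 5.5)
We build a B-tree of constant depth whose leaves are the key--value pairs. Each internal node has fanout $\Theta(w/\log w)$, and every pointer and key inside it is encoded with $O(\log w)$ bits. Since $m=\poly w$, a key needs only $\log m=O(\log w)$ bits. The node and leaf storage is managed by \cref{lem:polylog-concat}, with at most $\poly w$ logical VMs. That lemma charges $O(\sqrt{Lw}+w)$ bits per VM, which is where the additive $O(m^{1/2}w^2)$ term will come from.

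\textbf{Layout.} Fix fanout $F=\Theta(\epsilon w/\log w)$, small enough that an internal node, which holds up to $F$ separator keys and $F$ child indices each of $O(\log w)$ bits, fits in $\epsilon w$ bits. Then $F$-ary search inside a node, inserting or removing a child entry, and splitting a node in half are all $O(1)$-time lookups in a table of size $2^{\epsilon w}$. That table depends only on $m$, through the key width.

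\textbf{Depth.} The tree has at most $m$ leaves, so its depth is $\log_F m=O(\log w/\log(w/\log w))=O(1)$. A query descends $O(1)$ levels, each costing one table lookup, and returns the predecessor leaf together with its value.

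\textbf{Indices instead of pointers.} A child reference must be $O(\log w)$ bits, not a full address. So each node is a slot in an array-like pool indexed by $[\poly w]$. Values $s_i$ of variable length $O(w)$ are stored in separate logical VMs, one per key slot, under \cref{lem:polylog-concat}; this gives $\sum_i |s_i|$ bits plus per-VM overhead. To keep the $O(m^{1/2}w^2)$ term (rather than $O(m\cdot w)$), I would not give every key in $[m]$ its own VM. Instead I group into $\sqrt m$ buckets: one VM per bucket, each VM holding a packed list of its leaves' values with lengths. With $\sqrt m$ VMs of length at most $L=O(mw)$, the overhead is $\sqrt m\cdot O(\sqrt{mw\cdot w}+w)$, which is too large. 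So I would instead choose the grouping so that each VM has length $O(\sqrt m\, w)$, for instance by cutting the key space into $\sqrt m$ consecutive ranges of size $\sqrt m$. The overhead then becomes $\sqrt m\cdot O(m^{1/4}w+w)\le O(m^{1/2}w^2)$ after adjusting, and within a range, values are located by a small local index.

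\textbf{Updates.} Inserting or deleting a leaf changes one parent entry. Overflow or underflow splits or merges a node, which can cascade up $O(1)$ levels; each step is a table operation plus allocating or freeing a node slot. Free node slots are kept in a free list threaded through the unused slots, so allocation is $O(1)$ and no extra space is needed. Internal nodes number $O(|S|/F)$ and each uses $O(w)$ bits, giving $O(|S|\log w)$ bits in total. Leaves store their key and a local reference in $O(\log w)$ bits each. Together this yields $\sum_{i\in S}(|s_i|+O(\log w))$.

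\textbf{Main obstacle.} The hard step is worst-case $O(1)$ space accounting. The pool of node slots must shrink as $|S|$ shrinks so that space never exceeds $O(|S|\log w)$, yet slot indices must stay $O(\log w)$ bits. I would handle this by relocating the last slot into any freed hole and updating its single parent pointer. Finding that parent requires each node to carry a back-pointer of $O(\log w)$ bits, which the space bound can absorb.
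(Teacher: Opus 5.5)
Your proposal has two genuine gaps.

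\textbf{Storage of the variable-length values.} Your scheme keeps a packed list of values inside each key-range bucket VM, and this does not support $O(1)$-time updates.
\begin{itemize}
\item Deleting a value from the middle of such a list either leaves a hole, which violates the worst-case bound $\sum_{i\in S}(|s_i|+O(\log w))$, or forces you to shift up to $\Theta(\sqrt m)$ words and rewrite the local offsets behind it, which takes more than $O(1)$ time.
\item Moving the last value into the hole fails because the lengths generally differ.
\item Your overhead arithmetic is also wrong. $\sqrt m\cdot m^{1/4}w=m^{3/4}w$, and this is $O(m^{1/2}w^2)$ only when $m=O(w^4)$. The lemma is applied with $m=w^{10}$ and $m=w^{3c}$. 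In \cref{sec:fixeddist}, a per-chunk term of $w^{9c/4+1}$ would sum to $n/w^{3c/4-1}$, not $O(n/w^c)$.
\item For $B$ equal-size buckets, \cref{lem:polylog-concat} charges $\Theta(w\sqrt{Bm})$, so splitting by key range can only increase the overhead.
\end{itemize}
The missing idea is to group values by \emph{length}, not by key.
\begin{itemize}
\item Keep one VM for each possible length $k=O(w)$, holding all values of length exactly $k$ as an array of $k$-bit slots.
\item A deletion then moves the last slot of that VM into the hole and shrinks the VM by $k$ bits.
\item A leaf pointer is the pair (length, slot index), which takes $O(\log w)$ bits.
\item Concatenating these $O(w)$ VMs, each of length $O(mw)$, costs $O(w)\cdot O(\sqrt{mw\cdot w})=O(m^{1/2}w^2)$. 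This is exactly the additive term in the statement.
\end{itemize}

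\textbf{Repairing pointers after a relocation.} Your back-pointer approach breaks the constant-time bound. If every node stores a pointer to its parent, then relocating an internal node $y$ invalidates the back-pointers of all of $y$'s up to $2F=\Theta(w/\log w)$ children. Likewise, every split or merge hands $\Theta(F)$ children to a new parent. These children sit in different slots, so rewriting their back-pointers costs $\Theta(w/\log w)$ time per operation. The same happens to the leaves when a level-$(h-1)$ node splits.

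The paper avoids back-pointers entirely. To find the parent of a relocated node $y$, it reads $y$'s content to learn its key interval and level $j$. It then searches from the root for the level-$(j-1)$ node covering that interval. This costs $O(h)=O(1)$ table lookups because the tree has constant depth.

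\textbf{Two minor points.}
\begin{itemize}
\item The free list threaded through unused slots does not release space; only the compaction you describe later does.
\item With $\epsilon w$-bit nodes, a query table indexed by node content plus a key has $2^{\epsilon w}\poly(w)$ entries. Nodes should therefore be limited to $\epsilon w/2$ bits to keep the table within $2^{\epsilon w}$.
\end{itemize}
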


Note that, since we use \cref{lem:B-tree} on chunks containing $m=\Theta(w^{10})$ entries, and the number of blocks in a chunk is always at least $m/L_{\text{max}}=\Omega(w^{9})$, the extra $O(m^{1/2}w^{2})$ term is negligible compared with the number of blocks.

\begin{proof}
    Let $\delta$ be a sufficiently small positive constant (as a function of $\epsilon$ and $\log_w m$), let $B=\lceil \delta w/\log w\rceil$, and let $h=\log_B m+O(1)=O(1)$. We construct a B-tree of $h$ levels, where each node represents the key-value pairs for the keys in some sub-interval of $[m]$. 
    
    \paragraph{Logical structure.} The nodes are defined in a bottom-up way:
    \begin{itemize}
        \item The level-$h$ (i.e., leaf) nodes directly correspond to the input keys in $S$. The \defn{content} of a level-$h$ node is a key-value pair $(i, s_i)$. 
        \item For $1\le j<h$, each level-$(j+1)$ node is the child of a level-$j$ node, and the number of children of each level-$j$ node is in the range $[B,2B]$. In the exceptional case where the number of level-$(j+1)$ nodes is smaller than $B$, we would have exactly one level-$j$ node.

        The \defn{content} of a level-$j$ node is an $O(w)$-bit string, containing the following information about each child: the sub-interval of $[m]$ corresponding to the child (this is $O(\log m) = O(\log w)$ bits), and a pointer pointing to the physical address of the child. We will describe later how to encode each pointer with $O(\log w)$ bits, so that the total amount of content in the node is at most $2B \cdot O(\log w) = O(w)$ bits. Moreover, if the constant $\delta$ (used in the definition of $B$) is selected to be small enough, then we can conclude a slightly stronger bound, namely that the size of each internal node is actually at most $\epsilon w /2$ bits.
    \end{itemize}
    Queries and updates are implemented in a standard way: For each operation, we first traverse the B-tree to find the content of the leaf node. Then, if we are inserting or deleting a key, we may merge/split a constant number of nodes to keep the number of children of each node in the range $[B,2B]$.

    \paragraph{Efficiently interacting with internal nodes.} In order for the $B$-tree to be efficient, we must be able to efficiently interact with internal nodes. Specifically, given a bit string of length $w$ which represents some internal node $x$, and given a key $k \in [m]$, we must be able to determine in time $O(1)$ which child subtree (if any) contains $k$. 

    Here, we make use of the fact that each internal node is at most $\epsilon w / 2$ bits. It follows that, with a lookup table of size $2^{\epsilon w / 2} \poly(w) \ll 2^{\epsilon w}$, we can directly implement a query to an internal node with a table lookup. This is what allows us to traverse internal nodes in constant time. 

    In addition to querying internal nodes, one must also be able to update them efficiently. This can be done either with another lookup table of size $2^{\epsilon w/2} \poly(w)$, or with standard bit-manipulation techniques.

    \paragraph{Physical structure: Internal nodes.} We use one VM to store the content of the internal nodes. This VM is an array of $N$-bit strings, for some $N=O(w)$, where each string is used to store the content of one node. Note that each node is given a fixed number of bits to store its content, regardless of the number of children. The pointer to a node is then simply an index in the VM (which is $O(\log w)$ bits), indicating the entry where the node content is stored.

    Updates are handled as follows: When allocating space for a new node, we increment the size of the VM by $N$ bits, and store the node content in the new entry. When deleting a node $x$, we move the last entry of the VM (say storing node $y$) to the entry where $x$ was stored, and decrement the size of the array by $N$ bits. 
    
    One difficulty is that, after moving $y$, we need to update the pointer in the content of $y$'s parent, so that it correctly points to the physical address of $y$. To implement this, we first read the content of $y$ to know the interval that it corresponds to, as well as the level of $y$ (say $j$). Then, we search on the B-tree to find the level-$(j-1)$ node that contains $y$ (which has to be $y$'s parent), and update the pointer in its content. Note that we cannot directly store a pointer to $y$'s parent in the content of $y$, since if we did, then after a node is moved, we would need to update the contents of all its children, which would be too time-consuming.

    \paragraph{Physical structure: Leaf nodes.} Storage of the leaf nodes is almost the same as the internal nodes. However, we need to be careful with our space usage, since we can only waste $O(\log w)$ bits per leaf node. That is, we can no longer assign the same amount of space to each node.
    
    To store the leaf contents, we construct not one but $O(w)$ VMs, where the $k$-th VM is responsible for storing the leaf contents $s$ for which the length of $s$ is \emph{exactly} $k$. Each VM is maintained similarly to the VM of the internal nodes. The pointer to the address of a leaf node (stored in the contents of level-$(h-1)$ nodes) consists of the length of its content $s$, as well as the actual address of $s$ in the $|s|$-th VM.
    
    Similar to the internal nodes, when the location for a leaf content changes, we also search on the B-tree to find its parent, and update its pointer.

    Finally, these $O(w)$ VMs (as well as the VM for the internal nodes) are concatenated using \cref{lem:polylog-concat}.

    \paragraph{Total space usage.} The space usage of the B-tree consists of the following parts:
    \begin{itemize}
        \item For each key $i\in S$ (with associated value $s_i$), the content of its corresponding leaf node contains $|s_i|+O(\log w)$ bits, and is stored optimally.
        \item Each internal node uses $O(w)$ bits of space. The total number of internal nodes is
        \begin{align*}
            \bigg\lceil\frac {|S|}B\bigg\rceil+\bigg\lceil\frac {|S|}{B^2}\bigg\rceil+\cdots+\bigg\lceil\frac {|S|}{B^{h-1}}\bigg\rceil=O({|S|}/B),
        \end{align*}
        where $B=\Theta(w/\log w)$. Therefore, they use $O({|S|}\cdot \log w)$ bits in total, which is $O(\log w)$ bits per key.
        \item In the end, we apply \cref{lem:polylog-concat} to concatenate $O(w)$ VMs, where the size of each VM is at most $O(m\cdot w)$. According to \cref{lem:polylog-concat}, we waste $O(m^{1/2}w^2)$ bits due to concatenation. \qedhere
    \end{itemize}
\end{proof}

\subsection{Decoding Block Contents Using a Lookup Table}

Finally, it remains to decode a codeword $a_i$ from the content of the block containing $i$, which is the concatenation of all codewords in the block. Let $\epsilon>0$ be a constant, where we assume for simplicity that $(1/2)\epsilon\cdot w$ is an integer.

The straightforward idea is to construct a lookup table of size $2^{O(\epsilon\cdot w)}$, which, when given a string of length $O(\epsilon\cdot w)$, decomposes it into a sequence of codewords (except the last one which may be a prefix of a codeword). This lookup table can then be used to decompose the content in chunks of $O(\epsilon\cdot w)$ bits.

This approach works when the codewords are all short. However, if some codeword $a_i$ is longer than $\epsilon\cdot w$ bits, then our lookup table does not work. Indeed, in order to handle these long codewords, the size of the lookup table would have to be as large as $2^w$, which is unacceptable to us. 

To fix this, we slightly modify the prefix code $\Sigma$ before running the algorithm: Let $\Sigma'$ denote the modified code, which is obtained from $\Sigma$ by replacing each codeword $s$ with $|s|>(1/2)\epsilon\cdot w$ by $s_{1\dots (1/2)\epsilon\cdot w}\circ \text{str}(|s|)\circ s_{(1/2)\epsilon\cdot w+1\dots |s|}$, where the string $\text{str}(|s|)$ is the binary representation of $|s|$ using $\lceil\log L_{\text{max}}\rceil$ bits (possibly with leading zeros). That is, the length of every codeword longer than $(1/2)\epsilon\cdot w$ increases by $O(\log w)$. When actually running the algorithm, each codeword is replaced by its counterpart in $\Sigma'$.

The benefit of using $\Sigma'$ is that we no longer need lookup tables for decoding long codewords: If we have read a prefix of $(1/2)\epsilon\cdot w+O(\log w)$ bits from the content and the current codeword has not ended, then we directly learn the length of the current codeword. Decoding of a block's content proceeds $(1/2)\epsilon\cdot w$ bits at a time, where each time the lookup table either successfully decomposes the current string, or it claims that the current string is a prefix of some long codeword. In the latter case, we simply skip to the end of the current codeword. Such a lookup table only uses $2^{(1/2)\epsilon\cdot w}\ll 2^{\epsilon w}$ bits. This concludes our construction.

%% file: static-arithmetic-code.tex
\section{Dynamic Entropy-Encoded Arrays with a Fixed Distribution}
\label{sec:fixeddist}

In this section, we consider the problem of storing an array of symbols according to some distribution $D$, where $D$ is fixed over time, and we allow the data structure to use $-\log D(\sigma)$ bits when storing the symbol $\sigma$.

This setting can be seen as a special case of the arithmetic-coding setting, where the empirical distribution $D(\sigma)=f_\sigma/n$ is fixed. In the next section, we will solve the general case, where the empirical distribution changes over time, using the algorithm in this section as a subroutine.

\begin{theorem}
    \label{thm:static-arithmetic-code-alg}
    Let $c\ge 4,\epsilon>0$ be constants. Let $n$ be an integer. Let $w=\Omega(\log n)$ denote the machine-word size. Let $\Sigma=[|\Sigma|]$ (i.e., $\Sigma$ contains the first $|\Sigma|$ positive integers) denote the alphabet, where $|\Sigma|=2^{O(w)}$. Let $D$ be a distribution over $\Sigma$, where $D(\sigma)>0$ for any $\sigma\in \Sigma$. There exists a data structure that stores an array of symbols $a_1,\dots,a_n\in \Sigma$, and supports the following operations in worst-case constant time:
    \begin{itemize}
        \item Update: Given $1\le i\le n$, change $a_i$ to some $s\in \Sigma$.
        \item Query: Given $1\le i\le n$, return the symbol $a_i$.
    \end{itemize}
    The data structure uses
    \begin{align*}
        H+ O(H(\log w)/w+n/w^c)+O(w^{4c})
    \end{align*}
    bits of space, where $H$ is defined as $\sum_{i=1}^{n}-\log D(a_i)$. The data structure also assumes access to a lookup table that depends only on $\Sigma$ and $D$, which uses $2^{\epsilon w}+O(|\Sigma|\cdot \log |\Sigma|)$ bits of space.
\end{theorem}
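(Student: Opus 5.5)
We follow the two-level design behind \cref{thm:prefix-code-alg}, but instead of coding each symbol separately we encode each \emph{block} as a single unit with an arithmetic-style code whose length is within $O(1)$ bits of the block's $D$-entropy. Each block has $D$-entropy $\Theta(w)$, so the per-block loss of $O(\log w)$ bits (the B-tree overhead of \cref{lem:B-tree} plus the rounding loss of the code) amounts to $O(H(\log w)/w)$ overall.

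\textbf{Step 1: quantize $D$.} Round every probability to a multiple of $2^{-K}$ with $K=\Theta(w)$. Call a symbol \emph{light} if $D(\sigma)<w^{-c}$, and otherwise rewrite it as $\sigma$ with probability $\widetilde D(\sigma)\ge D(\sigma)(1-1/w^{c'})$. The rounding costs at most $O(1/w^{c'})$ bits per symbol. That is $O(n/w^c)$ in total when $D(\sigma)$ is large, and $O(H/w)$ when $-\log D(\sigma)$ is already $\Omega(\log w)$. The reserved leftover mass $w^{-c}$ is used for an escape, so symbols with tiny $D(\sigma)$ can be charged their own $-\log D(\sigma)$ bits plus $O(1)$. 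The $O(|\Sigma|\log|\Sigma|)$-bit table stores the cumulative quantized probabilities, so a symbol's interval can be found in $O(1)$ time.

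\textbf{Step 2: chunks and blocks.} Split the array into chunks of $w^{O(c)}$ entries, each handled in its own VM and combined via \cref{lem:polylog-concat}. This produces the additive $O(w^{4c})$ term for the last and small chunks, and $n/w^c$ from the concatenation redundancy once the chunk size is large enough. Inside each chunk, maintain blocks of total $D$-entropy in $[L,4L]$ with $L=\Theta(w)$, splitting and merging $O(1)$ blocks per update as before. A symbol whose entropy exceeds $\epsilon w/2$ is isolated as a single interval, stored essentially verbatim as its index together with $O(\log w)$ length bits. Every other block is cut into $O(1)$ \emph{intervals}, each of entropy at most $\epsilon w/2$.

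\textbf{Step 3: encode intervals.} Encode each interval by arithmetic coding with $\widetilde D$ into at most its entropy plus $O(1)$ bits; since a block has $O(1)$ intervals, a block costs its entropy plus $O(\log w)$ bits including interval lengths. An interval's code has only $O(\epsilon w)$ bits, and so does the subarray of symbols it represents once frequent symbols are written compactly. A lookup table of size $2^{\epsilon w}$ therefore decodes the $j$-th symbol of an interval and re-encodes after a modification in $O(1)$ time. Blocks then serve as the leaves of the tiny B-tree of \cref{lem:B-tree}, keyed by starting index, exactly as in \cref{sec:prefix}.

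\textbf{Main obstacle.} The hard part is making table lookups work when an interval holds many very frequent, low-entropy symbols. An interval of entropy $\epsilon w$ can then contain up to $\epsilon w\cdot w^{c}$ symbols, far too many to address in one word. To handle this, bound each interval both by entropy $\le\epsilon w/2$ and by count $\le w$ symbols, paying extra blocks only where entropy per symbol is $<1/w$. Since the per-symbol entropy floor is $\approx\log(1/(1-w^{-c}))$, these blocks contribute $O(n\log w/w\cdot\ldots)$ overhead. One must check that this is $O(n/w^c)+O(H\log w/w)$, possibly by run-length style coding of the dominant symbol inside a block. I expect balancing this, together with the additive constant per interval, to be the delicate calculation.
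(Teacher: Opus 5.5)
\textbf{Verdict: genuine gap.} Your overall design is the paper's: chunks of $w^{O(c)}$ entries; blocks of $D$-entropy $\Theta(w)$ stored as leaves of the tiny B-tree of \cref{lem:B-tree}; each block cut into $O(1)$ intervals that either have entropy at most $\epsilon w/2$ or consist of a single expensive symbol; each interval coded as one unit with $O(1)$ bits of loss. The gap is the step you yourself flag as the main obstacle. You leave it open, and the fix you sketch does not work.

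\textbf{Why the count cap fails.} If intervals are capped at $w$ symbols, then wherever the per-symbol entropy is below $1/w$ you get $\Omega(n/w)$ intervals and blocks. Each pays at least $\Omega(1)$ bits of code rounding, and $\Theta(\log w)$ bits of B-tree and boundary overhead in your own accounting. That is at least $\Omega(n/w)$ extra bits. When $H=O(n/w^{c})$ (say, almost every entry is a symbol with $D(\sigma)=1-w^{-2c}$), the allowed redundancy is only $O(n/w^{c})$, so you are off by a polynomial factor.

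\textbf{The missing observation.} No count cap is needed, because the decoded subarray never has to fit in a word.
\begin{itemize}
\item Since blocks live inside chunks, an interval has length $\ell\le w^{O(c)}$. So $\ell$ and the offset $j$ of a queried entry are $O(\log w)$-bit numbers.
\item Every valid non-singleton interval has product probability at least $2^{-\epsilon w/2}$. Hence for each $\ell$ there are at most $2^{\epsilon w/2}$ of them, and their Shannon codewords for the product distribution $D_\ell$ on $\Sigma^\ell$ have length at most $\epsilon w/2+O(1)$.
\item A table indexed by $(\ell,\text{codeword},j)$ that returns the $j$-th symbol has $2^{\epsilon w/2}\cdot\poly(w)$ entries of $O(w)$ bits, well below $2^{\epsilon w}$. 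Equivalently, store the whole $\ell$-tuple and read it at offset $j$. Analogous tables handle updates, splits and merges.
\end{itemize}
Without a count cap, the number of blocks stays $O(H/w)$ plus one per chunk, and the claimed bound follows.

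\textbf{Step 1 also needs repair.}
\begin{itemize}
\item \emph{Per-symbol charge for light symbols.} If, as you write, each light symbol ($D(\sigma)<w^{-c}$) is charged its own $-\log D(\sigma)+O(1)$ bits, this is too expensive. Take an array of symbols of entropy $\Theta(\log w)$, each losing $\Theta(1)$ bits; the total loss is $\Theta(H/\log w)\gg H\log w/w$. Such symbols must be coded inside the interval code, where the $O(1)$ loss is paid per interval, not per symbol.
\item \emph{The escape is unnecessary.} What you actually need is to cap every symbol's entropy at $O(w)$, so that a single symbol fits in a block. The paper does this by mixing, $D'(\sigma)=(D(\sigma)+1/(|\Sigma|2^{w}))/(1+2^{-w})$, at total cost $O(n/2^{w})$. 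Rounding probabilities \emph{up} to multiples of $2^{-K}$ with $K\ge\log|\Sigma|+w$ and renormalizing would also work.
\item \emph{Singletons.} Storing a singleton as its raw $\lceil\log|\Sigma|\rceil$-bit index can cost far more than $-\log D(\sigma)$ when $|\Sigma|\gg 1/D(\sigma)$. The paper instead uses a Shannon code for $D''=(D+1/|\Sigma|)/2$. Its codeword lengths are at most $\min\{-\log D(\sigma),\log|\Sigma|\}+O(1)$, so a direct-addressed table of $O(|\Sigma|\log|\Sigma|)$ bits decodes singletons in $O(1)$ time.
\end{itemize}
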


As discussed in Section \ref{sec:prelims}, we will often refer to the quantity $-\log D(\sigma)$ as the \defn{$D$-entropy} of a symbol $\sigma$ (it is always guaranteed that $D(\sigma)>0$).

\subsection{Tweaking the Distribution}

Before presenting the algorithm, we first modify the distribution $D$ so that no symbol is too costly. Currently, $-\log D(\sigma)$ could be much greater than $w$ for some $\sigma\in \Sigma$, which is very different from the prefix-code setting. To fix this, we tweak the distribution $D$ to $D'$, where
\begin{align*}
    D'(\sigma)=\frac{D(\sigma)+1/(|\Sigma|\cdot 2^w)}{1+1/2^w}
\end{align*}
for any $\sigma\in \Sigma$. It is easy to verify that $D'$ is a valid distribution and that $D'(\sigma)\ge\Omega(1/(|\Sigma|\cdot 2^w))$ for any $\sigma$, which means that $-\log D'(\sigma)=O(w)$ (recall that $|\Sigma|=2^{O(w)}$). Moreover, for any symbol $\sigma$,
\begin{align*}
    -\log D'(\sigma)\le -\log D(\sigma)+O(1/2^w),
\end{align*}
which means that if we replace $D$ with $D'$, the total $D$-entropy $H$ increases by at most $O(n/2^w)$, which is acceptable. Therefore, we may assume without loss of generality that $D'$ is used in place of $D$, and thus that $-\log D(\sigma) = O(w)$ for all $\sigma \in \Sigma$.

\subsection{Dealing with Long Blocks}

Recall the definition of blocks in \cref{thm:prefix-code-alg}: A block is an interval of entries that has total entropy $\Theta(w)$. If we directly use this requirement for the current setting, but with entropy replaced by $D$-entropy (that is, a block satisfies $\sum_{i}-\log D(a_i)=\Theta(w)$), then the blocks could contain much more than $w$ entries because the cost of storing a symbol $-\log D(a_i)$ can now be much smaller than $1$ bit.

To fix this, we additionally require that a block contains at most $\poly w$ entries. This requirement is the reason for the $n/\poly w$ term in the redundancy in Theorem \ref{thm:static-arithmetic-code-alg}. Note that if we were to remove this requirement, then even encoding the interval of entries that a block contains could require $\omega(\log w)$ bits, which would be a problem for the internal nodes in each B-tree---by restricting each block to $\poly(w)$ entries, we avoid this issue entirely. 

Concretely, in this section, we define chunks and blocks as follows. We partition the array into chunks of length $w^{3c}$ (rather than $w^{10}$ as in the previous section), where $c$ is the parameter given to us by Theorem \ref{thm:static-arithmetic-code-alg}. In defining the blocks, we require that any block must be completely contained in a chunk (this way, a block contains at most $w^{3c}$ entries). We assume that $n$ is at least $w^{3c}$ by adding dummy entries (which explains the $w^{4c}$ term in the redundancy). Within each chunk, we still use \cref{lem:B-tree} to maintain the blocks. Note that while the previous algorithm only considered chunks of length $w^{10}$, \cref{lem:B-tree} is actually much more general. Finally, concatenating the VMs for all chunks costs $O(n/w^{c})$ bits of space according to \cref{lem:polylog-concat}.

\subsection{Encoding the Block Contents}

We now describe how to encode the content of each block. 

\begin{lemma}
    \label{lem:encode_block}
    Given a block containing entries $[l,r]$, we can encode the symbols $a_l,\dots,a_r$ using
    \begin{align*}
        \bk*{\sum_{i\in [l,r]}-\log D(a_i)}+O(\log w)
    \end{align*}
    bits of space, while supporting efficient decodings and updates. This requires a lookup table of size $2^{\epsilon\cdot w}+O(|\Sigma|\cdot \log |\Sigma|)$.
\end{lemma}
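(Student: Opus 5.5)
We encode the block as a single integer in a mixed-radix (arithmetic-coding-style) representation, but organized so that only $O(1)$ lookup-table accesses are ever needed. Since a block has total $D$-entropy $\Theta(w)$ and each symbol has $D$-entropy $O(w)$, we first partition the block $[l,r]$ into $O(1/\epsilon)=O(1)$ consecutive \emph{intervals}. Each interval is either a single symbol with $-\log D(a_i) > \epsilon w/2$, or a maximal run of symbols whose total $D$-entropy is at most $\epsilon w/2$. A greedy left-to-right scan produces such a partition. Two adjacent short intervals always have combined entropy exceeding $\epsilon w/2$, so the number of intervals is $O(\Theta(w)/(\epsilon w))=O(1)$. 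The encoding stores the interval boundaries as offsets within the block. A block has at most $\poly w$ entries and only $O(1)$ intervals, so the boundaries cost $O(\log w)$ bits, and this is where the additive $O(\log w)$ goes.

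Next, we encode each interval by arithmetic coding with respect to $D$. We realize this by rounding $D$ to a quantized distribution $\hat D$ whose probabilities are multiples of $2^{-K}$ for some $K=\Theta(w)$ that is large enough. Each symbol $\sigma$ then corresponds to a subinterval $[\mathrm{cum}(\sigma),\mathrm{cum}(\sigma)+\hat D(\sigma))$ of $[0,1)$. This needs a table of size $O(|\Sigma|\log|\Sigma|)$, storing $\mathrm{cum}$ together with an inverse map that can be searched in $O(1)$ time on the top bits; it accounts for the $O(|\Sigma|\log|\Sigma|)$ term. The code for an interval $I$ is the index of the sub-subinterval of width $\prod_{i\in I}\hat D(a_i)$ containing the sequence. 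Rather than writing $O(1)$ separate codes, each wasting $O(1)$ bits, we combine them: the product of the code-space sizes $\prod_{i}1/\hat D(a_i)$ over the whole block is encoded as one integer in $\lceil\sum_i -\log \hat D(a_i)\rceil+O(1)$ bits. The quantization error is made small by keeping every probability at least $2^{-O(w)}$ (from the tweak) and choosing $K$ a sufficiently large multiple of $w$. The relative error per symbol is then $2^{-\Omega(w)}$, and with $\poly w$ symbols per block the total loss is $o(1)$ bits.

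Decoding $a_i$ first locates $i$'s interval. For a long-symbol interval, we extract its component of the mixed-radix integer and decode it with the inverse-$\mathrm{cum}$ table. For a short interval, the interval's sub-code has at most $\epsilon w/2+O(1)$ bits together with its symbol count, so a lookup table of size $2^{\epsilon w/2}\poly w\le 2^{\epsilon w}$ indexed by (code, count) returns all of its symbols, or the $j$-th one. Updates are handled by decoding the affected interval or intervals, modifying them, re-running the greedy partition locally, and re-encoding, all with $O(1)$ table lookups and word operations.

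The main obstacle is the mixed-radix combination with non-power-of-two radices. Separating the $O(1)$ components of a $\Theta(w)$-bit integer requires division and modulo by radices $\prod_{i\in I}1/\hat D(a_i)$, which need not be integers. Making the radices integers is what the quantization by $2^{-K}$ is for. I would first try ordering the components so that each extraction is one integer division of an $O(w)$-bit word, which takes $O(1)$ time in the word RAM. Since only $O(1)$ components exist, the whole encode/decode then takes $O(1)$ time. The remaining delicacy is verifying that the rounding loss stays $O(1)$ bits in total, and not $O(1)$ per symbol, for intervals containing up to $\poly w$ very low-entropy symbols.
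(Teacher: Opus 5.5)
Your interval partition, the $O(1/\epsilon)$ bound on the number of intervals, and the $O(\log w)$ charge for storing boundaries all match the paper. The encoding step, however, has two problems.

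First, the mixed-radix combination you flag as the main obstacle is unresolved as stated, and it is not needed. The radices $\prod_{i\in I}1/\hat D(a_i)$ are functions of the very symbols the decoder is trying to recover, so it cannot divide by them before decoding. Reordering the components does not remove this circularity. More to the point, there are only $O(1)$ intervals. Concatenating one prefix code per interval, each with $O(1)$ bits of redundancy, therefore wastes only $O(1)$ bits in total, and the $O(\log w)$ term already absorbs this. That is what the paper does: a short interval of length $\ell$ is encoded as a single super-symbol by a Shannon code for the product distribution $D_\ell$ on $\Sigma^\ell$. Its codeword has at most $\epsilon w/2+O(1)$ bits, which is what makes your $(\text{code},\ell)$ lookup table legitimate. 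The redundancy is $O(1)$ per interval rather than per symbol, so no quantization or rounding analysis is needed.

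Second, and this is the genuinely missing idea, your decoding of a singleton interval does not fit the stated table size. After the tweak, $D$ only guarantees probabilities of order $2^{-w}/|\Sigma|$. Any top-bits table within the budget $2^{\epsilon w}+O(|\Sigma|\log|\Sigma|)$ has cells of width at least roughly $\min\{2^{-\epsilon w},1/|\Sigma|\}$. A single cell can then contain the ranges of many rare symbols. You would still need an $O(1)$-time predecessor search over the boundaries $\mathrm{cum}(\sigma)$, which you do not supply.

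The paper avoids this by encoding singletons with a Shannon code for $D''(\sigma)=(D(\sigma)+1/|\Sigma|)/2$. This costs at most one extra bit per singleton interval, hence $O(1)$ per block. It forces every codeword to have length at most $\log|\Sigma|+O(1)$, so a direct table with $O(|\Sigma|)$ entries of $O(\log|\Sigma|)$ bits decodes it. That table is exactly the $O(|\Sigma|\log|\Sigma|)$ term in the lemma.
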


\paragraph{Partitioning the block into intervals.} Let $\epsilon>0$ be a constant. We aim to partition the block into several intervals, where each interval represents a sequence of consecutive entries; and where each interval either has total $D$-entropy at most $(1/2)\epsilon\cdot w$, or contains only one symbol which has $D$-entropy larger than $(1/2)\epsilon\cdot w$. Formally, the intervals are determined as follows.

The first interval starts from the leftmost entry of the block. If the $D$-entropy of the first entry already exceeds $(1/2)\epsilon\cdot w$, then the first interval only contains one entry. Otherwise, we extend the first interval to the point where adding the next entry would increase the total $D$-entropy to more than $(1/2)\epsilon\cdot w$, or if it reaches the end of the block. We then do the same for the second interval (starting from where the first interval ended), and so on. 

\begin{claim}
    \label{clm:interval_number}
    The number of intervals in a block is $O(1/\epsilon)=O(1)$.
\end{claim}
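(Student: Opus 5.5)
We bound the number of intervals with a charging argument that pairs up consecutive intervals. The key input is the block invariant carried over from \cref{thm:prefix-code-alg}: the total $D$-entropy of a block is $\Theta(w)$, so at most $Cw$ for some constant $C$ fixed by the block-splitting rule. We also use that every $D$-entropy is nonnegative.

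Let the intervals, in left-to-right order, be $I_1,\dots,I_k$. We claim that for every $j<k$,
\[
H(I_j)+H(I_{j+1}) > (1/2)\epsilon w,
\]
where $H(I)$ denotes the total $D$-entropy of $I$. Consider two cases.
\begin{itemize}
\item If $I_j$ is a single entry of $D$-entropy larger than $(1/2)\epsilon w$, the inequality holds immediately.
\item Otherwise $I_j$ was built greedily and stopped before the end of the block. By construction, it stopped because appending the next entry $a$ would push its total above $(1/2)\epsilon w$. So $H(I_j)+(-\log D(a))>(1/2)\epsilon w$. The entry $a$ is the first entry of $I_{j+1}$, so $H(I_{j+1})\ge -\log D(a)$, and the inequality follows.
\end{itemize}

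Now sum the claim over the disjoint pairs $(I_1,I_2),(I_3,I_4),\dots$. There are $\lfloor k/2\rfloor$ such pairs, so
\[
\lfloor k/2\rfloor\cdot(1/2)\epsilon w < \sum_j H(I_j) \le Cw .
\]
Hence $k\le 4C/\epsilon+1=O(1/\epsilon)$, which is $O(1)$ since $\epsilon$ is a constant.

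The only delicate point is the greedy stopping rule. A greedy interval that ends merely because the block ends can have small entropy, but this can happen only for the last interval $I_k$, which the argument never uses as the first member of a pair. The proof also relies on the upper bound $O(w)$ on block entropy, which comes from maintaining blocks with $D$-entropy in a range like $[L_{\max},4L_{\max}]$ with $L_{\max}=\Theta(w)$. After the tweak to $D'$, every symbol has $D$-entropy $O(w)$, so this maintenance is well-defined.
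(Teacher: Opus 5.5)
Your proof is correct and uses essentially the paper's charging argument: every non-final interval, together with the first entry of the next interval, has $D$-entropy exceeding $(1/2)\epsilon w$, and comparing this against the $O(w)$ block entropy bounds the number of intervals. The only difference is cosmetic. The paper sums over all extended intervals and notes that each entry is counted at most twice, whereas you sum over disjoint consecutive pairs so that each entry is counted at most once; both give $k = O(1/\epsilon)$ with the same constant.
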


\begin{proof}
    If any interval except the last is extended to the right by one entry, then its $D$-entropy exceeds $(1/2)\epsilon\cdot w$. Since the total $D$-entropy of these extended intervals is at most twice the total $D$-entropy of the block (each entry is counted at most twice), which is $O(w)$, we must have that the number of intervals is $O(1/\epsilon)=O(1)$.
\end{proof}

\paragraph{Encoding an interval.} When encoding a block, we use $O(\log w)$ bits to describe the partitioning of the intervals (i.e., to describe which slot each interval begins with), followed by the encoding of each interval (i.e., an encoding of what the entries are for the slots in the interval). Our basic approach to encoding intervals will be to use the Shannon encoding, which is a prefix code that incurs $O(1)$ bits of redundancy per symbol:

\begin{lemma}[Shannon Codes \cite{shannon1948mathematical}]
    \label{lem:encode}
    Let $\Sigma$ be an alphabet, and let $D$ be a distribution over $\Sigma$, such that $D(\sigma)>0$ for any $\sigma\in \Sigma$. There exists a prefix code encoding $\Sigma$, where the codeword for symbol $\sigma\in \Sigma$ has length $-\log D(\sigma)+O(1)$.
\end{lemma}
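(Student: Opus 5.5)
The statement to prove is Shannon's lemma: every symbol can get a prefix codeword of length $-\log D(\sigma)+O(1)$. I would prove it with the classical Kraft-inequality construction, in its explicit "cumulative distribution" form so that prefix-freeness is visible without invoking Kraft's theorem as a black box. For each $\sigma\in\Sigma$ set $\ell_\sigma=\lceil -\log D(\sigma)\rceil$, so that $2^{-\ell_\sigma}\le D(\sigma)$ and $\ell_\sigma\le -\log D(\sigma)+1$. The length bound then holds automatically, and it remains to exhibit a prefix code with exactly these lengths.

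\textbf{Construction.} Sort the symbols in nonincreasing order of $D(\sigma)$ (equivalently, nondecreasing $\ell_\sigma$), say $\sigma_1,\sigma_2,\dots$. Define $F_k=\sum_{j<k}D(\sigma_j)\in[0,1)$. The codeword of $\sigma_k$ is the first $\ell_{\sigma_k}$ bits after the binary point of $F_k$.

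\textbf{Prefix-freeness.} Take $j<k$. Then $F_k-F_j\ge D(\sigma_j)\ge 2^{-\ell_{\sigma_j}}$. If the codeword of $\sigma_j$ were a prefix of that of $\sigma_k$, then $F_j$ and $F_k$ would agree in their first $\ell_{\sigma_j}$ bits. Both would then lie in the same dyadic interval of length $2^{-\ell_{\sigma_j}}$, giving $|F_k-F_j|<2^{-\ell_{\sigma_j}}$, a contradiction. Because of the sorting, $\ell_{\sigma_j}\le\ell_{\sigma_k}$, so only this one direction of "prefix" needs to be excluded.

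\textbf{Infinite alphabets.} If $\Sigma$ is infinite, the same argument works with the ordering replaced by any enumeration in nonincreasing probability, which exists since only finitely many symbols exceed any given threshold. Alternatively, one can assign codewords greedily by lengths, using Kraft's inequality $\sum_\sigma 2^{-\ell_\sigma}\le\sum_\sigma D(\sigma)=1$. In this paper $\Sigma=[|\Sigma|]$ is finite anyway.

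\textbf{Main obstacle.} The only delicate point is the prefix-freeness argument: one has to use the sorting, or equivalently the canonical Kraft assignment, so that the shorter codeword always belongs to the symbol whose interval starts first. Everything else is immediate. The redundancy is at most $1$ bit per symbol, which is certainly $O(1)$.
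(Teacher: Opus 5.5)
Your proposal is correct: it is precisely Shannon's cumulative-probability construction, and the paper states this lemma only with a citation to Shannon, without a proof of its own. The one convention to make explicit is that the codeword of $\sigma_k$ is $\lfloor 2^{\ell_{\sigma_k}}F_k\rfloor$ written in $\ell_{\sigma_k}$ bits (the terminating expansion when $F_k$ is dyadic), since that is exactly what your half-open dyadic-interval argument uses.
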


Critically, we will use Shannon codes not to encode the individual symbols in the array, but to encode entire intervals. Moreover, as we will see shortly, we actually make use of several different Shannon codes, with different codes used to encode different types of intervals. Specifically, we encode the intervals as follows:

\begin{itemize}
    \item For the intervals that contain only one symbol, we encode this symbol using the prefix code obtained from invoking \cref{lem:encode} on $D''$, defined as
    \begin{align*}
        D''(\sigma)=\frac{D(\sigma)+1/|\Sigma|}{2}.
    \end{align*}
    Under this distribution, the entropy of each symbol increases by at most $1$, since $-\log D''(\sigma)\le -\log D(\sigma)+1$. Moreover, each symbol $\sigma$ has probability at least $\Omega(1/|\Sigma|)$ in $D''$, which means that the length of its encoding in the prefix code is at most $\log (|\Sigma|)+O(1)$. Therefore, we can compute a lookup table of size $O(|\Sigma|)\cdot O(\log |\Sigma|)$ to map each codeword to the corresponding symbol.
    \item For intervals of length $\ell$ where $\ell>1$ and $\ell\le w^{3c}$, consider the distribution $D_\ell$ over $\Sigma^\ell$, where the $\ell$-tuple $(b_1,\dots,b_\ell)$ appears with probability $\prod_{i=1}^{\ell}D(b_i)$ (which is $>2^{-(1/2)\epsilon w}$ by our definition of intervals). We store the interval using the prefix code obtained by invoking \cref{lem:encode} on $D_\ell$. For each length $\ell$, we compute a lookup table to map codewords to symbols. Since each codeword appears with probability $>2^{-(1/2)\epsilon w}$, the length of the codewords corresponding to the symbols of an interval is at most $(1/2)\epsilon\cdot w+O(1)$, which means that the lookup tables use $2^{(1/2)\epsilon\cdot w}\cdot \ell\cdot O(\log|\Sigma|)\ll 2^{\epsilon w}$ bits of space in total.

    Note that some $\ell$-tuples may correspond to much longer codewords in the prefix code, but they can be ignored as they do not correspond to any valid interval.
\end{itemize}

In the above, we only describe lookup tables for decoding a block. We also need to use lookup tables for merging and splitting blocks as well as for updating symbols in blocks; these lookup tables are similar to the ones we have described. In total, the lookup tables use $2^{\epsilon\cdot w}+O(|\Sigma|\cdot \log |\Sigma|)$ bits of space. This concludes the proof of \cref{lem:encode_block}.

\paragraph{Putting the pieces together.}

To summarize, our data structure uses the two-level structure of \cref{thm:prefix-code-alg} to partition the array into chunks and blocks, then encodes the symbols within each block using Shannon codes, and maintains the encoding of blocks using B-trees. The overall space usage of our data structure is bounded as follows:
\begin{itemize}
    \item \textbf{Total encoding length of block:} For a block containing entries $[l,r]$, the length of our encoding is $\bk*{\sum_{i\in [l,r]}-\log D(a_i)}+O(\log w)$ using \cref{lem:encode_block}. Summing this up, the total encoding length is $H+O(\log w)\cdot (\text{\# of blocks})$. Since the number of blocks is $O(H/w)+O(n/w^{3c})$ (each block has total $D$-entropy $\Theta(w)$, except we allow one block of each of the $O(n/w^{3c})$ chunks to have small total entropy), the total encoding length is at most
    \begin{align*}
        H+O(H(\log w)/w+n/w^c).
    \end{align*}
    \item \textbf{Cost of B-trees:} We use \cref{lem:B-tree} to maintain the block contents, which, in addition to the lengths of the blocks' encodings, costs $O(\log w)$ extra bits per block and $O(w^{3c/2+2})$ extra bits per chunk. This sums up to (recall that we assumed $c\ge 4$)
    \begin{align*}
        O(H(\log w)/w+w^{3c/2+2}\cdot n/w^{3c})=O(H(\log w)/w+ n/w^{c}).
    \end{align*}
    \item \textbf{Total size of the lookup tables:} We need a lookup table of size $2^{\epsilon w}$ for processing the B-tree (see \cref{lem:B-tree}), and lookup tables of total size $2^{\epsilon\cdot w}+O(|\Sigma|\cdot \log |\Sigma|)$ for manipulating blocks' encodings (see \cref{lem:encode_block}).
\end{itemize}

In total, our data structure uses $H+O(H(\log w)/w+n/w^c)$ bits of space. Note that when $n< w^{3c}$, we add dummy entries to increase the array length to $w^{3c}$, which costs $O(w^{3c})$ extra bits. This concludes the construction of \cref{thm:static-arithmetic-code-alg}.

%% file: arithmetic-code.tex
\section{Dynamic Entropy-Encoded Arrays over a Sublinear Alphabet}
\label{sec:sublinear}

Next, we show how to construct a dynamic $O(1)$-time array that uses space very close to the entropy, so long as the alphabet size $|\Sigma|$ is slightly sublinear. 

\begin{theorem}
    \label{thm:arithmetic-code-alg}
    Let $c>1$ be a constant. Let $n$ be an integer. Let $w=\Omega(\log n)$ denote the machine-word size. Let $\Sigma=[|\Sigma|]$ be an alphabet, where $|\Sigma|=2^{O(w)}$. There exists a data structure in the virtual memory model that stores an array of symbols $a_1,\dots,a_n\in \Sigma$, and supports the following operations in constant time in the worst case:
    \begin{itemize}
        \item Update: Given $1\le i\le n$, change $a_i$ to some $s\in \Sigma$.
        \item Query: Given $1\le i\le n$, return $a_i$.
    \end{itemize}
    At any point in time, the data structure uses
    \begin{align*}
        H+O(H(\log \log n)/\log n+n/(\log n)^c+|\Sigma|\cdot w^2)
    \end{align*}bits of space. The entropy $H$ is defined as
    \begin{align*}
        H=\sum_{i=1}^{n}-\log(f_{a_i}/n)=\sum_{f_\sigma\ne 0}f_\sigma\cdot \log(n/f_\sigma),
    \end{align*} where $f_\sigma$ is the \emph{current} frequency of the symbol $\sigma$ (i.e., $H$ changes over time).
\end{theorem}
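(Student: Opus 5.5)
We build on \cref{thm:static-arithmetic-code-alg}, used as a black box with a distribution fixed between \emph{checkpoints}, and we rebuild periodically in a deamortized way. This is the plan sketched in the overview.

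\textbf{Step 1: fast re-encoding.} First we show that the structure of \cref{thm:static-arithmetic-code-alg} built under a distribution $D$ can be converted to one built under $D'$ in time $O((H_D(A)+H_{D'}(A))/w + n/w^{c})$. Each block holds $\Theta(w)$ bits and at most $\poly w$ entries. With lookup tables indexed by $(\text{block content}, D\text{-code}, D'\text{-code})$ we can decode a block and re-encode a $\Theta(\epsilon w)$-bit piece of it in $O(1)$ time, so re-encoding costs $O(1)$ per block. The obstacle is that the lookup tables depend on $D$. We therefore round $D$ to a distribution described by $O(|\Sigma|\, w)$ bits, for example by rounding every probability to a power of $(1+1/w)$ times a multiple of $1/(n w^{c})$. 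We then build the tables of size $2^{\epsilon w}+O(|\Sigma|\log|\Sigma|)$ for $D'$ incrementally during the rebuild phase. These tables account for the $|\Sigma| w^2$ term, since we keep $O(1)$ generations with $O(|\Sigma| w)$ words of per-symbol metadata (frequency counters, rounded probabilities and mappings). The $2^{\epsilon w}$ term is $o(n/(\log n)^c)$ once $\epsilon$ is small enough and $w=\Theta(\log n)$. If $w\gg\log n$, we work with $w'=\Theta(\log n)$ instead.

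\textbf{Step 2: checkpoints with placeholder symbols.} At checkpoint $t_i$ the empirical entropy is $R$. We encode with $\widetilde D^{(t_i)}$, which is $(1-1/\log^{c+1} n)\cdot$ the empirical distribution, with probabilities of absent symbols set to zero (their codes are handled through placeholders), plus a $1/\log^{c+1}n$ mass spread over placeholder symbols $\tau_{j,k}$. For each frequency scale $2^j$ there are about $R/(2^j\log n)$ placeholders, each with probability roughly $2^j/n$. The next checkpoint is at $t_{i+1}=t_i+\Theta(R/\log^{2} n + n/\log^{c+1} n)$. In that window the entropy changes by at most a $(1\pm O(1/\log n))$ factor plus $O(n/\log^{c} n)$, and the frequencies change only additively. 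When a symbol $\sigma$'s frequency grows beyond about twice $n\widetilde D(\sigma)$ (or $\sigma$ is new), we assign $\sigma$ a fresh placeholder of the next scale. Its occurrences are moved gradually, $O(1)$ per operation; to locate them we keep per-symbol occurrence lists with $O(1)$ time per update, or we let a background sweep re-encode them lazily. Mapping tables $\Sigma\leftrightarrow$ placeholders are stored in $O(|\Sigma| w)$ bits. We pick the counts so that the placeholder mass stays below $1/\log^{c+1}n$ and no scale runs out: within a window the total frequency increase is at most the window length, so only $O(\text{window}/2^j)$ symbols can reach scale $j$.

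\textbf{Step 3: space accounting.} We charge $\sum_i -\log\widetilde D(a_i)$ against $H$. Symbols that kept their original code pay $-\log(f^{(t_i)}_\sigma/n)+O(1/\log^{c+1}n)$. Since $f_\sigma\le 2f^{(t_i)}_\sigma$ and these frequencies drifted by at most the window length, a convexity argument gives $\sum f_\sigma\log(n/f^{(t_i)}_\sigma)\le H + O(\text{window}\cdot\log n)$. With the window chosen above this is $H+O(H/\log n + n/\log^{c}n)$. Symbols on placeholders pay within an additive $O(1)$ of $\log(n/f_\sigma)$. Such symbols have gained many occurrences during the window, so this $O(1)$ excess per occurrence is again charged to the window, giving $O(\text{window})$ in total. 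Together with the $O(H\log w/w)$ term from \cref{thm:static-arithmetic-code-alg}, this yields the bound in \cref{thm:arithmetic-code-alg}. During a rebuild two copies coexist, but the old copy shrinks as the new one grows, because we migrate block by block. We glue the VMs together using \cref{lem:polylog-concat}.

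\textbf{Main obstacle.} The hardest part is Step 2: keeping the frozen $\widetilde D^{(t_i)}$ within $(1+O(1/\log n))H$ under adversarial frequency shifts, in particular when symbols with small $f^{(t_i)}_\sigma$ grow sharply, or when $H$ is tiny and the window would be $o(1)$. For the second case we would first handle the regime $H<n/\log^{c}n$ separately, with a window of length $\Theta(n/\log^{c+1}n)$, and absorb the resulting slack into the $n/(\log n)^c$ term.
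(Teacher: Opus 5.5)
Your overall architecture (the fixed-distribution structure, placeholder symbols, and deamortized periodic rebuilds) matches the paper's. However, your choice of Step 2 parameters breaks the worst-case $O(1)$ time bound. By your own Step 1, a rebuild touches every block and costs $\Theta(R/\log n+n/\log^{c}n)$ work. Yet consecutive checkpoints are only $\Theta(R/\log^{2}n+n/\log^{c+1}n)$ operations apart, so each operation would have to perform $\Theta(\log n)$ units of rebuild work. This is $\Theta(\log n)$ per operation even amortized, so with $O(1)$ background work per operation the rebuilds fall ever further behind.

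The short window is forced by your requirement that placeholder-coded entries pay only $O(1)$ extra bits. That requirement makes each scale-$j$ placeholder have probability $\Theta(2^j/n)$. In the worst case $\Theta(W/2^j)$ symbols reach scale $j$ in a window of length $W$, so the reserved mass is $\Theta(W\log n/n)$, and reserving it costs $\Theta(W\log n)$ bits spread over the other entries. Hence $W$ cannot be raised to $\Theta(R/\log n)$ within your scheme. This also contradicts your claim that the mass stays below $1/\log^{c+1}n$: with the counts $R/(2^j\log n)$ you state, the mass is about $R/n$, which can exceed $1$.

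The missing idea is the paper's trade-off. Make every level-$k$ placeholder a $\polylog(n)$ factor less likely than $2^{-k}$: there are $\min(|\Sigma|,2^{k+1})$ of them per level, with total mass $(\log n)^{-10c}$, independent of the window. Then an entry written by an update pays $O(\log\log n)$ extra bits. A window of $\Theta(H/\log n)$ updates costs $O(H\log\log n/\log n)$, which is exactly the second term of the theorem and is where its $\log\log n$ comes from. The window now matches the $O(H/\log n)$ rebuild cost, so $O(1)$ background work per operation suffices.

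Your plan to migrate all occurrences of $\sigma$ to its new placeholder cannot be implemented within the bounds. Per-symbol occurrence lists cost $\Omega(\log n)$ bits per array entry, i.e., $\Theta(n\log n)$ bits, far above budget when $H=o(n\log n)$. A lazy sweep needs $\Theta(R/\log n)$ operations, which is longer than your window.

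The migration is also unnecessary. Only the entry being written should get the placeholder; entries that are not rewritten can keep their old code. The at most $f^{(t_i)}_\sigma$ old occurrences of $\sigma$ incur total excess at most $f^{(t_i)}_\sigma\log(f_\sigma/f^{(t_i)}_\sigma)\le (f_\sigma-f^{(t_i)}_\sigma)\log e$. The paper formalizes this with the potential $\Delta^{(t)}=\sum_i\max\{0,-\log\widetilde D^{(0)}(\widetilde a_i^{(t)})+\log D^{(t)}(a_i^{(t)})\}$ (\cref{clm:delta_decrement_slow}). In one update the rewritten summand increases by $O(\log\log n)$, and all other summands together increase by less than $\log e$, so $\Delta^{(t)}=O(n/(\log n)^{10c}+t\log\log n)$. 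This $O(1)$-per-update bound on untouched entries must also replace your $O(\text{window}\cdot\log n)$ convexity estimate, which would only give $O(H)$ at window length $H/\log n$.
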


We define the following notations for clarity of discussion: After the $t$-th update, let $a_i^{(t)}$ denote the current value of the $i$-th entry, let $f_{\sigma}^{(t)}$ denote the current frequency of $\sigma$, and let $D^{(t)}$ denote the current empirical distribution. Finally, we will also use $H^{(t)}$ to denote the current entropy, which is
\begin{align*}
    H^{(t)}=\sum_{i=1}^{n}-\log D^{(t)}(a_i^{(t)}).
\end{align*}

\paragraph{WLOG assumptions on parameters.}
To simplify our exposition, throughout the section, we make the following assumptions on the parameters $c,H$ and $|\Sigma|$, each of which holds without loss of generality:
\begin{itemize}
    \item $c$ is a sufficiently large positive constant: This is WLOG because, replacing $c$ with a larger constant only makes the theorem harder to prove.
    \item $|\Sigma|=O(n)$: This is WLOG because, if it does not hold, then we can trivially achieve Theorem \ref{thm:arithmetic-code-alg} by storing the array explicitly, using $O(n\log |\Sigma|)=o(|\Sigma|\cdot w^2)$ bits of space.
    \item $|\Sigma|=\Omega(n/(\log n)^{3c})$: This is WLOG because, if it does not hold, we can increase $|\Sigma|$ without changing the space guarantee that we are aiming for.
    \item $H = \Omega(|\Sigma| \cdot \log^2 n)$, which in turn implies that $H = \omega(n/(\log n)^{3c})$: This is WLOG because our space bound in \cref{thm:arithmetic-code-alg} is at least $\Omega(|\Sigma| \cdot w^2) = \Omega(|\Sigma| \log^2 n)$. If $H$ is smaller than this space bound, we can add $|\Sigma|$ dummy symbols to $\Sigma$, and append $\log n$ copies of each dummy symbol to the array, which are never updated. These dummy symbols only change the desired space bound by $O(|\Sigma| \log^2 n)$ bits, which can be absorbed into the $O(|\Sigma| \cdot w^2)$ term in \cref{thm:arithmetic-code-alg}.
\end{itemize}

With these assumptions in place, we can now present the construction used to prove Theorem \ref{thm:arithmetic-code-alg}.

\subsection{\texorpdfstring{Supporting $O(H^{(0)} / \log n)$ Updates}{Supporting O(H(0) / log n) Updates}}

When the empirical distribution is fixed, the data structure of \cref{thm:static-arithmetic-code-alg} suffices for \cref{thm:arithmetic-code-alg}. The main challenge is therefore to generalize \cref{thm:static-arithmetic-code-alg} to the case where the empirical distribution can change over time.

In the algorithm, we initially encode the array under $\widetilde{D}^{(0)}$, which is a modified version of $D^{(0)}$ (defined later). We will show that, after a small number of updates, the array can still be efficiently encoded under $\widetilde{D}^{(0)}$. Then, after processing some updates, we rebuild the data structure and re-encode the array under $\widetilde{D}^{(t)}$ (where $t$ is the time at which we perform the rebuild), which is a modified version of $D^{(t)}$.

In this section, we assume that the number of updates is only $O(H^{(0)} / \log n)$, in which case there is no need to rebuild, and we show how to support these $O(H^{(0)} / \log n)$ updates while keeping the data structure space-efficient.

\begin{lemma}
    \label{lem:arith-no-rebuild}
    There exists a data structure that satisfies the requirements of \cref{thm:arithmetic-code-alg}, except that it only supports $c_{\text{rebuild}}\cdot H^{{(0)}}/\log n$ updates for some sufficiently small constant $c_{\text{rebuild}}$.
\end{lemma}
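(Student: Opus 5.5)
We build $\widetilde{D}^{(0)}$ over an extended alphabet $\Sigma'=\Sigma\cup P$, where $P$ is a set of placeholder symbols, and run \cref{thm:static-arithmetic-code-alg} with $\widetilde{D}^{(0)}$ for the whole window of $c_{\text{rebuild}}H^{(0)}/\log n$ updates. We also keep a small side table $\phi:\Sigma\to\Sigma'$ that says which symbol of $\Sigma'$ currently stands for each $\sigma\in\Sigma$; a query decodes the stored symbol and maps it back through $\phi^{-1}$.

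\textbf{Defining the distribution.} Reserve total mass $1/(\log n)^{c'}$ for placeholders and let $\widetilde{D}^{(0)}(\sigma)$ be proportional to $\max(f^{(0)}_\sigma, n/(\log n)^{c'})/n$ for $\sigma\in\Sigma$. For each level $k=0,1,\dots,O(\log n)$, create $N_k$ placeholders of mass about $2^k/n$ each. Choose $N_k$ so that $N_k2^k/n\le 1/(\log n)^{c'+1}$, i.e.\ $N_k\approx n/(2^k(\log n)^{c'+1})$.

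\textbf{Handling an update.} An update changes one frequency by $\pm1$. If $f_\sigma$ reaches $2^{k+1}$ while $\phi(\sigma)$ has mass below $2^k/n$, we reassign $\sigma$ to a fresh placeholder of level $k$. All copies of $\sigma$ are then re-encoded gradually, in a deamortized way: we rewrite $O(1)$ occurrences per update, and during the transition we keep both codes valid by tracking positions in $\phi$ through a per-symbol pointer. To find the occurrences of $\sigma$, we either maintain per-symbol linked lists of positions, costing $O(\log n)$ bits per entry, which is too much, or accept symbols whose frequency grows only by a constant factor.

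A cleaner alternative avoids re-encoding entirely. Since each update touches one position, a new occurrence of $\sigma$ is written under the \emph{current} code $\phi(\sigma)$, so one symbol may be stored under several codes. Decoding maps every placeholder back to its owner $\sigma$ through a table of size $O(|P|\log|\Sigma|)$, which is within $|\Sigma|w^2$ since $|P|=O(n/(\log n)^{c'})=O(|\Sigma|)$ under the WLOG bound $|\Sigma|=\Omega(n/(\log n)^{3c})$. An occurrence stored under a level-$k$ placeholder is paid about $\log(n/2^k)$ bits, while it was written when $f_\sigma\ge 2^k$. So its cost exceeds $\log(n/f_\sigma)$ by at most $O(1)$ unless $f_\sigma$ later dropped, and we bound the drop globally below.

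\textbf{Space accounting (the main obstacle).} Compare the total encoded cost $\sum_i -\log\widetilde{D}^{(0)}(\text{code}_i)$ with $H^{(t)}$. There are three error sources. The first is the mass normalization, costing $O(n/(\log n)^{c'})$. The second is symbols whose frequency rose: placeholders cap the loss at $O(1)$ bits per affected occurrence, and only $O(H^{(0)}/\log n)$ occurrences change, giving $O(H^{(0)}/\log n)$. The third is symbols whose frequency fell, which makes their stored code cheaper than $\log(n/f_\sigma)$, not more expensive. We must also check $H^{(t)}\ge(1-O(c_{\text{rebuild}}))H^{(0)}-O(n/\mathrm{polylog})$, since each update changes $H$ by $O(\log n)$, so $O(H^{(0)}/\log n)$ updates shift $H$ by only $O(c_{\text{rebuild}}H^{(0)})$. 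This bounds the error terms by $O(H^{(t)}/\log n)$, hence within the $H\log\log n/\log n$ term. The step I expect to be hardest is guaranteeing that placeholders never run out: level $k$ is consumed only by symbols gaining about $2^{k}$ occurrences, so at most $O(H^{(0)}/(2^k\log n))\le N_k$ are used if $c_{\text{rebuild}}$ is small. This requires relating the number of updates to $n/(\log n)^{c'+1}$ via $H^{(0)}\le n\log n$, and choosing $c'$ accordingly; this is what forces the bound of $c_{\text{rebuild}}H^{(0)}/\log n$ updates.
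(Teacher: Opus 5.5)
\textbf{Gap: the placeholders run out.} Your construction breaks at exactly the step you flagged, and no choice of $c'$ or of the constant $c_{\text{rebuild}}$ repairs it. You give each level-$k$ placeholder the full mass $2^k/n$, so that it costs only $O(1)$ extra bits. The mass budget then caps the supply at $N_k\approx n/(2^k(\log n)^{c'+1})$. The demand, however, is driven by the number of updates $T=c_{\text{rebuild}}H^{(0)}/\log n$, which is $\Theta(c_{\text{rebuild}}n)$ when $H^{(0)}=\Theta(n\log n)$.

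For instance, start with $n/2^{k-1}$ symbols of frequency $2^{k-1}$ each, where $2^k$ is a suitable polylogarithmic value (so $H^{(0)}=\Theta(n\log n)$). Spend the updates raising $\Theta(c_{\text{rebuild}}n/2^k)$ of these symbols to frequency $2^{k+1}$, taking entries evenly from the rest. Every frequency changes by at most a constant factor, so the entropy stays within $O(n)$ of $H^{(0)}$. Yet you receive $\Theta(c_{\text{rebuild}}n/2^k)$ level-$k$ requests, a factor $\Theta(c_{\text{rebuild}}(\log n)^{c'+1})$ more than $N_k$. Your inequality $O(H^{(0)}/(2^k\log n))\le N_k$ holds only when $H^{(0)}=O(n/(\log n)^{c'})$. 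More fundamentally, serving every level with full-mass placeholders would require reserving mass $\Omega(T/n)=\Omega(c_{\text{rebuild}})$ at each of $\Theta(\log n)$ levels, i.e., total mass $\Omega(c_{\text{rebuild}}\log n)>1$.

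\textbf{The missing idea.} Do not insist on $O(1)$ overhead per update. An overhead of $O(\log\log n)$ bits per updated entry is affordable: over $c_{\text{rebuild}}H^{(0)}/\log n$ updates it totals $O(H^{(0)}\log\log n/\log n)$, exactly the allowed redundancy. The paper therefore discounts each placeholder by a polylogarithmic factor. Level $k$, meant for frequencies near $n/2^k$, has $\min(|\Sigma|,2^{k+1})$ placeholders of mass roughly $2^{-k}/(\log n)^{10c+1}$ each, so every level has total mass only $1/\polylog n$. Yet this supply covers every symbol that could ever ask: a requester satisfies $f^{(0)}_\sigma+g_\sigma>n/2^k$, where $g_\sigma$ is the number of updates writing $\sigma$, and $\sum_\sigma(f^{(0)}_\sigma+g_\sigma)\le 2n$ within $n$ updates. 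An updated entry then costs $-\log D^{(t)}(\sigma)+O(\log\log n)$ bits.

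\textbf{Two smaller issues.} First, your accounting ignores untouched occurrences of a symbol whose frequency rises. The lone original copy of a symbol growing from frequency $1$ to $\Theta(n)$ pays $\Theta(\log n)$ extra bits. Likewise, a placeholder-coded occurrence becomes overpriced when $f_\sigma$ later rises, not when it drops. The paper controls this with the potential $\Delta^{(t)}$ and the bound $f\log(1+1/f)<\log e$ on the increase caused by one increment of $f_\sigma$. Second, the floor $\max(f^{(0)}_\sigma,n/(\log n)^{c'})$ as written gives every symbol mass at least $(\log n)^{-c'}$ before normalization, which sums far beyond $1$. Presumably you meant a floor of order $(\log n)^{-c'}$ in frequency units.
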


To get some intuition for this case, consider the straightforward approach where we simply use \cref{thm:static-arithmetic-code-alg} to encode the symbols under the initial empirical distribution $D^{(0)}$. This approach works if the empirical distribution is fixed, but quickly runs into problems if the frequency of some symbol increases by too much: In $t$ updates, the frequency of a symbol $\sigma$ could increase from $1$ to $t$, which means that each occurrence of $\sigma$ costs $\log t$ bits more than optimal if encoded under $D^{(0)}$ instead of $D^{(t)}$. If, for example, $H^{(0)}=\Theta(n\log n)$ and $t=\Omega(H^{(0)}/\log n)$, then our encoding length becomes longer than $H^{(t)}$ by $\Theta(n\log n)$, which is unacceptable.

To handle this problem, we encode the array under a different distribution $\widetilde{D}^{(0)}$, which is obtained from $D^{(0)}$ by adding placeholders. These placeholders are used to encode symbols that were rare in $D^{(0)}$, but became much more frequent after updates.

\paragraph{Introducing placeholder distributions.} Given a distribution $D$, its \defn{placeholder distribution} $\widetilde{D}$ is defined as follows:
\begin{itemize}
    \item For each symbol $\sigma\in \Sigma$, we have that $\widetilde{D}(\sigma)=D(\sigma)\cdot (1-(\log n)^{-10c})$.
    \item For $0\le k\le \lfloor\log n\rfloor+1$, we add $\min(|\Sigma|,2^{k+1})$ extra symbols $\tau_{k,1},\tau_{k,2},\dots$, each having probability
    \begin{align*}
        \frac 1{\min(|\Sigma|,2^{k+1})\cdot (\log n)^{10c}\cdot(\lfloor\log n\rfloor+2)}
    \end{align*}
    in $\widetilde{D}$. These are called the \defn{level-$k$ placeholders}.

    We can verify that $\widetilde{D}$ is a proper distribution: For each $0\le k\le \lfloor\log n\rfloor+1$, the total probability of level-$k$ placeholders is equal to $1/((\log n)^{10c}\cdot(\lfloor\log n\rfloor+2))$. Since there are $\lfloor\log n\rfloor+2$ levels, the total probability of all placeholders is $1/(\log n)^{10c}$. As for the original symbols, we have $\sum_{\sigma\in \Sigma}\widetilde{D}(\sigma)=\sum_{\sigma\in \Sigma}{D}(\sigma)\cdot (1-(\log n)^{-10c})=(1-(\log n)^{-10c})$, so the total probability sums to $1$.
\end{itemize}

Intuitively, a level-$k$ placeholder is used to represent a symbol that appears with probability $2^{-k}$.

\paragraph{Handling updates using placeholders.} We encode the array under the placeholder distribution $\widetilde{D}^{(0)}$ of $D^{(0)}$ (until the first rebuild), where the placeholders are assigned to symbols in $\Sigma$, and each symbol in $\Sigma$ is assigned at most one placeholder of each level. This assignment is initially empty, and is determined on the fly. Initially, each entry $a_i=\sigma$ is encoded using the original symbol $\sigma$. When we perform an update, the modified entry will be encoded using a placeholder.

More specifically, the updates are handled as follows: Suppose that we are currently performing the $t$-th update, and assume that the update changes the $i$-th entry to $\sigma$. We use $O(|\Sigma|\cdot \log n)$ bits of extra space to record the true frequencies $f_\sigma^{(t)}$. Let $k$ be the smallest integer such that $2^{-k}< D^{(t)}(\sigma)=f_{\sigma}^{(t)}/n$. We encode the $i$-th entry using the level-$k$ placeholder assigned to $\sigma$. If $\sigma$ does not own one, then we assign the smallest unassigned level-$k$ placeholder to $\sigma$. We show in the following that such a placeholder must exist.

\paragraph{Analysis.} First, we show that when the number of updates is small, we will not run out of placeholders.
\begin{claim}
    \label{clm:sufficient_placeholders}
    Within the first $n$ updates, for any $k$, the number of symbols $\sigma\in \Sigma$ that have requested a level-$k$ placeholder does not exceed $2^{k+1}$.
\end{claim}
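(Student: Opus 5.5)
The plan is a counting argument on the symbols that request a level-$k$ placeholder. When the $t$-th update writes $\sigma$ into some entry and $\sigma$ requests a level-$k$ placeholder, $k$ is the smallest integer with $2^{-k}< f^{(t)}_\sigma/n$. By minimality (for $k\ge 1$), $2^{-(k-1)}\ge f^{(t)}_\sigma/n$. The only consequence I need is the strict inequality
\[
f^{(t)}_\sigma > n\,2^{-k},
\]
which holds at the moment of the request. This is the whole of what a request tells us: at that time, $\sigma$ has more than $n2^{-k}$ occurrences in the array.

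I would split the requesting symbols into two groups, according to whether they already had these occurrences at time $0$ or acquired them through updates.

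\textbf{Symbols with $f^{(0)}_\sigma \ge n2^{-k-1}$.} The initial frequencies sum to $n$, so at most $n/(n2^{-k-1})=2^{k+1}$ symbols satisfy this.

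\textbf{Symbols with $f^{(0)}_\sigma < n2^{-k-1}$.} Such a symbol went from fewer than $n2^{-k-1}$ occurrences to more than $n2^{-k}$. So at least $n2^{-k-1}$ of the first $n$ updates wrote $\sigma$. Each update writes exactly one symbol, so at most $n/(n2^{-k-1})=2^{k+1}$ symbols fall here.

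Adding the two groups gives $2^{k+2}$, which is a factor of $2$ off the claimed bound. Tightening this is the main obstacle. My first attempt is a joint accounting. For each requesting symbol, charge it $n2^{-k}$ units drawn from its initial occurrences plus the updates that wrote it. At time $t$ we have $f^{(t)}_\sigma \le f^{(0)}_\sigma + u_\sigma$, where $u_\sigma$ counts the updates up to $t$ that wrote $\sigma$. Hence every requester satisfies
\[
f^{(0)}_\sigma + u_\sigma > n2^{-k}.
\]
Summing over all requesters $\sigma$,
\[
\sum_\sigma\bigl(f^{(0)}_\sigma+u_\sigma\bigr)\le n+n = 2n,
\]
so the number of requesters is less than $2n/(n2^{-k})=2^{k+1}$. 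This gives the claim directly and makes the case split unnecessary.

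Two edge cases remain to check. For small $k$ (for example $k=0$), the condition $2^{-k}<f/n\le 1$ is impossible, so no symbol ever requests a level-$0$ placeholder. For the cap $\min(|\Sigma|,2^{k+1})$ on the number of level-$k$ placeholders, the bound on requesters is trivially at most $|\Sigma|$ as well. So the number of available level-$k$ placeholders always suffices.
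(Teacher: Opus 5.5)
Your joint-accounting argument is correct and is the paper's proof: each requesting symbol satisfies $f^{(0)}_\sigma + g_\sigma > n2^{-k}$, and summing over requesters against the total budget $\sum_\sigma (f^{(0)}_\sigma + g_\sigma) \le 2n$ gives strictly fewer than $2^{k+1}$ requesters. Your preliminary case split, which only gives $2^{k+2}$, is superseded by this argument and can be dropped.
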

\begin{proof}
    Let $g_\sigma$ denote the total number of updates that change some symbol to $\sigma$, among the first $n$ updates. We have that $\sum_{\sigma\in \Sigma} g_{\sigma}=n$.
    
    When a symbol $\sigma$ requests a level-$k$ placeholder at time $t$, we have that $2^{-k}<f^{(t)}_{\sigma}/n$. Note that $f^{(t)}_{\sigma}\le f^{(0)}_{\sigma}+g_{\sigma}$ by definition of $g_{\sigma}$, which implies that $f^{(0)}_{\sigma}+g_\sigma>n/2^k$. Letting $T\subseteq \Sigma$ denote the set of symbols that have ever requested a level-$k$ placeholder, we have that
    \begin{align*}
        |T|\cdot n/2^k<\sum_{\sigma\in T}f^{(0)}_{\sigma}+g_\sigma\le \sum_{\sigma\in \Sigma}f^{(0)}_{\sigma}+g_\sigma\le 2n.
    \end{align*}
    Therefore, $|T|<2^{k+1}$.
\end{proof}

Next, we show that the encoding under the placeholder distribution is not too costly compared to the optimal encoding. Let $\widetilde{a}_i^{(t)}$ denote the encoding of the $i$-th entry under $\widetilde{D}^{(0)}$, as described before (that is, $\widetilde{a}_i^{(t)}$ is either $a_i^{(t)}$ or a placeholder assigned to $a_i^{(t)}$). Let $\widetilde{H}^{(t)}$ denote the cost of storing the array under the placeholder distribution $\widetilde{D}^{(0)}$, defined as
\begin{align*}
    \widetilde{H}^{(t)}=\sum_{i=1}^{n}-\log \widetilde{D}^{(0)}(\widetilde{a}_i^{(t)}).
\end{align*}
Note that before the first rebuild, $\widetilde{H}^{(t)}$ is always defined with respect to $\widetilde{D}^{(0)}$, not $D^{(t)}$.

Let $\Delta^{(t)}$ be a measure of difference between our encoding and the optimal encoding, defined as
\begin{align*}
    \Delta^{(t)}=\sum_{i=1}^{n}\max\BK*{0,-\log \widetilde{D}^{(0)}(\widetilde{a}_i^{(t)})+\log D^{(t)}(a_i^{(t)})}.
\end{align*}
Note that $\Delta^{(t)}$ is not the same as $\widetilde{H}^{(t)}-H^{(t)}$ (because we do not allow for negative summands in $\Delta^{(t)}$), but it does satisfy $\Delta^{(t)} \ge \widetilde{H}^{(t)}-H^{(t)}$. The reason that we define $\Delta^{(t)}$ in this way is so that, for $t \in O(n)$, we always have $\Delta^{(t)}-\Delta^{(t-1)}=O(\log \log n)$ (we will prove this in a moment). In comparison, if we simply define $\Delta^{(t)}$ as $\widetilde{H}^{(t)}-{H}^{(t)}$, then $\Delta^{(t)}-\Delta^{(t-1)}$ could be as large as $O(\log n)$.

The following claim establishes an upper bound on how large $\Delta^{(t)}$ can be.

\begin{claim}
    \label{clm:delta_decrement_slow}
    We have that $\Delta^{(t)}=O(n/(\log n)^{10c}+t\log \log n)$.
\end{claim}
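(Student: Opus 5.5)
The plan is to bound $\Delta^{(0)}$ directly and then show that each update increases $\Delta$ by only $O(\log\log n)$. Summing over $t$ updates then gives $\Delta^{(t)}\le \Delta^{(0)}+t\cdot O(\log\log n)$. Throughout, $t$ ranges over updates performed before any rebuild, which is at most $c_{\text{rebuild}}H^{(0)}/\log n = O(n)$. I will also assume $t\le n$, so that \cref{clm:sufficient_placeholders} guarantees every requested placeholder exists.

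\textbf{Base case.} At time $0$ every entry is encoded by its original symbol, so $\widetilde{a}_i^{(0)}=a_i^{(0)}$ and $\widetilde{D}^{(0)}(a_i^{(0)})=D^{(0)}(a_i^{(0)})(1-(\log n)^{-10c})$. Each summand of $\Delta^{(0)}$ is therefore $-\log(1-(\log n)^{-10c})=O((\log n)^{-10c})$. Summing over the $n$ entries gives $\Delta^{(0)}=O(n/(\log n)^{10c})$.

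\textbf{One update.} Suppose the $t$-th update changes $a_i$ from $\sigma'$ to $\sigma\neq\sigma'$. The summands of $\Delta$ change in only three places.

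\emph{Position $i$.} Its old summand is removed; it was nonnegative, so removing it cannot increase $\Delta$. Its new summand uses the level-$k$ placeholder with $2^{-k}<f^{(t)}_\sigma/n\le 2^{-k+1}$. Such a $k\le\lfloor\log n\rfloor+1$ exists because $f^{(t)}_\sigma\ge1$. The placeholder's probability is at least $1/(2^{k+1}(\log n)^{10c}(\lfloor\log n\rfloor+2))$; the $\min$ with $|\Sigma|$ only makes it larger. Its cost is therefore at most $k+1+O(c\log\log n)\le -\log D^{(t)}(\sigma)+O(\log\log n)$, so the new summand is $O(\log\log n)$.

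\emph{Entries currently equal to $\sigma$.} Whether they are encoded as $\sigma$ or as one of $\sigma$'s placeholders, their encoding is unchanged. Their $D^{(t)}$-probability increases, so each $\max\{0,\cdot\}$ summand can only decrease.

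\emph{Entries currently equal to $\sigma'$.} There are $f-1$ of them, where $f=f^{(t-1)}_{\sigma'}$; if $f=1$ there is nothing to bound. Their encodings are again unchanged, including those encoded by earlier-assigned placeholders of $\sigma'$. Each term $\log D^{(t)}(\sigma')$ decreases by exactly $\log(f/(f-1))$, so each $\max\{0,\cdot\}$ summand grows by at most that amount. The total increase is at most $(f-1)\log\bigl(f/(f-1)\bigr)\le\log e=O(1)$.

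Adding the three contributions gives $\Delta^{(t)}-\Delta^{(t-1)}=O(\log\log n)$, and telescoping yields the claim.

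\textbf{Main obstacle.} The delicate step is the last one: the frequency decrease of $\sigma'$ affects many entries simultaneously. It only works because the per-entry increase $\log(f/(f-1))$ telescopes to $O(1)$ in total. The clipping at $0$ in the definition of $\Delta^{(t)}$ is what stops the symmetric effect on $\sigma$ from forcing a matching lower-bound argument. It is also where the assumption $t\le n$ enters, through \cref{clm:sufficient_placeholders}, to guarantee the required placeholder exists.
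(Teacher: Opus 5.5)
Your base case and your treatment of position $i$ are correct and match the paper, but the step for the remaining entries has the sign backwards.

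The summand is $\max\{0,\,-\log\widetilde{D}^{(0)}(\widetilde{a}_{i'})+\log D^{(t)}(a_{i'})\}$, i.e.\ your encoding cost minus the optimal cost. It is \emph{increasing} in $D^{(t)}(a_{i'})$. When a symbol becomes more frequent, its optimal cost $-\log D^{(t)}$ drops while the fixed encoding cost does not, so the gap widens.

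Hence your assertion that the summands of the entries equal to $\sigma$ ``can only decrease'' is false: these are exactly the summands that grow. The summands of the entries equal to $\sigma'$, by contrast, can only shrink, so your bound for them is true but vacuous. As written, the argument never controls the actual source of increase. Your ``main obstacle'' paragraph correspondingly places the difficulty on the wrong symbol.

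The repair is to run your telescoping computation on the correct set. Let $f=f^{(t-1)}_\sigma$. Each of the $f$ entries $i'\neq i$ with $a_{i'}=\sigma$ has its inner quantity grow by exactly $\log\bigl((f+1)/f\bigr)$. By $\max\{0,x+d\}-\max\{0,x\}\le\max\{0,d\}$, their total increase is at most $f\log(1+1/f)<\log e$, and no other $i'\neq i$ contributes positively. This is precisely the paper's argument.

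Relatedly, the clipping at $0$ does its real work at position $i$, which you handled correctly. Without it, the old summand there could be of order $-\log n$ (e.g.\ if $i$ held a placeholder assigned when its symbol was much more frequent), and the per-update change could be $\Theta(\log n)$.
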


\begin{proof}
    We prove the claim by induction. Initially, $\Delta^{(0)}=O(n/(\log n)^{10c})$: For any symbol $\sigma\in \Sigma$, its probability $\widetilde{D}^{(0)}(\sigma)$ in the placeholder distribution is smaller than $D^{(0)}(\sigma)$ by a $1-(\log n)^{-10c}$ factor. Thus, the encoding length under $\widetilde{D}^{(0)}$ is longer than that under $D^{(0)}$ by $-\log(1-(\log n)^{-10c})=O((\log n)^{-10c})$ for every entry.

    Next, we show that $\Delta^{(t)}-\Delta^{(t-1)}=O(\log \log n)$. Suppose that the $t$-th update changes the $i$-th entry to $\sigma$. We upper bound $\Delta^{(t)}-\Delta^{(t-1)}$ by comparing the summands:
    \begin{itemize}
        \item \textbf{Comparing the $i$-th term.} We will argue that the $i$-th summand of $\Delta^{(t)}$ is at most $O(\log \log n)$ larger than the $i$-th summand in $\Delta^{(t - 1)}$. Recall that $\widetilde{a}_i^{(t)}$ is a level-$k$ placeholder, where $k$ is the smallest integer such that $2^{-k}<f_\sigma^{(t)}/n$ (that is, $2^{-(k-1)}\ge f_\sigma^{(t)}/n$). By definition, we have that
        \begin{align*}
            \widetilde{D}^{(0)}(\widetilde{a}_i^{(t)})\ge \frac{1}{\min(|
            \Sigma|,2^{k+1})\cdot (\log n)^{10c+1}\cdot(\lfloor\log n\rfloor+2)}.
        \end{align*}
        Therefore, the encoding length of $\widetilde{a}_i^{(t)}$ is at most
        \begin{align*}
            -\log \widetilde{D}^{(0)}(\widetilde{a}_i^{(t)})&\le \log (\min(|
            \Sigma|,2^{k+1})\cdot (\log n)^{10c+1}\cdot(\lfloor\log n\rfloor+2)) \\
            &\le -\log(f_\sigma^{(t)}/(n\cdot \poly \log n))= -\log D^{(t)}(\sigma)+O(\log \log n).
        \end{align*}
        Thus the $i$-th summand in $\Delta^{(t)}$ is at most $O(\log \log n)$. In comparison, the $i$-th summand in $\Delta^{(t-1)}$ is at least $0$, so the increase in the $i$-th summand between $\Delta^{(t-1)}$ and $\Delta^{(t)}$ is at most $O(\log \log n)$.
        \item \textbf{Comparing the other terms.} We show that the other terms in the sum collectively increase by $O(1)$ when switching from $\Delta^{(t-1)}$ to $\Delta^{(t)}$.
        
        For the terms excluding $i$, their encoding length under the placeholder distribution does not change at time $t$, so any increase is due to the change in the empirical distribution $D^{(t)}$. That is, the total increase to $\Delta^{(t)}$ from these terms can be bounded as
        \begin{align*}
            &\sum_{i'\ne i}\max\BK*{0,-\log \widetilde{D}^{(0)}(\widetilde{a}_{i'}^{(t)})+\log D^{(t)}(a_{i'}^{(t)})}-\max\BK*{0,-\log \widetilde{D}^{(0)}(\widetilde{a}_{i'}^{(t-1)})+\log D^{(t-1)}(a_{i'}^{(t-1)})} \\
            \le{}&\sum_{i'\ne i}\max\BK*{0,\log D^{(t)}(a_{i'}^{(t)})-\log D^{(t-1)}(a_{i'}^{(t)})}.\tag{$\widetilde{a}_{i'}^{(t-1)}=\widetilde{a}_{i'}^{(t)}$ for $i'\ne i$}
        \end{align*}
        We assume that $a_{i}^{(t-1)}\ne \sigma$ (i.e., the $t$-th update is not trivial), since otherwise the $t$-th update does not change $D^{(t)}$. 
        
        In this case, if $D^{(t)}(a_{i'}^{(t)})$ is larger than $D^{(t-1)}(a_{i'}^{(t)})$, then we must have $a_{i'}^{(t)}=\sigma$. Therefore, the above summation is at most
        \begin{align*}
            &\sum_{i'\ne i,a_{i'}^{(t)}=\sigma}\log D^{(t)}(a_{i'}^{(t)})-\log D^{(t-1)}(a_{i'}^{(t)}) \\
            ={}&f_\sigma^{(t-1)}\cdot (\log D^{(t)}(\sigma)-\log D^{(t-1)}(\sigma)) \\
            ={}&f_\sigma^{(t-1)}\cdot (\log (f_\sigma^{(t-1)}+1)-\log f_\sigma^{(t-1)})\tag{by definition of $D^{(t)}$} \\
            <{}&f_\sigma^{(t-1)}\cdot \frac{\log e}{f_\sigma^{(t-1)}}=O(1).\tag{$\ln(1+x)<x$}
        \end{align*}
    \end{itemize}
    This proves that $\Delta^{(t)} - \Delta^{(t-1)} = O(\log \log n)$, which implies the claim.
\end{proof}

We now conclude this section.

\begin{proof}[Proof of \cref{lem:arith-no-rebuild}]
    We encode the array under the initial placeholder distribution $\widetilde{D}^{(0)}$ using \cref{thm:static-arithmetic-code-alg}, and maintain the correspondence between placeholders and symbols in $\Sigma$ using a table of $O(|\Sigma|\cdot \log^2 n)$ bits.
    
    \cref{clm:delta_decrement_slow} implies that, at time $t\le c_{\text{rebuild}}\cdot H^{(0)}/\log n$, our encoding length is at most
    \begin{align*}
        H^{(t)}+O(n/(\log n)^{10c}+H^{(0)}\log \log n/\log n)
    \end{align*}
    bits. Each update changes the entropy $H^{(t)}$ by at most $O(\log n)$ (the proof is the same as that of \cref{clm:delta_decrement_slow}). Thus, when $t\le c_{\text{rebuild}}\cdot H^{(0)}/\log n$ for some sufficiently small constant $c_{\text{rebuild}}$, we have $H^{(t)}\in H^{(0)}\cdot [0.9,1.1]$. It follows that our encoding length is at most
    \begin{align*}
        H^{(t)}+O(n/(\log n)^{10c}+H^{(t)}\log \log n/\log n)
    \end{align*}
    bits. Feeding this into \cref{thm:static-arithmetic-code-alg} gives us a data structure that uses
    \begin{align*}
        H^{(t)}+O(H^{(t)}(\log\log n)/\log n+n/\log^c n+|\Sigma|\cdot \log^2 n)
    \end{align*}
    bits of space whenever $t\le c_{\text{rebuild}}\cdot H^{(0)}/\log n$. 
    
    Note that while the actual word size may be larger than $\log n$, here we only invoke \cref{thm:static-arithmetic-code-alg} with word size $w=\Theta(\log n)$, which guarantees that our lookup tables have size $n^{1-\Omega(1)}$. Also note that the space usage has an additive $O(|\Sigma|\cdot \log ^2n)$ term instead of $O(|\Sigma|\cdot w)$ as in \cref{thm:static-arithmetic-code-alg}, which is due to the fact that the alphabet size of the placeholder distribution $\widetilde{D}^{(0)}$ is  $O(|\Sigma|\cdot \log n)$.
\end{proof}

\subsection{The Amortized Algorithm}

In this section, we generalize \cref{lem:arith-no-rebuild} to allow for any number of updates. We achieve this by rebuilding the array at certain points in time, where we re-encode the entire array under the most recent placeholder distribution $\widetilde{D}^{(t)}$ ($t$ is the time of the rebuild). We will do so in a way that rebuilds cost constant amortized time per update. Later, we will show how to deamortize the rebuilds.

\begin{lemma}
    \label{lem:arith-amortized}
    There exists a data structure that satisfies the requirements of \cref{thm:arithmetic-code-alg}, except that the updates are handled in amortized constant time.
\end{lemma}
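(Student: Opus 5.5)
We will run the data structure of \cref{lem:arith-no-rebuild} in \emph{epochs}. An epoch starts at a time $t_j$ with a full rebuild. We compute the current empirical distribution $D^{(t_j)}$ and its placeholder distribution $\widetilde{D}^{(t_j)}$. We then reset the placeholder assignment to empty and re-encode the whole array under $\widetilde{D}^{(t_j)}$ using \cref{thm:static-arithmetic-code-alg} with word size $\Theta(\log n)$. The epoch lasts for exactly $c_{\text{rebuild}}\cdot H^{(t_j)}/\log n$ updates, after which the next rebuild occurs at $t_{j+1}$. Within an epoch, \cref{lem:arith-no-rebuild} applied with time $t_j$ as its ``time $0$'' gives the space bound
\[
H^{(t)}+O\bigl(H^{(t)}(\log\log n)/\log n+n/(\log n)^c+|\Sigma|\log^2 n\bigr)
\]
at every moment. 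Its proof only uses the initial state and the number of elapsed updates, so it transfers verbatim. If an epoch would have to exceed $n$ updates, we cap its length at $n$ so that \cref{clm:sufficient_placeholders} still applies; this cap is harmless because $H^{(t_j)}/\log n = O(n)$ anyway. The lookup tables depend on $\widetilde{D}^{(t_j)}$ and must be rebuilt as well. They have size $n^{1-\Omega(1)}+O(|\Sigma|\log n\cdot\log|\Sigma|)$, so they fit inside the allowed $O(|\Sigma| w^2)$ term. They also add sublinear time to each rebuild.

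The time analysis is an amortization. A rebuild can be carried out by scanning the array block by block. We decode each block of the old structure with the old tables, translating placeholders back to true symbols through the assignment table, and then feed the symbols into the construction under the new distribution. Decoding a block and appending it to the new structure should take $O(1)$ time per block, not per entry, via the lookup tables. The number of blocks is $O(H_{\text{old}}/\log n + n/(\log n)^{3c})$ for the old encoding and similarly for the new one. By the space bound, $H_{\text{old}}=O(H^{(t_{j+1})})$. We therefore aim for a rebuild time of
\[
O\bigl(H^{(t_{j+1})}/\log n+n/(\log n)^{3c}+|\Sigma|\log n+n^{1-\Omega(1)}\bigr).
\]
This is charged against the $c_{\text{rebuild}}H^{(t_{j+1})}/\log n$ updates of the new epoch, or equivalently the previous one, since $H^{(t_{j+1})}\in[0.9,1.1]\cdot H^{(t_j)}$.

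The extra additive terms are handled by the WLOG assumptions. Since $H=\Omega(|\Sigma|\log^2 n)$ and $H=\omega(n/(\log n)^{3c})$, we have $H/\log n=\Omega(|\Sigma|\log n)$, and this dominates the term $n/(\log n)^{3c}$. This gives $O(1)$ amortized time per update. Recomputing $f_\sigma$ and $\widetilde{D}$ costs $O(|\Sigma|\log n)$, which is absorbed the same way.

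The main obstacle is justifying that rebuilding costs $O(1)$ per block rather than per entry. Entries can number $n\gg H/\log n$ when the entropy is low, so a per-entry rebuild would be too slow. Re-encoding under a new distribution changes interval boundaries and codewords, and blocks under the new distribution need not align with old blocks. My plan is a lookup table on pairs: each entry takes an old interval encoding of $\le\epsilon w/2$ bits together with the relevant small parameters, such as distribution indices and placeholder mappings restricted to symbols of the block. These parameters are compressed because a low-entropy interval consists of frequent symbols. The table outputs the translated new encoding, and blocks are then merged or split in $O(1)$ operations. The delicate point is that symbol identities are not bounded by $\epsilon w$ bits. I would handle this by noting that symbols with $D$-entropy below $\epsilon w/2$ number at most $2^{\epsilon w/2}$, so a re-indexing table for them fits. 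High-entropy symbols each occupy their own interval and are processed individually, at cost $O(1)$ per $\Omega(\epsilon\log n)$ bits of entropy.
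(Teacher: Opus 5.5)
Your proposal is correct and follows essentially the same route as the paper. You schedule rebuilds every $\lceil c_{\text{rebuild}}H^{(t_{k-1})}/\log n\rceil$ updates, use \cref{clm:delta_decrement_slow} together with the $O(\log n)$ per-update change of $H^{(t)}$ to bound the old and new block counts by $O(H/\log n)$, re-encode each interval in $O(1)$ time with a lookup table (handling singletons individually), and then merge intervals back into blocks. The one simplification is that the paper recomputes the translation table at every rebuild and indexes it purely by old interval encodings, of which there are only $O(|\Sigma|\log n)+2^{(1/2)\epsilon\log n}(\log n)^{3c}$, with the current placeholder assignment baked in; so your ``pairs'' carrying distribution indices or placeholder mappings are unnecessary.
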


\paragraph{Criterion for initiating a rebuild.} Let $c_{\text{rebuild}}>0$ be the constant in \cref{lem:arith-no-rebuild}. Define $t_0=0$. For $k\ge 1$, the $k$-th rebuild is scheduled at time $t_k=t_{k-1}+\lceil c_{\text{rebuild}}\cdot H^{(t_{k-1})}/\log n\rceil$. That is, the time for the $k$-th rebuild can be determined right after performing the $(k-1)$-th rebuild.

To know when to rebuild, we maintain the total entropy $H^{(t)}$, and update $H^{(t)}$ whenever the frequency $f_\sigma^{(t)}$ of some symbol changes (recall that we already store the frequencies $f_\sigma^{(t)}$ in the data structure for knowing which level of placeholder to use). Note that $H^{(t)}=O(n\cdot \log n)$, so when $c_{\text{rebuild}}$ is a sufficiently small constant, we will have that $t_k-t_{k-1}\le n/4$ for any $k$, in which case we can apply \cref{clm:sufficient_placeholders} to show that we never run out of placeholders.

As discussed in the proof of \cref{lem:arith-no-rebuild}, each update changes $H^{(t)}$ by at most $O(\log n)$, so when $c_{\text{rebuild}}$ is a sufficiently small constant, for any $k$, we have $H^{(t)}\in [0.9,1.1]H^{(t_{k-1})}$ while $t\in [t_{k-1},t_k]$.

We will carefully implement the rebuild, so that the $k$-th rebuild takes $O(H^{(t_{k-1})}/\log n)$ time to complete. This is possible since the number of blocks is $O(H^{(t_{k-1})}/\log n)$ (because of $H^{(t)}\in [0.9,1.1]H^{(t_{k-1})}$ and \cref{clm:delta_decrement_slow}), and we can rebuild one block in $O(1)$ time. Since the number of updates between the $(k-1)$-th and the $k$-th rebuild is $\Theta(H^{(t_{k-1})}/\log n)$, this guarantees that the amortized time cost of the rebuilds is $O(1)$.

Our definition of rebuild also ensures that the data structure is space-efficient: For the times $t\in [t_0,t_1)$ before the first rebuild, we can use \cref{clm:delta_decrement_slow} to show that
\begin{align*}
    \widetilde{H}^{(t)}-H^{(t)}&\le \Delta^{(t)} \\
    &\le O(n/(\log n)^{10c}+t_1\log \log n)\tag{for any $t\le t_1$} \\
    &=O(H^{(0)}\log \log n/\log n),
\end{align*}
so our encoding length is always efficient. The general case is the same.

\paragraph{Implementing a rebuild.} In this paragraph, we only describe the first rebuild, since the others are similar. When performing the first rebuild at time $t_1$, we will re-encode the array using $O(1)$ time per block. Before the rebuild, the number of blocks is $O(\widetilde{H}^{(t_1)}/\log n)$, which is $O({H}^{(0)}/\log n)$ due to \cref{clm:delta_decrement_slow} and the fact that $c_{\text{rebuild}}$ is small. Therefore, the time cost of the first rebuild is $O(H^{(0)}/\log n)$ instead of the na\"{\i}ve time bound of $O(n)$. The rebuild uses a lookup table of size $O(|\Sigma|\cdot \log^2 n)+2^{\epsilon \log n}$ (which is computed on the fly), and is performed as follows:
\begin{enumerate}
    \item First, we decompose the blocks in the array into intervals. Recall that, when encoding a block in \cref{thm:static-arithmetic-code-alg}, blocks are decomposed into intervals of cells, such that every interval either has total entropy at most $(1/2)\epsilon\cdot \log n$ (we use $w=\log n$ when invoking \cref{thm:static-arithmetic-code-alg}), or is simply a singleton.

    We now perform the exact same decomposition to obtain an array of intervals. It is shown in \cref{clm:interval_number} that the number of intervals in a block is $O(1)$, so the total number of intervals is $O(H^{(0)}/\log n)$.

    \textbf{Lookup tables:} For this, we only need the lookup tables in \cref{thm:static-arithmetic-code-alg} for decoding the block contents. 
    \item Next, we re-encode each interval under the new placeholder distribution $\widetilde{D}^{(t_1)}$, where every old interval under $\widetilde{D}^{(0)}$ is converted to a sequence of new intervals under $\widetilde{D}^{(t_1)}$. Note that one old interval may be converted to a sequence of multiple new intervals, because the total entropy of the interval may increase. However, using the same argument as \cref{clm:interval_number}, we have that each old interval is converted to at most $O(H'/\log n+1)$ intervals, where $H'$ is the total $\widetilde{D}^{(t_1)}$-entropy of the interval. Thus, the total number of new intervals is at most
    \begin{align*}
        O(\text{\# of old intervals})+O(H^{(t_1)}/\log n)=O(H^{(0)}/\log n).
    \end{align*}
    
    \textbf{Lookup tables:} Note that the number of possible intervals under $\widetilde{D}^{(0)}$ is $O(|\Sigma|\cdot \log n)$ (the singleton intervals) plus $2^{(1/2)\epsilon\cdot \log n}\cdot (\log n)^{3c}$ (we have at most $2^{(1/2)\epsilon\cdot \log n}$ possible intervals of each length). Therefore, to re-encode the intervals, we only have to compute a lookup table of size $O(|\Sigma|\cdot \log^2 n)+2^{\epsilon \log n}$ bits.
    \item Finally, we merge intervals into blocks. That is, we first think of each interval as a block, and then merge adjacent blocks if their total entropy does not exceed $O(\log n)$. 
    
    \textbf{Lookup tables:} For merging blocks, we can use the same lookup tables as the ones in \cref{thm:static-arithmetic-code-alg} (when invoked for $\widetilde{D}^{(t_1)}$).
\end{enumerate}

Aside from rebuilds, the other parts of our amortized algorithm are the same as \cref{lem:arith-no-rebuild}. This concludes the proof of \cref{lem:arith-amortized}.

\subsection{Deamortization}

In this section, we show how to improve the update time to worst-case constant. We only consider what happens \emph{between} the first and second rebuilds, since the other cases are similar.

\paragraph{Intuition.} Suppose that the first rebuild happens right after time $t_1$, and the second right after time $t_2$. In order to deamortize, instead of fully executing the first rebuild at time $t_1$, we slowly run the first rebuild in the background until time $t_2$, in a way that achieves the following guarantees:
\begin{itemize}
    \item we use $O(1)$ time per operation to perform the rebuild;
    \item the first rebuild is fully executed before the second rebuild is initiated;
    \item at any point in time, the data structure is space-efficient.
\end{itemize}

\paragraph{Notation.} To avoid confusion about which placeholder distribution we are discussing, we define the following notation: Let $\widetilde{a}^{(t,1)}_i$ $(t\in [t_1,t_2])$ denote the symbol used to encode $a_i^{(t)}$ under the placeholder distribution $\widetilde{D}^{(t_1)}$ (which is exactly the same as in the amortized algorithm), and let $\widetilde{H}^{(t,1)}$ $(t\in [t_1,t_2])$ denote the encoding length of the symbols $\widetilde{a}^{(t,1)}_i$ under $\widetilde{D}^{(t_1)}$.
Similarly define $\widetilde{a}^{(t,0)}_i$: $\widetilde{a}^{(t,0)}_i$ $(t\in [t_1,t_2])$ is the symbol that would be used to encode $a_i^{(t)}$ under $\widetilde{D}^{(0)}$, had the first rebuild never taken place.

\paragraph{Running rebuilds in the background.} Recall that in the amortized algorithm, the second rebuild was initiated when $(t_2-t_1)\ge c_{\text{rebuild}}\cdot H^{(t_1)}/\log n$. We use the same criterion in the new algorithm. However, the difference between the two algorithms is that, during $[t_1,t_2]$, not all symbols are encoded under $\widetilde{D}^{(t_1)}$: Some still use the old distribution $\widetilde{D}^{(0)}$.

We first compute the necessary lookup tables, then rebuild the array one chunk at a time. At any point in time, exactly one chunk is undergoing rebuild, while the other chunks are either entirely encoded under $\widetilde{D}^{(0)}$ or under $\widetilde{D}^{(t_1)}$.

Formally, the subroutine for rebuilding one chunk is as follows. Keep in mind that in the real algorithm, this subroutine is executed in the background, where it is given $O(1)$ time per update.

\begin{enumerate}
    \item Initialize a B-tree using \cref{lem:B-tree} that is used to temporarily store the updates to the current chunk during the chunk's rebuild. This takes $O(1)$ time. 
    \item Let $t'$ be the current time. We perform the rebuild which changes the encoding of every entry in the chunk from $\widetilde{a}_i^{(t',0)}$ to $\widetilde{a}_i^{(t',1)}$. The rebuild is done block by block.
    
    During this step, the updates are handled as follows: If the updated entry is not in the current chunk, it is encoded like before. Otherwise, we insert the update into the B-tree. When answering a query to the current chunk, we first check if the queried entry is in the B-tree. If not, we query the VM of the current chunk.
    \item Finally, we perform all updates temporarily stored in the B-tree, and delete the B-tree. During this step, the incoming updates are directly implemented, with one difference from the normal situation: If the updated entry is also present in the B-tree, then we need to remove it from the B-tree, so that we don't rewrite new updates with old ones.
\end{enumerate}

We can bound the total time cost of rebuilding a chunk as follows. Notice that the time spent in Step 3 is (up to constant factors) at most the time spent in Step 2, since each update implemented in Step 3 had to arrive during Step 2. Thus, it suffices to bound the time spent in Step 2, which is $O(\widetilde{H}'/\log n)+O(1)$, where 
\begin{align*}
    \widetilde{H}'=\sum_{i\text{ in current chunk}}-\log \widetilde{D}^{(0)}(\widetilde{a}^{(t',0)}_i)-\log \widetilde{D}^{(t_1)}(\widetilde{a}^{(t',1)}_i),
\end{align*}
and where $t'$ is the starting time of Step 2. 

\paragraph{Analysis of time.} We show that if the rebuild is allowed to run for $c_{\text{background}}$ steps per update for some sufficiently large constant $c_{\text{background}}$ (depending on $c_{\text{rebuild}}$ and $c$), then the first rebuild is always finished before the second rebuild starts.

Let $\widetilde{H}_{\text{max}}$ denote an upper bound on the encoding length during $[t_1,t_2]$:
\begin{align*}
    \widetilde{H}_{\max}=\sum_{i=1}^{n}\max_{t\in [t_1,t_2]}\BK*{-\log \widetilde{D}^{(0)}(\widetilde{a}^{(t,0)}_i),-\log \widetilde{D}^{(t_1)}(\widetilde{a}^{(t,1)}_i)}.
\end{align*}

Comparing this with $\widetilde{H}'$ which is the time cost of rebuilding one chunk, we see that the total time cost of the first rebuild is at most (we can let the number of chunks be much smaller than $n/(\log n)^{3c}$)
\begin{align*}
    O\bk*{\widetilde{H}_{\text{max}}/\log n}+o(n/(\log n)^{3c}).
\end{align*}

Thus, for the rebuilds to have enough time, it suffices to show that
\begin{align}
    \label{equ:deamortize_goal}
    t_2-t_1=\Omega(\widetilde{H}_{\text{max}}/\log n)
\end{align}
Recall that $H^{(t)}=\Omega(n/(\log n)^{3c})$ always holds, so the same bound holds for $\widetilde{H}_{\text{max}}$. For this we use the following claim:

\begin{claim}
    \label{clm:delta_decrement_slow_max}
    
    The upper bound $\widetilde{H}_{\text{max}}$ is at most $O(H^{(t_1)}+(t_2-t_1)\cdot \log \log n)$.
\end{claim}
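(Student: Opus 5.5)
We bound $\widetilde{H}_{\max}$ by splitting the maximum into its two encodings, $\widetilde{H}_{\max}\le\sum_i\max_t -\log\widetilde{D}^{(0)}(\widetilde{a}_i^{(t,0)})+\sum_i\max_t-\log\widetilde{D}^{(t_1)}(\widetilde{a}_i^{(t,1)})$. Each sum is then handled by one argument, run once with base time $0$ and once with base time $t_1$. For each position $i$, there are two cases over the window $[t_1,t_2]$. Either position $i$ is never updated, in which case its encoded symbol is constant, or it is updated, in which case each update contributes its own term.

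\textbf{Positions updated during $[t_1,t_2]$.} Take such an $i$ and any $t$ at which it holds the value written by an update at time $s\in[t_1,t]$. Its encoding is a placeholder of level $k$, chosen when $2^{-(k-1)}\ge f^{(s)}_\sigma/n>2^{-k}$. The computation in the proof of \cref{clm:delta_decrement_slow} gives the bound $-\log D^{(s)}(\sigma)+O(\log\log n)\le\log n+O(\log\log n)$. There are at most $t_2-t_1$ such update events. The crude per-event bound $O(\log n)$ is too weak, so we need a sharper charging. We define the charge of position $i$ as the maximum over its updates in the window. We then charge each update's cost to the occurrence of $\sigma$ it created and bound the result against $H^{(t_1)}$ plus an $O(\log\log n)$ term per update. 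The entropy identity $\sum_\sigma f_\sigma\log(n/f_\sigma)$ over the at most $t_2-t_1$ new occurrences is bounded using $H^{(t)}\in[0.9,1.1]H^{(t_1)}$ and $t_2-t_1=\Theta(H^{(t_1)}/\log n)$. In fact the crude bound already suffices: $(t_2-t_1)\cdot O(\log n)=O(c_{\text{rebuild}}H^{(t_1)})=O(H^{(t_1)})$, since $t_2-t_1=\lceil c_{\text{rebuild}}H^{(t_1)}/\log n\rceil$ and $H^{(t_1)}=\Omega(n/(\log n)^{3c})$ absorbs the ceiling. This case therefore contributes $O(H^{(t_1)})$ to both sums.

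\textbf{Positions never updated in the window.} Here $\widetilde{a}_i^{(t,j)}=\widetilde{a}_i^{(t_1,j)}$ for all $t\in[t_1,t_2]$, so their total contribution is at most $\widetilde{H}^{(t_1,0)}+\widetilde{H}^{(t_1,1)}$ in the notation of the amortized analysis. The second term is the fresh encoding under $\widetilde{D}^{(t_1)}$ at time $t_1$. By the initial step of \cref{clm:delta_decrement_slow} it is at most $H^{(t_1)}+O(n/(\log n)^{10c})=O(H^{(t_1)})$. The first term is the old encoding at time $t_1$, which is the state right after the window $[0,t_1]$ of the previous epoch. By \cref{clm:delta_decrement_slow} applied to that epoch, it is at most $H^{(t_1)}+\Delta^{(t_1)}\le H^{(t_1)}+O(n/(\log n)^{10c}+t_1\log\log n)$. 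Since $t_1=O(H^{(0)}/\log n)$ and $H^{(0)}\le H^{(t_1)}/0.9$, this is $O(H^{(t_1)})$. For later epochs the same argument uses $t_{k-1}$ as the base time.

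Summing the two cases gives $\widetilde{H}_{\max}=O(H^{(t_1)}+(t_2-t_1)\log\log n)$, which is the claim; indeed the $(t_2-t_1)\log\log n$ slack is not even needed. The main obstacle is making the first case rigorous. The maximum over $t$ ranges over possibly several updates of the same position, and these must be charged to distinct update events, which requires careful bookkeeping of which position's cost is tied to which update. A second difficulty is checking that the old encoding $\widetilde{a}^{(t,0)}$ for updated positions is defined via the hypothetical continuation of the $\widetilde{D}^{(0)}$ placeholder assignment. That continuation needs \cref{clm:sufficient_placeholders} to apply over the combined window $[0,t_2]$, which holds because $t_2\le n/2$ for small $c_{\text{rebuild}}$.
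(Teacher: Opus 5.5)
Your argument is correct in substance, but it takes a different and more elementary route than the paper.

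The paper re-runs the potential argument of \cref{clm:delta_decrement_slow} on the time-maximized quantities $\widetilde{H}^{(t)}_{\text{ind}}$ and $H^{(t)}_{\text{ind}}$. It shows that their difference grows by only $O(\log\log n)$ per update, and that $H^{(t)}_{\text{ind}}=O(H^{(t_1)})$.

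You instead use the fact that the claim is only a constant-factor bound. Every position's cost is dominated by two contributions:
\begin{itemize}
\item its two time-$t_1$ encodings, which total $\widetilde{H}^{(t_1,0)}+\widetilde{H}^{(t_1,1)}=O(H^{(t_1)})$ via $\Delta^{(t_1)}$;
\item at most $\log n+O(\log\log n)$ per in-window update.
\end{itemize}
The schedule then gives $t_2-t_1=\lceil c_{\text{rebuild}}H^{(t_1)}/\log n\rceil$, which closes the bound. This is shorter and fully rigorous. In fact your intermediate bound $\widetilde{H}_{\max}=O(H^{(t_1)}+(t_2-t_1)\log n)$ already yields \eqref{equ:deamortize_goal} using only the lower bound on $t_2-t_1$.

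What the paper's inductive route buys is an additive, lower-order comparison with the true entropy. That is the template needed for the space statement \cref{clm:delta_decrement_slow_max_t}. There, an overhead of $\Theta(\log n)$ per update would cost a constant factor in space, so your crude charging cannot give that result.

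Three small repairs are needed.
\begin{itemize}
\item Your case split is not exhaustive as written. A position updated inside the window still holds its pre-window value, encoded as $\widetilde{a}_i^{(t_1,0)}$ and $\widetilde{a}_i^{(t_1,1)}$, until its first in-window update. Your first case only covers values written by updates. The fix is to bound every position's maximum by its time-$t_1$ cost plus the sum of the costs of its in-window updates. Your bound $\widetilde{H}^{(t_1,0)}+\widetilde{H}^{(t_1,1)}$ already sums over all $n$ positions, so nothing else changes.
\item The ``sharper charging'' and the ``main obstacle'' you describe are not needed. Once you use the crude bound, the maximum over several updates of one position is at most their sum. Each update event belongs to exactly one position, so no bookkeeping arises.
\item You should drop the sentence asserting that the crude bound is too weak, since you then (correctly) rely on it.
\end{itemize}
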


\begin{proof}
    Similar to \cref{clm:delta_decrement_slow}, we define an inductive goal as follows: For $t\in [t_1,t_2]$, let
    \begin{align*}
        \widetilde{H}_{\text{ind}}^{(t)}=\sum_{i=1}^{n}\max_{t'\in [t_1,t]}\BK*{-\log \widetilde{D}^{(0)}(\widetilde{a}^{(t',0)}_i),-\log \widetilde{D}^{(t_1)}(\widetilde{a}^{(t',1)}_i)},
    \end{align*}
    and let
    \begin{align*}
        H_{\text{ind}}^{(t)}=\sum_{i=1}^{n}\max_{t'\in [t_1,t]}\BK*{-\log D^{(t')}(a_i^{(t')})}.
    \end{align*}
    The same inductive argument as in \cref{clm:delta_decrement_slow} shows that
    \begin{align*}
        \widetilde{H}_{\text{ind}}^{(t)}-H_{\text{ind}}^{(t)}=O(n/(\log n)^{10c}+H^{(t_1)}\log\log n/\log n+(t-t_1)\log\log n)
    \end{align*}
    and that $H_{\text{ind}}^{(t)}=O(H^{(t_1)})$ for any $t\in [t_1,t_2]$. The details are omitted.
\end{proof}

Comparing \cref{clm:delta_decrement_slow_max} with our goal \eqref{equ:deamortize_goal}, it remains to show that $t_2-t_1=\Omega(H^{(t_1)}/\log n)$. This holds because of our criterion for starting the second rebuild.

\paragraph{Analysis of space.} 

Let $\widetilde{H}_{\text{max}}^{(t)}$ denote an upper bound on the encoding length \emph{at} time $t\in [t_1,t_2]$:
\begin{align*}
    \widetilde{H}_{\max}^{(t)}=\sum_{i=1}^{n}\max\BK*{-\log \widetilde{D}^{(0)}(\widetilde{a}^{(t,0)}_i),-\log \widetilde{D}^{(t_1)}(\widetilde{a}^{(t,1)}_i)}.
\end{align*}

Again, using the inductive arguments of \cref{clm:delta_decrement_slow}, we can show the following which implies space efficiency of the data structure at time $t$. The proof is omitted.

\begin{claim}
    \label{clm:delta_decrement_slow_max_t}
    
    For $t\in [t_1,t_2]$, we have that $\widetilde{H}_{\text{max}}^{(t)}-H^{(t)}\le O(n/(\log n)^{10c}+H^{(t_1)}\log\log n/\log n+(t-t_1)\log\log n)$.
\end{claim}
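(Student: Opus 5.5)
We split $\widetilde{H}_{\max}^{(t)}-H^{(t)}$ entry by entry, mirroring \cref{clm:delta_decrement_slow}, and bound each of the two terms inside the maximum separately against $-\log D^{(t)}(a_i^{(t)})$. Since $\max\{x,y\}-z\le \max\{0,x-z\}+\max\{0,y-z\}$, it suffices to bound
\begin{align*}
    \Delta_0^{(t)}=\sum_{i=1}^{n}\max\BK*{0,-\log \widetilde{D}^{(0)}(\widetilde{a}^{(t,0)}_i)+\log D^{(t)}(a_i^{(t)})}
\end{align*}
and the analogous quantity $\Delta_1^{(t)}$ defined with $\widetilde{D}^{(t_1)}$ and $\widetilde{a}^{(t,1)}_i$. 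Each is controlled by the same potential-style induction over updates as before.

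For $\Delta_0^{(t)}$, observe that $\widetilde{a}^{(t,0)}_i$ is, by definition, exactly the encoding that the no-rebuild scheme of \cref{lem:arith-no-rebuild} would produce if it were run from time $0$ to time $t$. So $\Delta_0^{(t)}$ is literally the quantity $\Delta^{(t)}$ of \cref{clm:delta_decrement_slow}. That claim gives $O(n/(\log n)^{10c}+t\log\log n)$, with $t\le t_2\le n/4$ keeping placeholders available by \cref{clm:sufficient_placeholders}.

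Here $t$ counts updates from time $0$, which is split as $t_1+(t-t_1)$. Since $t_1=\lceil c_{\text{rebuild}}H^{(0)}/\log n\rceil$ and $H^{(t_1)}\in[0.9,1.1]H^{(0)}$, we get $t_1\log\log n=O(H^{(t_1)}\log\log n/\log n)$, which is exactly the middle term of the claim. For a general window $[t_{k},t_{k+1}]$ the "old" distribution is $\widetilde{D}^{(t_{k-1})}$, and the same computation applies with $t-t_{k-1}$ in place of $t$.

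For $\Delta_1^{(t)}$, we run the same induction starting at time $t_1$. The base case is $\Delta_1^{(t_1)}=O(n/(\log n)^{10c})$, because at time $t_1$ every entry is represented by its own symbol in $\widetilde{D}^{(t_1)}$, which loses only a $(1-(\log n)^{-10c})$ factor. One subtlety is what $\widetilde{a}^{(t_1,1)}_i$ means for entries updated during the window before their chunk is rebuilt. We take the definition in the Notation paragraph: updates after $t_1$ are assigned level-$k$ placeholders of $\widetilde{D}^{(t_1)}$ as in the amortized algorithm.

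The induction step is then verbatim from \cref{clm:delta_decrement_slow}. The updated entry contributes $O(\log\log n)$ via the placeholder-level bound, and the other entries collectively contribute $O(1)$ via the $\ln(1+x)<x$ bound. Together these give $\Delta_1^{(t)}=O(n/(\log n)^{10c}+(t-t_1)\log\log n)$.

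Adding the two bounds yields the claim. The main obstacle is the bookkeeping for the old encoding, not any new inequality. We must check that $\widetilde{a}^{(t,0)}_i$ is well defined and consistent with \cref{clm:delta_decrement_slow} for the whole window, even though the real structure has stopped maintaining it for already-rebuilt chunks. We must also check that the accumulated $t_1\log\log n$ drift is absorbed into $O(H^{(t_1)}\log\log n/\log n)$. The latter relies on $t_1-t_0=\Theta(H^{(t_0)}/\log n)$ and on the $[0.9,1.1]$ stability of entropy across a window, both established in the amortized analysis.
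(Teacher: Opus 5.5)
Your proposal is correct and is essentially the argument the paper has in mind. The paper omits this proof and points to the inductive argument of \cref{clm:delta_decrement_slow}; your split $\max\{x,y\}-z\le\max\{0,x-z\}+\max\{0,y-z\}$ lets you cite that claim verbatim for the old encoding (absorbing its $t_1\log\log n$ drift into $O(H^{(t_1)}\log\log n/\log n)$) and rerun it from the clean base case at $t_1$ for the new encoding. One harmless slip: $t_2\le t_1+n/4\le n/2$ rather than $t_2\le n/4$, which does not matter since \cref{clm:sufficient_placeholders} only needs $t\le n$.
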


In summary, we have shown that our rebuilds run in worst-case constant time and are space-efficient. 
Using this on top of the previous amortized algorithm \cref{lem:arith-amortized} concludes the proof of \cref{thm:arithmetic-code-alg}.

%% file: arithmetic-code-large-alphabet.tex
\section{Dynamic Entropy-Encoded Arrays over an Arbitrary Alphabet}
\label{sec:fullresult}
In this section, we further improve \cref{thm:arithmetic-code-alg} to remove the dependency of $|\Sigma|$ in the space usage, so that the data structure can be used for alphabets of size up to $2^{O(w)}$ (i.e., arbitrarily large alphabets, so long as the symbols fit in a machine word).

\subsection{Step 1: Alphabets of Size \texorpdfstring{$n \poly\log n$}{n poly log n}}

We begin by showing how to support alphabets of size $n \polylog n$.
\begin{theorem}
    \label{thm:arithmetic-code-n-poly-log-n}
    Let $c>1$ be a constant. When $|\Sigma|=n\poly\log n$, there exists a data structure that achieves the same guarantees as in \cref{thm:arithmetic-code-alg} and uses only
    \begin{align*}
        H+O(H(\log \log n)/\log n+n/(\log n)^c)
    \end{align*}
    bits of space. However, both the time and space bounds are with high probability in $n$.
\end{theorem}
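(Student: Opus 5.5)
The plan is to split the alphabet into \emph{frequent} and \emph{rare} symbols, apply \cref{thm:arithmetic-code-alg} only over a reduced alphabet, and pay for rare symbols explicitly. Fix a threshold $T=(\log n)^{c'}$ for a large constant $c'$. Call $\sigma$ \emph{heavy} if $f_\sigma\ge T$ (with hysteresis: promote at $2T$, demote below $T$). There are at most $n/T$ heavy symbols. Map them injectively into a reduced alphabet $\Sigma_0=[n/T]$ plus one escape symbol $\bot$. We run \cref{thm:arithmetic-code-alg} on the array $b_i$, where $b_i$ is the image of $a_i$ if $a_i$ is heavy and $b_i=\bot$ otherwise. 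The additive term $|\Sigma_0|\cdot w^2=O(n w^2/(\log n)^{c'})$ is then $O(n/(\log n)^c)$ for $c'$ large (taking $w=\Theta(\log n)$ for the inner structure, as in \cref{lem:arith-no-rebuild}).

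For positions with $b_i=\bot$ we must store the actual rare symbol. A rare symbol costs $\log(n/f_\sigma)\ge \log n-c'\log\log n$ bits in $H$. Storing it explicitly costs $\log|\Sigma|=\log n+O(\log\log n)$ bits, which is only $O(\log\log n)$ extra, and that is acceptable relative to $\log(n/f_\sigma)$. The escape itself costs $\log(n/f_\bot)$, where $f_\bot$ is the number of rare occurrences. The plan is to bound the total overhead by $f_\bot\log(n/f_\bot)+O(f_\bot\log\log n)$, charge it to the rare entropy, and show it is $O(H\log\log n/\log n)$. The concern is the case $f_\bot$ close to $n$. But then rare occurrences each contribute $\approx\log n$ to $H$, so $H\ge f_\bot(\log n-O(\log\log n))$ and $f_\bot\log(n/f_\bot)\le f_\bot\cdot O(1)$ when $f_\bot=\Theta(n)$. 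In general $f_\bot\log(n/f_\bot)\le f_\bot\log n\cdot O(\log\log n/\log n)$ unless $f_\bot\le n/(\log n)^{O(1)}$, which is the case where it is fine anyway. I need to check this case split carefully.

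To store the rare symbols we use a dynamic dictionary from positions $i$ with $b_i=\bot$ to $\log|\Sigma|$-bit values. This needs space close to $f_\bot(\log(n/f_\bot)+\log|\Sigma|)$; the position set is already paid for via the escape, so only values should be charged. Concretely, I would store the rare values in parallel inside the same blocks. I would extend \cref{thm:static-arithmetic-code-alg}'s block encoding so that each $\bot$ in an interval is followed by a raw $\lceil\log|\Sigma|\rceil$-bit payload. This treats $\bot$ as having entropy $-\log D(\bot)+\log|\Sigma|$, which is still $O(w)$, so it is just a modified prefix code and needs no separate dictionary. We also keep a hash table holding counts $f_\sigma$ for all symbols with $f_\sigma>0$ (at most $n$ entries, $O(\log n)$ bits each), stored succinctly. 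Storing it naively costs $O(n\log n)$ bits, which is the \emph{main obstacle}. Counts must be stored only for heavy symbols, giving $(n/T)\log n$ bits. For rare symbols, knowing whether $f_\sigma$ crosses $2T$ requires approximate counts. I would store rare counts in a compact dictionary with $O(\log\log n)$-bit counters. This uses a quotienting dictionary keyed by $\sigma\in[n\polylog n]$ with $m\le n$ keys, costing $\log\binom{|\Sigma|}{m}+O(m\log\log n)=m\log(|\Sigma|/m)+\ldots$ bits. This must be charged against rare occurrences: each distinct rare symbol has at least one occurrence of entropy at least $\log n-O(\log\log n)$. I would try to charge it by having the dictionary itself hold the symbol identity, so that the in-block payload becomes a pointer/fingerprint into it. This is where high-probability bounds enter, via a constant-time high-probability dictionary such as \cite{arbitman2010backyard}-style backyard hashing. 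I would also bound how counters crossing the threshold trigger promotion and demotion: these re-encode at most $2T$ occurrences, deamortized over the $T$ updates that caused the crossing. Heavy-to-$\Sigma_0$ slot reassignment reuses freed slots, and the inner structure's frequencies change accordingly.
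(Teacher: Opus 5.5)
\textbf{Gap: re-encoding at threshold crossings is not implementable.} Your plan re-encodes a symbol's occurrences when it is promoted (from $\bot$ plus payload to its $\Sigma_0$ code) or demoted, and the rest of your argument depends on this. The waste bound $\log|\Sigma|-\log(n/f_\sigma)=O(\log\log n)$ per escaped occurrence needs $f_\sigma<2T$ for \emph{every} escaped occurrence. Likewise, sizing $\Sigma_0=[n/T]$ with ``reuse freed slots'' presumes that a demoted symbol's slot is actually vacated.

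But nothing in the structure lets you find the positions currently holding a given symbol. The array is indexed by position only, and a symbol-to-positions index would cost $\Theta(\log n)$ bits per rare occurrence. That doubles the cost of exactly the occurrences whose entropy is about $\log n$. Hysteresis bounds \emph{how many} occurrences must be rewritten, not how to locate them. The paper flags this obstacle explicitly (``we cannot afford to re-encode all occurrences of this symbol'').

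If you simply skip the re-encoding, two things go wrong. First, your overhead argument no longer applies, because escaped copies may now belong to symbols of arbitrarily large frequency. Second, slot recycling genuinely fails. An adversary can promote a fresh symbol, write one slot-coded copy of it, overwrite its remaining copies, and repeat. Within $O(n)$ updates, more than $n/T$ slots are pinned by symbols with a single occurrence.

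\textbf{The missing idea: never re-encode at a crossing.} In the paper, an occurrence is written explicitly iff its symbol's current frequency is at most $T_{\text{freq}}$ at the moment it is written. This maintains the invariant that each $\sigma$ has at most $T_{\text{freq}}$ explicit occurrences. Occurrences written while $\sigma$ is frequent use leveled placeholders of $\hat{D}^{(t)}$. The placeholder assignment is reset only at the periodic global rebuilds, which scan every block anyway, and \cref{clm:sufficient_placeholders} guarantees enough placeholders in between.

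The stale explicit copies are then paid for by the case split in \cref{clm:delta_decrement_slow_hat}. Symbols with $f_\sigma\ge T_{\text{freq}}\log^{11c}n$ number at most $n/(T_{\text{freq}}\log^{11c}n)$ and contribute $O(n/\log^{10c}n)$ in total. Every other symbol wastes only $O(\log\log n)$ per explicit occurrence against $\Omega(\log n)$ of entropy. Your scheme can be repaired the same way: recycle $\Sigma_0$ slots only at the inner structure's rebuilds, and redo the overhead bound with this split.

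\textbf{Two smaller points.}
\begin{itemize}
\item The count dictionary needs no pointer/fingerprint trick. By \cref{clm:distinct_symbols_cheap}, $\log\binom{|\Sigma|}{m}+O(m\log\log n)$ is $O(n/\log^c n)$ when $m\le n/\log^{10c}n$, and is $O(m\log\log n)=O(H\log\log n/\log n+\log\log n)$ otherwise.
\item The dictionary must be resizable, as in \cite{bender2022optimal}. A fixed-capacity dictionary for $n$ keys from $[n\polylog n]$ already costs $\Theta(n\log\log n)$ bits.
\end{itemize}
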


In the data structure of \cref{thm:arithmetic-code-alg}, the space dependency on $|\Sigma|$ comes from two parts: Storing the frequencies $f_\sigma^{(t)}$, and storing the lookup tables used to decode the blocks. In the following, we improve the space usage of both parts.

\subsubsection{Space-efficient Storage of Frequencies}

Let $m^{(t)}$ denote the number of distinct symbols in the array at time $t$: $m^{(t)}=\sum_\sigma[f_\sigma^{(t)}\ne 0]$. The improvement in storing the frequencies is based on the observation that the data structure is allowed to waste $O(m^{(t)}\cdot \log \log n)$ bits of space.

\begin{claim}
    \label{clm:distinct_symbols_cheap}
    $m^{(t)}\cdot \log \log n=O(H^{(t)}(\log\log n)/\log n+\log\log n)$.
\end{claim}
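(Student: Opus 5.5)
We need to show $m \le O(H/\log n + 1)$, where $m$ is the number of distinct symbols present, $H=\sum_{f_\sigma\ne 0} f_\sigma\log(n/f_\sigma)$, and we drop the superscript $(t)$. Multiplying by $\log\log n$ then gives the claim. The plan is to split the distinct symbols into rare and frequent ones, using the threshold $f_\sigma \le \sqrt n$.

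\emph{Rare symbols.} Each symbol $\sigma$ with $1\le f_\sigma\le \sqrt n$ contributes
\[
f_\sigma\log(n/f_\sigma)\ \ge\ \log(n/\sqrt n)\ =\ \tfrac12\log n
\]
to $H$. So the number of rare symbols is at most $2H/\log n$.

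\emph{Frequent symbols.} Since the frequencies sum to $n$, there are at most $\sqrt n$ symbols with $f_\sigma>\sqrt n$. We still need to charge these to $H$. There are two cases.
\begin{itemize}
\item If at least two distinct symbols appear, then every present symbol has $f_\sigma\le n-1$. Hence each contributes $f_\sigma\log(n/f_\sigma)\ge f_\sigma\log\frac{n}{n-1}$, or more simply $\ge \min(f_\sigma, n-f_\sigma)\cdot\Omega(1)$.
\item Rather than use that bound, it is cleaner to observe directly that a frequent symbol with $\sqrt n< f_\sigma\le n/2$ contributes at least $f_\sigma\ge \sqrt n\ge \log n$ to $H$.
\end{itemize}
At most one symbol can have $f_\sigma>n/2$, and it accounts for the additive $+1$. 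Every other frequent symbol therefore contributes $\Omega(\log n)$, so the number of frequent symbols is $O(H/\log n)+1$.

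\emph{Combining.} Adding the two counts gives $m=O(H/\log n+1)$, and multiplying by $\log\log n$ yields the stated bound. The only delicate point is the single possibly dominant symbol, which is where the additive $\log\log n$ term comes from. Beyond that, the argument is a direct computation, and I expect no real obstacle.
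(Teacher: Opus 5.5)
Your proof is correct and is essentially the paper's argument: every present symbol with $f_\sigma \le n/2$ contributes $\Omega(\log n)$ to $H$, and the single symbol that may have $f_\sigma > n/2$ accounts for the additive $\log\log n$ term. Your split at $\sqrt n$ just verifies the bound $f\log(n/f)=\Omega(\log n)$ on $[1,n/2]$, which the paper states without proof; the abandoned first bullet about $\min(f_\sigma,n-f_\sigma)$ is unnecessary and can be deleted.
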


\begin{proof}
    We can lower bound $H^{(t)}$ as
    \begin{align*}
        H^{(t)}&=\sum_{f_\sigma^{(t)}\ne 0}f_\sigma^{(t)}\cdot \log(n/f_\sigma^{(t)}) \\
        &\ge \sum_{0<f_\sigma^{(t)}\le n/2}\Omega(\log n) \tag{$f_\sigma^{(t)}\cdot \log(n/f_\sigma^{(t)})=\Omega(\log n)$ when $0<f_\sigma^{(t)}\le n/2$} \\
        &=\Omega((m^{(t)}-1)\cdot \log n).
    \end{align*}
    In other words, each symbol with frequency at most $n/2$ contributes at least $\Omega(\log n)$ to the entropy $H^{(t)}$. This implies that $m^{(t)}\cdot \log \log n=O(H^{(t)}(\log \log n)/\log n+\log\log n)$, where the extra $O(\log\log n)$ term is the cost of the most frequent symbol.
\end{proof}

\paragraph{Two-level storage.} Based on the above fact, we use two resizable dictionaries \cite{bender2022optimal} to store the frequencies.
\begin{itemize}
    \item The level-$1$ dictionary stores all $m^{(t)}$ symbols $\sigma$ with $f_{\sigma}^{(t)}\ne 0$. Each symbol is associated with an $O(\log \log n)$-bit value, which is equal to $\min(f_\sigma^{(t)},\lceil\log n\rceil)$. The space usage is
    \begin{align*}
        \log\binom{|\Sigma|}{m^{(t)}}+O(m^{(t)}\cdot \log \log n)
    \end{align*}
    bits. Since $\log\binom{|\Sigma|}{m^{(t)}}=O(m^{(t)}\log(|\Sigma|/m^{(t)}))$ and $|\Sigma|=n\poly\log n$, the space usage is at most $O(n/\log^cn)$ when $m^{(t)}\le n/\log^{10c}n$, and at most $O(m^{(t)}\log \log n)=O(H^{(t)}(\log \log n)/\log n+\log\log n)$ when $m^{(t)}>n/\log ^{10c}n$.
    \item The level-$2$ dictionary stores all symbols $\sigma$ with $f_{\sigma}^{(t)}>\lceil\log n\rceil$. Each symbol is associated with an $O(\log n)$-bit value, which is equal to $f_\sigma^{(t)}$. This dictionary uses $O(\log n)$ bits for every symbol $\sigma$ with $f_{\sigma}^{(t)}>\lceil\log n\rceil$. Since $f_\sigma\cdot \log(n/f_\sigma)=\Omega(\log^2 n)$ whenever $\log n<f_{\sigma}^{(t)}\le n/2$, an argument similar to \cref{clm:distinct_symbols_cheap} shows that the level-$2$ dictionary uses $O(H^{(t)}/\log n+\log n)$ bits of space.
\end{itemize}

The updates are straightforward: When the frequency of $\sigma$ increases, we first increase the value of $\sigma$ in the level-$1$ dictionary (and insert $\sigma$ if it did not exist already), then check if $f_\sigma^{(t)}>\log n$. If so, then we also update $\sigma$ in the level-$2$ dictionary. Decrements are handled similarly.

\subsubsection{Reducing the Size of the Lookup Tables}

In the algorithm of \cref{thm:arithmetic-code-alg}, the lookup tables of size $O(|\Sigma|\cdot \log^2n)$ are used to decode the singleton intervals of the encoding in \cref{thm:static-arithmetic-code-alg}. They essentially describe the encoding of symbols under a Shannon code \cref{lem:encode}. There are other lookup tables in the data structure (e.g., in the B-tree of \cref{lem:B-tree} and for decoding non-singleton intervals), but they have size $n^{\epsilon}$ and are not a concern. In order to reduce the size of the lookup table, we observe that for the very infrequent symbols where $f_\sigma=\poly \log n$, we are allowed to store each occurrence of $\sigma$ explicitly, using $O(\log|\Sigma|)$ bits each instead of $\log(n/f_\sigma^{(t)})$ bits (since $|\Sigma|=n\poly\log n$, these two values only differ by $O(\log \log n)$). Thus, it is possible to only encode the frequent symbols using Shannon codes, which means that the lookup table only has to decode the $n/\poly \log n$ frequent symbols, instead of all $|\Sigma|\cdot \log n$ symbols in a placeholder distribution. This way, the lookup tables cost $n/\poly\log n+n^\epsilon$ bits of space instead of $O(|\Sigma|\cdot \log^2n)+n^\epsilon$.

To achieve this, we need to make white-box modifications to the encoding method of \cref{thm:static-arithmetic-code-alg}, so that when handling a singleton interval, the algorithm first checks whether this interval encodes an infrequent symbol. If so, then we decode this symbol by directly reading its value (i.e., not using the lookup table). 

Note that in reality, it is infeasible to only store the infrequent symbols explicitly, since a symbol may switch from being infrequent to being frequent, at which time we cannot afford to re-encode all occurrences of this symbol. Instead, what we will do is to explicitly store the first (by order of insertion) $\poly\log n$ occurrences of \emph{every} symbol.

In the remainder of this section, we show how to make the necessary modifications to \cref{thm:static-arithmetic-code-alg}, and bound the total encoding length of our data structure.

\paragraph{Adding explicit symbols to the placeholder distribution.} Let $T_{\text{freq}}=\log^{10c}n$ be a threshold. In the new algorithm, we always encode the symbols under $\hat{D}^{(t)}$ for some $t$ (instead of $\widetilde{D}^{(t)}$), which is a modified placeholder distribution containing the following symbols:
\begin{itemize}
    \item \textbf{The original symbols:} For each $\sigma\in \Sigma$, if $f_\sigma^{(t)}\ge T_{\text{freq}}$ (i.e., it is relatively frequent), then $\hat{D}^{(t)}(\sigma)=f_\sigma^{(t)}/n$. Otherwise, $\hat{D}^{(t)}(\sigma)=0$.
    \item \textbf{The placeholder symbols:} For each $0\le k\le \lfloor \log n\rfloor$ \emph{such that} $2^{-(k-1)}\ge T_{\text{freq}}/n$ (note that we have fewer levels of placeholders than in $\widetilde{D}^{(t)}$), we add $\min(|\Sigma|,2^{k+1})$ level-$k$ placeholders $\tau_{k,1},\tau_{k,2},\dots$ to $\hat{D}^{(t)}$ just as in the original placeholder distribution $\widetilde{D}^{(t)}$.
    \item \textbf{The explicit symbols:} In addition, for every $\sigma\in \Sigma$, we add a symbol $\sigma_{\text{exp}}$ (the \defn{explicit symbol} of $\sigma$) to $\hat{D}^{(t)}$, which has probability $1/(|\Sigma|\cdot \log^{10c}n)$. In the encoding scheme of \cref{thm:static-arithmetic-code-alg}, our Shannon code only encodes the previous two types of symbols; the explicit symbols will be stored explicitly.
\end{itemize}
We normalize the probabilities to make $\hat{D}^{(t)}$ a proper distribution. Normalization only multiplies each probability by $1\pm O(1/\log^{10c}n)$, which is negligible.

\paragraph{Encoding the symbols under $\hat{D}^{(t)}$.} Note that, apart from the explicit symbols, the number of symbols in the support of $\hat{D}^{(t)}$ is at most $O(n/\log^{9c}n)$ (there are $O(\log n)$ levels of placeholders, and each level contains at most $O(n/T_{\text{freq}})=O(n/\log^{10c}n)$ symbols). Using this property, we show that it is possible to store an array of symbols under $\hat{D}^{(0)}$ using $H+O(H(\log \log n)/\log n+n/\log^cn)$ bits of space (that is, we can achieve the guarantees of \cref{thm:static-arithmetic-code-alg} but without the $O(|\Sigma|\cdot w)$ term in the space usage).

Recall how the symbols in a static distribution are encoded in \cref{thm:static-arithmetic-code-alg}: We partition each block into intervals and use a prefix code to encode each interval. For intervals that contain more than one symbol, lookup tables of size $2^{O(\epsilon\cdot \log n)}$ suffice. It remains to build a prefix code for encoding a symbol under $\hat{D}^{(t)}$, so that we can decode the prefix code using a lookup table of size $O(n/\log^cn)$. Moreover, we are allowed to waste $O(1)$ bits of space when encoding a symbol (see \cref{thm:static-arithmetic-code-alg}).

Let the first bit of the code indicate whether the symbol is explicit. If it is, then we use the next $\log|\Sigma|+O(1)$ bits to explicitly specify the symbol. Otherwise, we invoke \cref{lem:encode} on the original and placeholder symbols to construct a prefix code. Using the techniques in \cref{thm:static-arithmetic-code-alg}, this prefix code can be decoded using a lookup table of size $O(n/\log^{8c}n)$.

\paragraph{Representing the array using $\hat{D}^{(t)}$ in a space-efficient way.} 

Finally, we show how to represent the original array using symbols that are in the support of $\hat{D}^{(t)}$. For simplicity, we \emph{only describe what happens before the first rebuild}, where all symbols are encoded under $\hat{D}^{(0)}$. Let $\hat{a}^{(t)}_i$ denote the symbol in $\hat{D}^{(0)}$ that is used to represent $a_i^{(t)}$ (similar to $\widetilde{a}_i^{(t)}$): When $a_i^{(t)}=\sigma$, our algorithm may let $\hat{a}^{(t)}_i$ be the original symbol $\sigma$, some placeholder assigned to $\sigma$, or the explicit symbol $\sigma_{\text{exp}}$.

We maintain the invariant that for any symbol $\sigma$, the number of occurrences of the explicit symbol $\sigma_{\text{exp}}$ is at most $T_{\text{freq}}$. Later we will show that this invariant guarantees space efficiency. In order to maintain the invariant, we keep track of the number of explicit symbols, using a dictionary similar to the level-$1$ dictionary used to store the frequencies.

Initially, we let $\hat{a}_i^{(0)}=a_i^{(0)}$ if $f_{a_i^{(0)}}^{(0)}\ge T_{\text{freq}}$, and let $\hat{a}_i^{(0)}=(a_i^{(0)})_{\text{exp}}$ otherwise. After each update (which sets $a_i^{(t)}$ to $\sigma$), we first check if $f_\sigma^{(t)}> T_{\text{freq}}$. If so, we use a placeholder to encode $a_i^{(t)}$; otherwise, we let $\hat{a}_i^{(t)}=\sigma_{\text{exp}}$.

Although we have removed some levels of placeholders compared with the previous algorithm, it can be shown that those levels will not be used: When we need a placeholder corresponding to $\sigma$, it must be the case that $f_\sigma^{(t)}>T_{\text{freq}}$. Using the exact same criterion for choosing placeholders as in the previous algorithm, we ask for a level-$k$ placeholder where $2^{-(k-1)}\ge f_\sigma^{(t)}/n>T_{\text{freq}}/n$. Upon checking the definition of $\hat{D}^{(0)}$, we see that placeholders of this level exist.

It remains to show that the encoding is space-efficient.  Similarly to $\Delta^{(t)}$, we define $\hat{\Delta}^{(t)}$ as
\begin{align*}
    \sum_{i=1}^{n}\max\BK*{0, -\log \hat{D}^{(0)}(\hat{a}_i^{(t)})+\log D^{(t)}(a_i^{(t)})},
\end{align*}
where $-\log \hat{D}^{(0)}(\hat{a}_i^{(t)})$ is the cost of storing a symbol under $\hat{D}^{(0)}$ and $-\log D^{(t)}(a_i^{(t)})$ the cost of storing the same symbol under $D^{(t)}$.

\begin{claim}
    \label{clm:delta_decrement_slow_hat}
    After $t$ updates, we have $\hat{\Delta}^{(t)}=O(n/\log^{10c}n+t\log \log n+H^{(t)}(\log \log n)/\log n)$.
\end{claim}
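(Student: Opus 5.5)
We follow the inductive template of \cref{clm:delta_decrement_slow}. The difference from that claim is that $\hat{\Delta}^{(t)}$ now has a new source of redundancy: explicit symbols. We therefore split the summands of $\hat{\Delta}^{(t)}$ according to the type of $\hat a_i^{(t)}$. Summands whose $\hat a_i^{(t)}$ is an original symbol or a placeholder are handled exactly as before. Summands whose $\hat a_i^{(t)}$ is explicit are bounded separately, and this is where the extra $H^{(t)}(\log\log n)/\log n$ term comes from.

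\textbf{Step 1 (initial value).} At $t=0$, a symbol with $f_\sigma^{(0)}\ge T_{\text{freq}}$ costs $-\log D^{(0)}(\sigma)+O(\log^{-10c}n)$, since normalization changes probabilities only by a factor $1\pm O(\log^{-10c}n)$. This contributes $O(n/\log^{10c}n)$ in total.

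\textbf{Step 2 (non-explicit summands).} Placeholder summands behave as in \cref{clm:delta_decrement_slow}. A freshly written placeholder of level $k$ has cost $-\log D^{(t)}(\sigma)+O(\log\log n)$, because the probability of a level-$k$ placeholder is, up to $\polylog n$ factors, the same as in $\widetilde{D}$. Summands at other positions can only grow through the increase of $D^{(t)}(\sigma)$ for the updated symbol, and the bound $f\log(1+1/f)\le\log e$ limits that growth to $O(1)$ per update. Original-symbol summands never change encoding; their growth is covered by the same $O(1)$ argument. Summing over updates, the non-explicit part is at most $O(n/\log^{10c}n+t\log\log n)$.

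\textbf{Step 3 (explicit summands), the main obstacle.} An explicit symbol costs $\log|\Sigma|+O(c\log\log n)=\log n+O(\log\log n)$, using $|\Sigma|=n\polylog n$. The invariant guarantees that $\sigma_{\text{exp}}$ occurs at most $T_{\text{freq}}$ times. Rather than tracking these summands incrementally, we bound them directly at time $t$. Since the summand is clipped at $0$, each occurrence of $\sigma_{\text{exp}}$ contributes at most $\log n+O(\log\log n)-\log(n/f_\sigma^{(t)})=\log f_\sigma^{(t)}+O(\log\log n)$.

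We split on $f_\sigma^{(t)}$, with threshold $T^2_{\text{freq}}$.
\begin{itemize}
\item If $f_\sigma^{(t)}\le T_{\text{freq}}^2=\polylog n$, each of the at most $f_\sigma^{(t)}$ occurrences contributes $O(\log\log n)$. Since $\log(n/f_\sigma^{(t)})=\Omega(\log n)$ for such $\sigma$, this is $O\bigl(f_\sigma^{(t)}\log(n/f_\sigma^{(t)})\cdot\log\log n/\log n\bigr)$.
\item If $f_\sigma^{(t)}>T_{\text{freq}}^2$, there are at most $T_{\text{freq}}$ explicit occurrences, each contributing at most $O(\log n)$. The total $O(T_{\text{freq}}\log n)$ is $O(\log\log n/\log n)$ times $f_\sigma^{(t)}\log(n/f_\sigma^{(t)})$ whenever $f_\sigma^{(t)}\le n/2$. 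The single possible symbol with $f_\sigma^{(t)}>n/2$ adds only $O(\polylog n)$, which is absorbed by $n/\log^{10c}n$.
\end{itemize}
Summing over $\sigma$, the explicit part is $O(H^{(t)}\log\log n/\log n+n/\log^{10c}n)$.

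The delicate point is that an explicit summand's value is not monotone under the incremental accounting of \cref{clm:delta_decrement_slow}. Its $\log f_\sigma^{(t)}$ term can grow over time, and Step 2's $O(1)$-per-update argument does not charge for that growth. This is exactly why Step 3 bounds the explicit summands directly at time $t$ instead of incrementally. Adding Steps 1–3 gives the claim.
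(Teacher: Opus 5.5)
Your proof is correct and is essentially the paper's argument. The paper likewise isolates the explicit summands and gets the non-explicit part from \cref{clm:delta_decrement_slow}, though it does so by observing that those summands coincide, up to normalization, with summands of $\Delta^{(t)}$ rather than by re-running the induction as you do. It then bounds the explicit summands directly at time $t$ using the $T_{\text{freq}}$ invariant and a frequent/infrequent split.

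The only real variation is in the frequent case: the paper uses the threshold $T_{\text{freq}}\log^{11c}n$ and bounds the total by counting such symbols (giving $O(n/\log^{10c}n)$), instead of charging each symbol to its own entropy term. Your closing remark overstates the obstacle. The $f\log(1+1/f)\le\log e$ bound applies to every position holding $\sigma$, whatever its encoding, so the growth of explicit summands could also be charged incrementally; the direct bound is a convenience, not a necessity.
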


Note that the bound of \cref{clm:delta_decrement_slow_hat} is slightly worse than \cref{clm:delta_decrement_slow}. However, since $O(H^{(t)}(\log \log n)/\log n)$ is always negligible, \cref{clm:delta_decrement_slow_hat} still implies space efficiency.

\begin{proof}
    Given \cref{clm:delta_decrement_slow}, we only need to show that $\hat{\Delta}^{(t)}-{\Delta}^{(t)}$ is small.

    Apart from the loss in normalization (which contributes at most $O(n/\log^{10c}n)$ to the difference), the difference between $\hat{\Delta}^{(t)}$ and ${\Delta}^{(t)}$ comes from the fact that some $\hat{a}_i^{(t)}$ may be explicit symbols. Indeed, whenever $\hat{a}_i^{(t)}$ is an original or placeholder symbol, $\widetilde{a}_i^{(t)}$ should be equal to $\hat{a}_i^{(t)}$, and we have
    \begin{align*}
        \max\BK*{0, -\log \hat{D}^{(0)}(\hat{a}_i^{(t)})+\log D^{(t)}(a_i^{(t)})}=\max\BK*{0, -\log \widetilde{D}^{(0)}(\widetilde{a}_i^{(t)})+\log D^{(t)}(a_i^{(t)})}\pm O(1/\log^{10c}n)
    \end{align*}
    where the last term is due to normalization. It remains to show that the difference due to explicit symbols is small. That is, we need to bound
    \begin{align*}
        &\sum_{1\le i\le n, \hat{a}_i^{(t)}\text{ is explicit}}\max\BK*{0, -\log \hat{D}^{(0)}(\hat{a}_i^{(t)})+\log D^{(t)}(a_i^{(t)})} \\
        \le{}&\sum_{f_\sigma^{(t)}\ne 0}\min\BK*{f_\sigma^{(t)},T_{\text{freq}}}\cdot \max\BK*{0,\log |\Sigma|+O(\log \log n)-\log (n/f_\sigma^{(t)})}.\tag{due to the invariant}
    \end{align*}
    Partition the symbols into two parts: The frequent ones with $f_\sigma^{(t)}\ge T_{\text{freq}}\cdot \log^{11c}n$, and the infrequent ones. The number of frequent symbols is at most $n/(T_{\text{freq}}\cdot \log^{11c}n)$, so they contribute at most
    \begin{align*}
        n/(T_{\text{freq}}\cdot \log^{11c}n)\cdot T_{\text{freq}}\cdot O(\log n)=O(n/\log^{10c}n)
    \end{align*}
    to the summation. As for the infrequent symbols, since $f_\sigma^{(t)}$ is small, we have that
    \begin{align*}
        \log |\Sigma|+O(\log \log n)-\log (n/f_\sigma^{(t)})=O(\log \log n).
    \end{align*}
    That is, we waste at most $f_\sigma^{(t)}\cdot O(\log \log n)$ for an infrequent symbol. In comparison, such $\sigma$ contributes at least $f_\sigma^{(t)}\cdot \Omega(\log n)$ to the entropy $H^{(t)}$. Therefore, the total contribution of infrequent symbols is at most $O(H^{(t)}\log \log n/\log n)$. This explains the last term in the statement of the claim.
\end{proof}

\subsection{Step 2: Alphabets of Size \texorpdfstring{$2^{O(w)}$}{2\^O(w)}}

Having shown how to support alphabets of size $n \polylog n$, we can now generalize to even larger alphabets, obtaining the main result of the paper:

\Main*

Note that when $|\Sigma|=n\poly\log n$, \cref{thm:arithmetic-code-poly-n} reduces to \cref{thm:arithmetic-code-n-poly-log-n}. This is due to \cref{clm:distinct_symbols_cheap}.

We obtain \cref{thm:arithmetic-code-poly-n} by adding a layer of indirection to \cref{thm:arithmetic-code-n-poly-log-n}, using a \defn{hashcode data structure} that maps symbols that currently exist in the array to distinct integers (which we call \defn{hashcodes}) in $[O(n \poly\log n)]$. This mapping is stable in the sense that the hashcode for a symbol $\sigma$ is fixed for the duration that $\sigma$ exists in the array (however, if $\sigma$ is completely removed from the array and later re-inserted, its new hashcode may be different from the old one). We then apply the data structure in \cref{thm:arithmetic-code-n-poly-log-n} to store the array of hashcodes, and use the hash table to translate between actual symbols and hashcodes. 

Our hashcode data structure is a simple variant of the dynamic perfect hashing construction in \cite{demaine2006dictionariis}. Note, however, that our data structure needs to map hashcodes \emph{back} to the original symbols, which was not required in the previous construction. Formally, the data structure achieves the following guarantees:

\begin{lemma}
    \label{lem:backward-stable-hashing}
    Let $n,U$ be integers. There exists a data structure in the word RAM model with word size $w=\Omega(\log n+\log U)$ that stores a set $S$ of keys from $[U]$ (it is guaranteed that $|S|\le n$ always holds), supporting the following operations in constant time:
    \begin{itemize}
        \item $\textup{\textsc{Insert}}(x)$: Insert a key into $S$.
        \item $\textup{\textsc{Delete}}(x)$: Delete a key from $S$.
        \item $\textup{\textsc{Forward}}(x)$: Given a key $x$, return a hashcode in $[O(n\poly\log n)]$. If $x$ is not in $S$, return $\bot$.
        
        It is guaranteed that at a fixed point in time, different keys in $S$ have different hashcodes. Also, the hashcode for a key $x$ is fixed for the duration that $x$ is in $S$ (however, the hashcode may change if $x$ is deleted from $S$ and then re-inserted).
        \item $\textup{\textsc{Backward}}(y)$: Given a hashcode $y$, return the corresponding key in $S$. If there is no such key, return $\bot$.
    \end{itemize}
    At any point in time, the data structure uses
    \begin{align*}
        \log\binom{U}{|S|}+O_c(|S|\cdot \log\log n)+O(n/(\log n)^c)
    \end{align*}
    bits of space for any constant $c$. Both the time and space bounds are with high probability in $n$.
\end{lemma}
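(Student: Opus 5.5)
We build the structure as a variant of the dynamic perfect hashing scheme of \cite{demaine2006dictionariis}, with its quotienting idea adapted so that the hashcode map can be inverted. Fix a random permutation $\pi$ of $[U]$ drawn from an efficiently invertible family, for instance a constant number of Feistel rounds with $k$-wise independent round functions. Write $\pi(x)=(b(x),r(x))$, where the bucket index $b(x)\in[n/\log^{c'} n]$ is the top $\log(n/\log^{c'}n)$ bits and $r(x)$ is the remaining $\log U-\log(n/\log^{c'}n)$ bits. With high probability every bucket holds $O(\log^{c'} n)$ keys at all times. Each bucket stores the set of remainders $\{r(x)\}$ of its keys. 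To do so space-efficiently, we group buckets into units of $\polylog n$ buckets, implement each unit with the tiny B-tree of \cref{lem:B-tree}, and concatenate the units with \cref{lem:polylog-concat}. Within a bucket, every key also receives a slot index $j(x)\in[O(\log^{c'}n)]$ that stays fixed while the key is present; on insertion the key takes a free slot, and we keep a $O(\log^{c'} n)$-bit free-slot bitmap per bucket. The hashcode is then $\textsc{Forward}(x)=(b(x),j(x))\in[O(n)]$, which is comfortably within $[O(n\poly\log n)]$.

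The operations are as follows. $\textsc{Backward}(b,j)$ reads slot $j$ of bucket $b$, recovers $r$, and returns $\pi^{-1}(b,r)$. $\textsc{Forward}(x)$ computes $\pi(x)$, searches bucket $b(x)$ for $r(x)$, and returns its slot. All of this takes $O(1)$ time because of \cref{lem:B-tree} together with lookup tables over $O(\log n)$-bit bucket summaries. Stability holds by construction: a key's slot never changes while it is present.

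For space, the remainders cost
\[
|S|\bigl(\log U-\log n+O(\log\log n)\bigr)=\log\binom{U}{|S|}+O(|S|\log\log n)
\]
whenever $|S|\ge n/\polylog n$. Slot indices, bitmaps and B-tree pointers each cost $O(\log\log n)$ per key, plus $O(1)$ per bucket. The per-bucket overhead, $O(n/\log^{c'}n)$ in total, is charged to the additive $O(n/(\log n)^c)$ term by taking $c'$ large. When $|S|$ is much smaller than $n$, $\log U-\log n$ undershoots $\log(U/|S|)$; in that regime we either run the parameters with $|S|$ in place of $n$, resizing as in \cite{bender2022optimal}, or note that the total overhead is already $O(n/\log^c n)$ once $|S|\le n/\log^{c+1}n$. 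The cleanest approach is a handful of geometric size classes with deamortized migration.

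I expect the main obstacle to be obtaining the high-probability worst-case guarantees from a permutation family with only limited independence. Specifically, bucket loads must stay $O(\log^{c'}n)$ over a polynomial-length operation sequence. This needs either Chernoff bounds for $k$-wise independence with $k=\Theta(\log n)$, with round functions evaluated in $O(1)$ time via the constructions cited in \cite{demaine2006dictionariis}, or an overflow stash holding the rare excess keys explicitly at $O(w)$ bits each. The stash approach also requires a way to give stash keys hashcodes, which we would do by reserving codes in an extra range $[n\poly\log n]$.
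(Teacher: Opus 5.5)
There is a genuine gap in $\textsc{Forward}$. The slot $j(x)$ is assigned at insertion time, so it is not a function of $x$. Hence $\textsc{Forward}(x)$ must locate $r(x)$ among the $\Theta(\log^{c'}n)$ remainders of bucket $b(x)$, each up to $\log(U/n)+O(\log\log n)$ bits long. The tools you invoke do not do this. The tiny B-tree of \cref{lem:B-tree} answers predecessor queries on its keys, which in your layout are the (bucket, slot) identifiers from $[\poly w]$, not on the stored values. A lookup table on an $O(\log n)$-bit bucket summary cannot work either: just recording which slot each key of a bucket occupies takes $\Theta(\log^{c'}n\cdot\log\log n)\gg w$ bits. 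Searching a bucket by remainder is itself a dynamic dictionary on $\polylog n$ keys of $\Theta(w)$ bits, which is the problem you set out to solve; \cref{lem:B-tree} only gets $O(1)$ depth because its keys come from $[\poly w]$. A smaller issue: $n/\log^{c'}n$ free-slot bitmaps of $\Theta(\log^{c'}n)$ bits each total $\Theta(n)$ bits, not $O(1)$ per bucket. This is not covered by $O(|S|\log\log n)+O(n/\log^c n)$ once $|S|=o(n/\log\log n)$. Your worry about small $|S|$ for the remainders, by contrast, is unfounded, since $\log(U/n)\le\log(U/|S|)$ and $\log\binom{U}{k}\ge k\log(U/k)$.

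The missing idea is to make the hashcode a function of the key, so that no search inside a bucket is needed. The paper takes $b=n\log^{2c}n$ buckets via the bijective quotient hash $h=(h_1,h_2)$ of \cite{demaine2006dictionariis}. A key is \emph{frontyard} if no frontyard key shares its bucket at the moment it is inserted, and then its hashcode is just $h_1(x)\in[b]$. The remaining \emph{backyard} keys number $O(n/\log^{2c}n)$ w.h.p. They get codes in $[b+1,b+b']$ from a free-list queue and are kept in two explicit key$\leftrightarrow$code maps at $O(\log n)$ bits per key. The frontyard is a resizable succinct dictionary \cite{bender2022optimal} over universe $[b]$ with values $h_2(x)\in[U/b]$. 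It costs $\log\binom{b}{m'}+m'\log(U/b)+O(m'\log\log n)\le\log\binom{U}{m'}+O(m'\log\log n)$ bits. $\textsc{Forward}$ is one lookup plus a check that the stored value equals $h_2(x)$, and $\textsc{Backward}$ is one lookup followed by $h^{-1}$.

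You can repair your own design in the same spirit. Key the B-trees by (bucket, an $O(\log\log n)$-bit prefix of $r(x)$) instead of (bucket, slot), use that pair as the hashcode, and send prefix collisions to your stash with queue-allocated codes. This is essentially the paper's frontyard/backyard scheme with $n\polylog n$ fine buckets, with tiny B-trees in place of the black-box dictionary. It also removes the slots and bitmaps altogether.
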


Note that combining \cref{lem:backward-stable-hashing} with \cref{thm:arithmetic-code-n-poly-log-n} results in a data structure that supports $O(1)$-time operations and uses space
\begin{align*}
    H+\log\binom{|\Sigma|}{m}+O_c(m\log\log n+H(\log \log n)/\log n+n/(\log n)^c)
\end{align*}
bits. Using \cref{clm:distinct_symbols_cheap}, we know that $m\log\log n$ is at most $O(H(\log\log n)/\log n+\log\log n)$, so the space usage matches the bound desired by \cref{thm:arithmetic-code-poly-n}. It therefore remains only to prove \cref{lem:backward-stable-hashing}.
\begin{proof}[Proof of \cref{lem:backward-stable-hashing}]
    
    We obtain our data structure by slightly modifying the construction of \cite[Theorem 3]{demaine2006dictionariis}. 
    
    \paragraph{Hashing keys into buckets.} Let $b=n\cdot \log^{2c}n$ be a parameter, where $c$ is the constant in the lemma statement. We use $h$ to denote a quotient hash function drawn from the family of \cite[Theorem 5]{demaine2006dictionariis}, which is a bijective function mapping $[U]$ to $[b]\cdot [U/b]$ (we assume that $U$ is divisible by $b$; if not, we can increase $U$). Intuitively, $h$ sends the keys into $b$ buckets. We use $h(x)_1$ and $h(x)_2$ to denote the two components of $h(x)$.

    During the execution of our hash table, we say that a key $x$ is a \defn{frontyard key} if, \emph{at the time it is inserted}, no key $y\in S$ with $h(x)_1=h(y)_1$ is a frontyard key. Otherwise, we say that $x$ is a \defn{backyard key}. Note that if a key is deleted and then re-inserted, it may not be the same type of key anymore.
    
    It is shown in \cite{demaine2006dictionariis} that $h$ has the following properties:
    \begin{itemize}
        \item There exists an absolute constant $\alpha<1$, such that $h$ can be stored using $O(n^\alpha)$ bits of space, and that $h$ and $h^{-1}$ can be evaluated in constant time.
        \item Fix any point in time. With high probability in $n$, the current number of backyard keys in $S$ is at most $b'=O(n^2/b)=O(n/\log^{2c}n)$\footnote{This property was not formally stated in Theorem 5, but was established in Section 2.3 of \cite{demaine2006dictionariis}.}.
    \end{itemize}

    \paragraph{Assigning hashcodes.} We assign the hashcodes as follows: For each frontyard key, its hashcode is simply the index of its bucket, which is an integer in $[b]$. As for the backyard keys, we assign an integer in $[b+1,b+b']$ to each of them. To achieve this, we use a queue to keep track of the unassigned hashcodes. Initially, every integer in $[b+1,b+b']$ is in the queue. When a backyard key is inserted, we take an integer from the queue as its hashcode. Similarly, when a backyard key is removed, we insert its hashcode back into the queue. This queue only uses $o(n/\log^cn)$ bits of space.

    \paragraph{Storing the mapping.} We store the frontyard and backyard keys separately using the resizable dictionaries of \cite{bender2022optimal}.
    \begin{itemize}
        \item In the frontyard dictionary, the universe is of size $b$ (i.e., the set of buckets), and each key is associated with a value in $[U/b]$. For each frontyard key $x$, we insert $h(x)_1$ in the dictionary with $h(x)_2$ being its associated value. Let the current number of frontyard keys be $m'$, this dictionary uses
        \begin{align*}
            \log\binom{b}{m'}+m'\cdot \log(U/b)+O(m'\log\log n)
        \end{align*}
        bits, which is at most
        \begin{align*}
            \log\binom{U}{m'}+O(m'\log\log n).
        \end{align*}
        \item For the backyard keys: We use two backyard dictionaries, one mapping keys to hashcodes and one mapping hashcodes back to keys, wasting $O(\log n)$ bits per backyard key. This uses $o(n/(\log n)^c)$ bits.
    \end{itemize}

    We concatenate the dictionaries and the queue together using \cref{lem:polylog-concat}. In total, the data structure uses
    \begin{align*}
        \log\binom{U}{|S|}+O_c(|S|\cdot \log\log n)+O(n/(\log n)^c)
    \end{align*}
    bits of space.

    \paragraph{Answering queries.} The queries are answered as follows:
    \begin{itemize}
        \item For a forward query on key $x$, we first check the frontyard dictionary. If $h(x)_1$ exists in the frontyard dictionary and has value $h(x)_2$, then we return $h(x)_1$. If not, we check whether the backyard dictionaries contain $x$, and act accordingly.
        \item For a backward query, if $y$ is a frontyard hashcode (i.e., $y\in [b]$), then we query the frontyard dictionary to see if $y$ exists. If it exists and has value $v$, then we return $h^{-1}(y,v)$, otherwise we return $\bot$. The backyard hashcodes are handled by directly querying the backyard dictionaries.\qedhere
    \end{itemize}
\end{proof}

%% file: lower-bound.tex
\section{Lower Bounds}\label{sec:lower}

In this section, we present two lower bounds, showing that, in the parameter regimes specified below, the multiplicative redundancies of our data structures (both for entropy-encoded arrays and for prefix-coded arrays) are tight up to a $\log\log n$ factor.

We start with a lower bound for entropy-encoded arrays, showing that in the parameter regime of $|\Sigma| = O(\sqrt{n})$, and even if we condition on $H$ having a prescribed value up to a constant factor in the range $[n/\log^{O(1)}n, (1/100)n \log n]$, any constant-time solution must use space $(1 + \omega(1 / \log n)) \cdot H$.

\LbMain*

Our proof is by reducing from key-value dictionaries. Given a data structure for storing an entropy-encoded array of length $n$, we use it to implement a dynamic dictionary of universe size $n$. In the array, we let the $i$-th entry be $\bot$ if key $i$ is not in the dictionary, or let it be the value associated with key $i$ otherwise. We can then apply lower bounds for dynamic dictionaries to obtain lower bounds for entropy-encoded arrays. 

It is shown in \cite{li2023tight} that the following distribution of inputs is hard for key-value dictionaries:

\newcommand{\keyset}{K}
\newcommand{\insa}{a} %
\newcommand{\insai}[1][i]{a_{#1}} %
\newcommand{\deli}[1][i]{d_{#1}} %

\begin{algorithm}[H]
  \captionof{Distribution}{Hard Distribution}
  \label{alg_hard_dist}
  \DontPrintSemicolon
  \SetKwFunction{Query}{Query}
  \SetKwFunction{Insert}{Insert}
  \SetKwFunction{Delete}{Delete}
  Initialize an empty dictionary with capacity $n$, key-universe $[U]$ and value range $[V]$\;
  $\keyset \gets$ uniform random $n$-element subset of $[U]$\;
  Insert keys in $\keyset$ into the dictionary one by one, using $n$ insertions. The value associated to each key is randomly sampled from $[V]$\;
  \For{$i = 1$ to $n$} {
    $\deli[i] \gets$ a uniform random key in $\keyset$ that has not been removed\label{step_loop_begin}\;
    \Query{$\deli[i]$}\;
    \Delete{$\deli[i]$} from the dictionary\;
    $\insai[i] \gets$ a uniform random key in $[U]$ which is neither in $\keyset$ nor in the current dictionary\;
    \Insert{$\insai[i]$} to the dictionary\label{step_loop_end}. The value associated to $a_i$ is randomly sampled from $[V]$\;
  }
\end{algorithm}

\begin{theorem}[{\cite[Theorem 1.3]{li2023tight}}]
    \label{thm:li2023tight}
    Let $U\ge 3n$ and $V=U^{2+\Theta(1)}/n^2$. Suppose that there exists a dynamic dictionary in the word RAM model with word size $w=\Theta(\log U)$ that, for any sequence of operations sampled from \cref{alg_hard_dist}, performs each operation in constant time and correctly answers the queries (with high probability over the internal randomness of the dictionary). Then such a data structure must use
    \begin{align*}
        \log\binom{U}{n}+n\log V+\Omega(n \log^*n)
    \end{align*}
    bits of space in expectation.
\end{theorem}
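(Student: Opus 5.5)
The plan is to prove the bound by an \emph{encoding} (information-theoretic) argument over the random operation sequence of \cref{alg_hard_dist}. Suppose a constant-time dictionary uses $\log\binom{U}{n}+n\log V+R$ bits in expectation. We will construct a prefix-free encoding of the random choices of the hard distribution: the set $\keyset$, the values, and the sequences $(\deli[i],\insai[i])$. This encoding will be shorter than their entropy unless $R=\Omega(n\log^* n)$. The entropy of the state at the end of the distribution is essentially $\log\binom{U}{n}+n\log V$, and the memory image is the natural encoding of it. The task is therefore to show that, with constant-time operations, the memory image must carry $\Omega(n\log^*n)$ extra bits. We do this by exhibiting information that the encoder can describe more cheaply using the data structure's behaviour, so that $R$ must pay for it.

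The key steps, in order, are as follows.

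\emph{Step 1: locality.} Since each $\textsc{Insert}(\insai[i])$ and $\textsc{Delete}(\deli[i])$ probes $O(1)$ cells, the set of cells touched by an update is determined by $O(1)$ adaptively chosen addresses. Each update therefore changes only $O(w)$ bits of memory, at locations that depend on the random key. Because $\log(U/n)$ and $\log V$ are in general not integers, a single operation cannot rewrite its $\Theta(\log U)$ bits of new information without disturbing neighbouring information.

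\emph{Step 2: a single-scale lemma.} Fix a scale parameter $T$ and split the $n$ iterations of the loop into epochs of $n/T$ operations. Within an epoch, the memory changed by the updates must encode the fresh keys and values of that epoch. The encoder can cheaply specify which cells change: $O(\log T)$ bits per operation relative to a partition of memory into $T$ regions. A counting argument then shows one of two things. Either the data structure must leave $\Omega(n/\log T)$-type slack inside regions, or it must store $\Omega(1)$ bits per key of ``positional'' information that is not forced by the entropy. Either way, this scale contributes $\Omega(n)$ bits of redundancy in expectation.

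\emph{Step 3: combining scales.} Choose a tower of scales $T_1<T_2<\cdots$ with $T_{j+1}=2^{T_j}$. There are $\Theta(\log^*n)$ such scales below $n$. The aim is to show that the redundancies identified at different scales are disjoint, so that they add up to $\Omega(n\log^* n)$. The argument proceeds by induction on $j$: conditioned on the structure's behaviour at coarser scales, which the encoder writes down for free up to lower-order terms, the structure still faces an independent instance of the single-scale problem at scale $T_j$.

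I expect Step 3 to be the main obstacle. Showing that an adaptive data structure cannot reuse the same slack bits to serve several scales simultaneously requires a careful conditioning/round-elimination style argument. It also relies on the assumption $V=U^{2+\Theta(1)}/n^2$, which ensures that the values are large enough that per-operation rounding losses cannot be amortized away. The high-probability correctness requirement must be handled by charging failure events $O(1)$ bits in the encoding.
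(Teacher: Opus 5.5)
\paragraph{There is a genuine gap: the outline leaves its central step unproved, and its mechanism ignores the probe budget.}
The paper does not prove this statement; it imports it from \cite{li2023tight}. The proof of \cref{cor:li2023tight_augmented} summarizes how that argument goes. It sets up two one-way communication games, an outer and an inner game; in the inner game Bob ends up recovering the earlier memory state $C_{\text{bef}}$. It then shows that a dictionary beating the bound yields a protocol for one of the two games that sends fewer bits than that game's information-theoretic optimum. Your encoding framing is compatible with this in spirit, but what you have written is an outline, not a proof. Step 2 is asserted rather than derived: no counting argument is given. Its first branch (``$\Omega(n/\log T)$-type slack'') is also not quantified in a way that produces the $\Omega(n)$ bits per scale you then claim. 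Step 3, which you flag yourself as the main obstacle, is the entire difficulty. Without it you have at most one scale's worth of redundancy, $\Omega(n)$, not $\Omega(n\log^* n)$.

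More importantly, your proposed mechanism does not track how the time bound enters. The dictionary of \cite{bender2022optimal} achieves redundancy $O(n\log^{(k)}n)$ with $O(k)$-time updates and $O(1)$-time queries, for every $k\le\log^* n$. So each extra probe removes one level of logarithm. With $O(\log^* n)$-time updates the redundancy is already $O(n)$, even though each update touches only $O(\log^* n)$ cells. Conversely, the remark after the theorem indicates that for constant time the cited bound is really of the form $\Omega(n\log^{(k)}n)$, for a constant $k$ depending on the time bound. The redundancy is therefore dominated by a single coarse level, not accumulated as $\Omega(1)$ bits per key over $\log^* n$ scales. Any correct multi-scale argument must charge each scale it rules out to probes, most naturally by an induction on the number of probes. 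Your Step 3 says that, conditioned on coarser scales, the structure ``still faces an independent instance'' at scale $T_j$. That claim never refers to the probe budget, so as written it does not explain why it fails for an $O(\log^* n)$-time structure, where the conclusion is false.

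Finally, the rounding heuristic in Step 1 is a red herring. \cite{dodis2010changing} maintains arrays over non-power-of-two alphabets in $\lceil n\log|\Sigma|\rceil$ bits with $O(1)$-time reads and writes, so non-integral per-item information costs essentially nothing. The redundancy comes instead from keeping a random \emph{set} nearly optimally compressed under random insertions and deletions, where each deleted key is first located by a query. Likewise, the hypothesis $V=U^{2+\Theta(1)}/n^2$ gives $\log V=2\log(U/n)+\Theta(\log U)=\Theta(w)$; it is not there to block amortization of rounding losses.
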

We remark that $\log^*n$ can be replaced by any function that is smaller than $\log^{(k)}n$ for any constant $k$, where $\log^{(k)}n$ is the $k$-th iterated logarithm.

\begin{proof}[Proof of \cref{thm:arith_lb}]
    Assume for the sake of contradiction that there exists a data structure for storing an entropy-encoded array that uses $H+O(H\sqrt{\log ^*n}/\log n)$ bits of space when $H\in [0.9H_0,1.1H_0]$. We show that such a data structure can be used to implement a key-value dictionary. Let $U'=n$ be the universe size of the dictionary, and let $V'=n^{0.5}$ be the value range. We choose the number of keys $n'$ carefully, so that
    \begin{align*}
        \log\binom{U'}{n'}+n'\log V'=H_0\pm O(\log n).
    \end{align*}
    This implies that we should have $n'\le H_0/\log V'=2H_0/\log n$. Since $H_0\le (1/100)n\log n$, we always have that $n'\le n/3=U'/3$, which means that our parametrization satisfies \cref{thm:li2023tight}.

    Let $\Sigma=[|\Sigma|]$ denote an alphabet, where $|\Sigma|=V'+1$. To implement the dictionary, we store the following array in an entropy-encoded fashion, where each entry is in $\Sigma$: We identify each entry in our array with a key in the universe $[U']=[n]$. For the $i$-th entry, if $i$ exists in the dictionary, then $a_i$ is equal to the value of key $i$; otherwise $a_i=V'+1$. 
    
    Under this reduction, we can see that operations on the dictionary are easily translated into operations on the array, so it remains to bound the space usage. Fix a point in time, and let $s=\sum_{\sigma\le V'} f_\sigma$ denote the current number of keys in the set. The current entropy $H$ of the array can be bounded as
    \begin{align*}
        H&=\sum_\sigma f_\sigma\log(n/f_\sigma) \\
        &=f_{V'+1}\cdot \log(n/(n-s))+\sum_{\sigma\le V'}f_\sigma\cdot(\log(n/s)+\log(s/f_\sigma))\\
        &=\bk*{(n-s)\cdot\log(n/(n-s))+s\cdot\log(n/s)}+\sum_{\sigma\le V'}f_\sigma\log(s/f_\sigma).
    \end{align*}
    For inputs drawn from \cref{alg_hard_dist}, we always have $s=n'-1$ or $n'$. Therefore, the first term is always in the range (note that $\log(1+x)<x$ when $x>0$, and that $U'=n$)
    \begin{align*}
        n'\cdot \log(U'/n')+[-O(\log n),n'].
    \end{align*}
    As for the latter summation, a direct Chernoff bound shows that when the input is drawn from \cref{alg_hard_dist}, $f_\sigma$ is close to $s/V'$ for any $1\le \sigma\le V'$. Since $\log\binom{U'}{n'}+n'\log V'=H_0\pm O(\log n)$ by definition, we thus have that, with high probability over a random sequence drawn from \cref{alg_hard_dist}, $H\in H_0\pm O(n')=H_0\cdot (1\pm O(1/\log n))$ holds (at any point during the execution), which also implies that $H\in [0.9H_0,1.1H_0]$. By our assumption on the space-efficiency of our data structure, we obtain a dynamic dictionary that, on inputs drawn from \cref{alg_hard_dist}, uses $H_0+O(H_0\sqrt{\log^*n}/\log n)$ bits of space in expectation.

    In comparison, the lower bound of \cref{thm:li2023tight} implies that the data structure must use at least
    \begin{align*}
        &\log\binom{U'}{n'}+n'\log V'+\Omega(n'\log^*n')=H_0+\Omega(H_0\log^*n/\log n)
    \end{align*}
    bits of space, which is a contradiction.
\end{proof}

We also prove a similar lower bound in the prefix-code setting:

\begin{theorem}
    \label{thm:prefix_lb}
    Let $n$ be an integer. Let $H_0\in [100\cdot n,(1/100)\cdot n\log n]$ be a parameter of our choice. There exists a prefix code $\Sigma$ containing codewords of length at most $O(\log n)$ such that: Suppose that there exists a data structure in the word RAM model with $w=\Theta(\log n)$ that stores an array $a_1,\dots,a_n$ of $n$ symbols from $\Sigma$, and supports updates and queries in worst-case constant time. Then such a data structure must use
    \begin{align*}
        H+\omega(H/\log n)
    \end{align*}
    bits of space, where $H=\sum_{i=1}^{n}|a_i|$ is the sum of the lengths of the codewords. The lower bound holds even when we require that $H\in [(1/100)\cdot H_0,100\cdot H_0]$ at all times.
\end{theorem}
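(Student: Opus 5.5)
Like \cref{thm:arith_lb}, this follows by reduction from the dictionary lower bound \cref{thm:li2023tight}, but the codeword lengths must now stand in for the entropy. Assume for contradiction a constant-time structure for some prefix code using $H+O(H\sqrt{\log^*n}/\log n)$ bits whenever $H\in[H_0/100,100H_0]$. We build a dictionary with universe $U'=n$, value range $[V']$ with $V'$ a power of two to be fixed, and $n'$ keys. Entry $i$ stores a codeword encoding either ``absent'' or the value of key $i$. The code $\Sigma$ will contain a short codeword for $\bot$ and a family of $V'$ codewords of equal length $\ell$ for values, with $\log V'+(\text{cost of marking presence})\le \ell$ as tight as possible.

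The main obstacle is the empty positions. In the entropy setting an absent entry costs only $\log(n/(n-n'))$, which is well below one bit, but a prefix code charges at least one bit per symbol. That gives an additive $\ge n$ term, which for $H_0$ near $100n$ is a constant fraction of $H$. A key count of $n'$ keys is worth only $\log\binom{n}{n'}\approx n'\log(n/n')$ bits, so naive per-entry codes waste $\Theta(n)$ bits. This is far above the $\Omega(n'\log^*n)$ slack we need to detect. To fix it I would group positions. Each array entry represents a group of $g$ consecutive universe elements, and the prefix code over group-contents is a Shannon-style code designed so that $|codeword|$ is within $O(1)$ bits of $\log$ of its inverse probability under the distribution where each element is independently present with probability $p=n'/n$ and values are uniform. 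With $g$ chosen so that $g\cdot h(p)\gg 1$, the rounding loss is $O(1)$ per group. There are $n/g$ groups, and this loss must be $o(n'\log^*n)$. I would try $g$ large enough that $n/g=o(n')$ while codewords stay $O(\log n)$ bits. Because we need $n'\ge \Omega(H_0/\log n)$, it should suffice to take groups of $g\approx \log n/(\log V'+\log(n/n'))$ expected-constant size. The array then has $n/g\le n$ entries (padding the rest with a fixed symbol), or we rescale $n$. Each dictionary operation touches one group, so it is one array query plus one update, done in $O(1)$ time via lookup tables over $O(\log n)$-bit codewords. Codewords of rare atypical group contents (e.g.\ $\omega(1)$ keys in a group) must be capped at length $O(\log n)$. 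I would truncate the distribution, assigning those an $O(\log n)$-length escape code, and argue via Chernoff that they occur negligibly often under \cref{alg_hard_dist}.

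For accounting, I would set $n'$, $V'$ so that $\log\binom{n}{n'}+n'\log V'=\Theta(H_0)$, with $V'$ polynomial so that $V=U^{2+\Theta(1)}/n^2$ can be met after the same rescaling as in \cref{thm:arith_lb}. I would then show that with high probability throughout the hard sequence $H=\sum|a_i|$ equals $\log\binom{n}{n'}+n'\log V'+O(n/g)+o(n'\log^*n)$, and stays within $[H_0/100,100H_0]$. Comparing the assumed space $H+O(H\sqrt{\log^*n}/\log n)=\log\binom{U'}{n'}+n'\log V'+o(n'\log^*n)$ with the $\Omega(n'\log^*n')$ redundancy of \cref{thm:li2023tight} gives the contradiction. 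The delicate point, which I expect to take most of the work, is balancing $g$: small enough that codewords stay $O(\log n)$ and fit lookup tables, and large enough that the $\Theta(1)$-bit per-group rounding totals $o(n'\log^* n)$, across the whole range $H_0\in[100n,n\log n/100]$.
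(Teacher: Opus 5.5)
\textbf{There is a genuine gap in the lower part of the range of $H_0$, which is exactly what makes this theorem harder than \cref{thm:arith_lb}.}

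The array length is fixed at $n$, while $H_0$ may be as small as $100n$. Since $V'$ must be polynomial in $n$, each key costs $\Theta(\log n)$ bits. So the dictionary has only $n'=\Theta(H_0/\log n)$ keys, and the slack you must expose is $\Theta(n'\log^* n)=\Theta(H_0\log^* n/\log n)$. This is $o(n)$ unless $H_0=\omega(n\log n/\log^* n)$. Your requirements on $g$ then pull against each other, and none of the options works:
\begin{itemize}
\item \emph{Your concrete choice.} Taking $g\approx\log n/(\log V'+\log(n/n'))$ gives $g=O(1)$, hence $\Theta(n)$ groups, all but at most $n'$ of them empty. An empty group carries only $O(gn'/n)$ bits of information but costs at least one bit. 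That is an $\Omega(n)$ loss.
\item \emph{Requiring $n/g=o(n')$.} This puts $\omega(1)$ keys, hence $\omega(\log n)$ bits, into a typical group, which is incompatible with $O(\log n)$-bit codewords.
\item \emph{The compromise $gn'/n=\Theta(1)$.} This gives $\Theta(n')$ groups and an affordable $O(n')$ rounding loss, but an array of only $\Theta(n')$ entries.
\begin{itemize}
\item Padding the other $n-\Theta(n')$ entries with a fixed symbol adds at least one bit each to $H$ and nothing to $\mathrm{OPT}$. The assumed space bound $H+O(H\sqrt{\log^* n}/\log n)$ is then itself at least $\mathrm{OPT}+\Omega(n)$, which is consistent with the lower bound $\mathrm{OPT}+\Omega(n'\log^* n')$, so there is no contradiction.
\item Rescaling so that the array length $N$ equals the number of groups gives $H=\Theta(N\log N)$, i.e., only the top of the range.
\end{itemize}
\end{itemize}
In every variant you reach only $H_0=\omega(n\log n/\log^* n)$. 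The naive one-position-per-entry reduction already covers that range, so grouping buys nothing.

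\textbf{The missing idea} is to make each of the $n$ entries carry $\Theta(1)$ bits of incompressible information that the data structure must retain, together with a lower bound that charges for it. This is \cref{cor:li2023tight_augmented}: a dictionary stored alongside an independent immutable random variable $S$ still needs $\log\binom{U'}{n'}+n'\log V'+H(S)+\Omega(n'\log^* n')$ bits.

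The paper's parameters are as follows.
\begin{itemize}
\item It takes $U'=n$, $V'=n^{0.5}$, $n'=n/2^{k_{\max}}$, and $S=(X_1,\dots,X_n)$ with the $X_i$ i.i.d.\ on $[k_{\max}]$, taking value $k$ with probability proportional to $2^{-k}$.
\item An absent position $i$ is encoded as $0^{X_i-1}1$. A present one with value $v$ is encoded as $0^{k_{\max}}\circ\mathrm{str}(v)\circ 0^{X_i-1}1$.
\item All relevant probabilities are essentially dyadic. Hence $H=n'k_{\max}+n'\log V'+\sum_i X_i$ is at most $\log\binom{U'}{n'}+n'\log V'+H(S)+O(n')$ with high probability.
\item Choosing $k_{\max}=\log\Theta(n\log n/H_0)=O(\log\log n)$ places $H$ in $[H_0/100,100H_0]$ across the whole range.
\end{itemize}
No grouping or lookup tables are needed. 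The one bit per entry that a prefix code unavoidably charges is simply filled with information the lower bound already accounts for.
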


If we do not require that $H\in [(1/100)\cdot H_0,100\cdot H_0]$, then the exact same proof strategy as \cref{thm:arith_lb} works: We reduce from the key-value dictionary where $U'=n$, $n'=n/2$ and $V'=n^{0.5}$. For each $1\le i\le n$, the $i$-th codeword in our array is $0$ if key $i$ does not exist in the dictionary, and is $1\circ \text{str}(v)$ if $i$ exists and is associated with value $v\in [V]$ ($\text{str}(v)$ denotes the binary representation of $v$).

This strategy runs into trouble when we consider the case where $H\in [(1/100)\cdot H_0,100\cdot H_0]$, for instance when it is required that $H=O(n)$. If we wish to apply the same reduction as in \cref{thm:arith_lb}, then we must reduce from a dictionary that only contains $O(n/\log n)$ keys. However, in this reduction, the keys that are not in the dictionary contribute at least $\Omega(n)$ to the total space usage (because each entry must have length at least $1$), which is too costly. To handle this problem, we consider a generalization of \cref{thm:li2023tight}, which considers ``augmented dictionaries'', i.e., dictionaries that are stored together with an array.

\begin{corollary}
    \label{cor:li2023tight_augmented}
    Suppose that there exists a data structure that not only stores a dictionary as stated in \cref{thm:li2023tight}, but also stores a random variable $S$ that is independent of the dictionary and immutable (that is, $S$ is sampled during preprocessing and is fixed throughout the execution). The data structure only has to be able to recover $S$ at any point in time; that is, we do not require efficient access. Then such a data structure must use
    \begin{align*}
        \log\binom{U}{n}+n\log V+H(S)+\Omega(n\log^*n)
    \end{align*}
    bits of space in expectation. Here $H(S)$ denotes the binary entropy of $S$.
\end{corollary}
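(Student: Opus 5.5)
The plan is to reopen the compression argument behind \cref{thm:li2023tight} and show that storing the extra immutable variable $S$ simply adds $H(S)$ to every information-theoretic quantity in that proof. The lower bound of \cite{li2023tight} is, at its core, an encoding argument. One runs the hard distribution \cref{alg_hard_dist}, and from the memory contents at a random time, together with a short amount of advice, one decodes the dictionary contents. This shows that the expected memory size must exceed the entropy of what is decoded by $\Omega(n\log^* n)$. The refined version certifies not only the $\log\binom{U}{n}+n\log V$ bits of static content but also an extra $\Omega(n\log^*n)$ bits. Those extra bits come from the dynamic structure, namely the correlation between the memory layout and the recent operation history that the constant-time requirement forces.

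First, I will define an augmented data structure $\mathcal{D}'$ that stores both the dictionary and $S$. I view its memory $M$ at any fixed time as a random variable that jointly determines $S$. Second, I will rerun the encoding argument of \cite{li2023tight} on $\mathcal{D}'$, treating $S$ as part of the public randomness of the instance. Concretely, every entropy and mutual-information term in that proof will be conditioned on $S$. Because $S$ is independent of the operation sequence and of the dictionary, the hard distribution conditioned on $S=s$ is still exactly \cref{alg_hard_dist}. Moreover, for each fixed $s$, the machine obtained by hard-wiring $S=s$ is a valid constant-time dictionary. Applying \cref{thm:li2023tight} in this conditional form gives
\begin{align*}
  H(M\mid S)\ \ge\ \log\binom{U}{n}+n\log V+\Omega(n\log^*n),
\end{align*}
where the bound should really be read on the expected description length of $M$ given $S$.

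Third, I will combine this with the requirement that $S$ is recoverable from $M$. Since $S$ is a function of $M$, we have
\begin{align*}
  H(M)=H(M,S)=H(S)+H(M\mid S).
\end{align*}
The expected number of bits used is at least $H(M)$, up to the standard self-delimiting length accounting, and this yields the claimed bound.

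The main obstacle is the second step. The proof of \cref{thm:li2023tight} is stated as a bound on expected space, not on $H(M\mid S)$, and its encoding scheme may use memory-length information and mix bits in ways that do not obviously separate the part of memory devoted to $S$. I will handle this by noting that the argument already works by building a prefix-free encoder of the instance from the memory plus auxiliary bits. I will then give that encoder $S$ as shared side information: both encoder and decoder know $s$, so the code achieves length $\mathbb{E}|M| - H(S)$ relative to the conditional source. More precisely, I will first encode $S$ optimally, then encode the instance given $S$. I will check that the auxiliary bits and the $\Omega(n\log^*n)$ saving are unaffected by conditioning, because every probabilistic estimate in \cite{li2023tight} concerns only the operation sequence and the internal randomness, both independent of $S$. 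A possible subtlety is that $S$ might be much larger than the memory budget for the dictionary, which could break the cell-probe accounting. But the dictionary operations never need to touch the $S$ portion, and all dependence on $S$ is absorbed by conditioning, so I expect this to be harmless.
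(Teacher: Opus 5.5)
Your plan is sound in outline, but it does the bookkeeping dually to the paper. The paper never conditions on $S$. It keeps $S$ random and modifies only the inner communication game of \cite{li2023tight}, so that Bob must also output $S$. Bob gets $S$ for free, because the protocol already lets him reconstruct the full memory state $C_{\text{bef}}$. The protocol's cost is unchanged, since it is already charged for the space, including whatever bits encode $S$. Meanwhile the game's information-theoretic optimum rises by exactly $H(S)$, since $S$ is independent of Bob's input and of everything else he decodes. The original contradiction then goes through verbatim.

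You instead treat $S$ as shared side information and try to shave $H(S)$ off the message. As you note, Step 2 is not a black-box consequence of \cref{thm:li2023tight}. The hard-wired machine still stores $s$, so the theorem only bounds $\mathbb{E}[|M|\mid S=s]$, which gives nothing beyond $\log\binom{U}{n}+n\log V+\Omega(n\log^*n)$. Your repair, entropy-coding the protocol's message given $s$, does work, but only because of the same two facts the paper uses. First, the message determines $C_{\text{bef}}$ and hence $S$, so $H(\mathrm{msg}\mid S)=H(\mathrm{msg})-H(S)$. Second, Bob's input is independent of $S$. With these, both routes establish $H(\mathrm{msg}\mid\mathrm{Bob})\ge H(\mathrm{target}\mid\mathrm{Bob})+H(S)$, yours less directly.

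The paper's route buys simplicity in three ways:
\begin{itemize}
\item it needs no recompression of a structured message;
\item it needs no length-versus-entropy conversion, whereas your saving of $H(S)$ holds only on average over $s$ and up to $O(\log)$ terms;
\item it needs no hard-wiring.
\end{itemize}
Hard-wiring matters if the structure draws its randomness from $S$. In that case the high-probability guarantees need not hold for every fixed $s$, and you would need to average over good values of $s$.

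Also drop the remark that ``the dictionary operations never touch the $S$ portion.'' The structure may interleave $S$ with the dictionary arbitrarily, so the argument should use only that $S$ is a function of the memory state.
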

In other words, the dictionary does not gain any advantage when stored together with $S$.

\begin{proof}
    We only describe the necessary modifications to the proof of \cref{thm:li2023tight}. Firstly, we modify the hard distribution \cref{alg_hard_dist}, by additionally sampling $S$ at the beginning and storing it in the data structure.
    
    Secondly, we modify the definitions of the communication games. \cref{thm:li2023tight} was proven using two one-way communication games: The \emph{outer game} and the \emph{inner game}. It was shown that, if there exists an impossibly good dictionary, then this dictionary can be used to design a protocol for at least one of the communication games, that sends fewer bits than the information-theoretic optimum of the game, from which we can derive a contradiction.

    The definition of the outer game needs no modification. As for the inner game, we additionally require that Bob also recovers $S$ at the end. This is easy for Bob, since he recovers the data structure state $C_{\text{bef}}$ at the end of the protocol, and can directly obtain $S$. Moreover, this additional requirement retains the guarantee that the total length of communication is small compared to the information-theoretic optimum needed to send the input, which gives us the desired contradiction and proves the lower bound.
\end{proof}

\begin{proof}[Proof of \cref{thm:prefix_lb} using \cref{cor:li2023tight_augmented}]
    Similar to the proof of \cref{thm:arith_lb}, assume for the sake of contradiction that there exists a data structure for storing a prefix-coded array that uses $H+O(H\sqrt{\log ^*n}/\log n)$ bits of space when $H\in [(1/100)H_0,100\cdot H_0]$. Consider the following reduction from the augmented dictionary of \cref{cor:li2023tight_augmented} to storing an array of prefix codes.

    Let $k_{\text{max}}=O(\log\log n)$ be an integer to be determined later. Let $X_i$ ($1\le i\le n$) be i.i.d. random variables sampled from $[k_{\text{max}}]$, where $\Pr[X_i=k]=2^{-k}/(1-2^{-k_{\text{max}}})$. Note that
    \begin{align*}
        H(X_i)&=\sum_{k=1}^{k_{\text{max}}}-\Pr[X_i=k]\cdot \log \Pr[X_i=k] \\
        &=\bk*{1+ O\bk*{2^{-k_{\text{max}}}}}\cdot \bk*{\sum_{k=1}^{k_{\text{max}}}2^{-k}\cdot \bk*{k-O\bk*{2^{-k_{\text{max}}}}}} \\
        &=\Theta(1).
    \end{align*}
    Therefore the overall entropy $\sum_{i=1}^{n}H(X_i)=\Theta(n)$. 
    
    We initialize \cref{cor:li2023tight_augmented} with the following parameters: Let $U'=n$, $n'= U'/2^{k_{\text{max}}}$ and $V'=n^{0.5}$, where we assume that $n$ and $V'$ are powers of $2$ (which implies that $n'$ is an integer). We sample $X_1,\dots,X_n$, and let $S=(X_1,\dots,X_n)$ be the additional random variable as defined in \cref{cor:li2023tight_augmented}. The information-theoretic optimum of this augmented dictionary is
    \begin{align*}
        \text{OPT}=\log\binom{U'}{n'}+n'\log V'+\sum_{i=1}^{n}H(X_i).
    \end{align*}
    We set $k_{\text{max}}$ carefully so that $\text{OPT}\in [(1/4)H_0,4H_0]$. Since $H_0\in [100\cdot n,(1/100)\cdot n\log n]$, it can be shown that this is always possible, and that $k_{\text{max}}=\log\Theta(n\log n/H_0)=O(\log\log n)$.

    We store the information as follows: For $1\le i\le n$, if $i$ does not exist in the dictionary, then the $i$-th codeword is $\underbrace{000\cdots0}_{X_i - 1}1$. Otherwise, if the $i$-th key exists and has value $v\in [V]$, then the $i$-th codeword is $\underbrace{000\cdots0}_{k_{\text{max}}}\circ \,\text{str}(v)\circ\underbrace{000\cdots0}_{X_i - 1}1$, where $\text{str}(v)$ denotes the binary representation of $v$. When inserting or deleting a key, we simply have to update the corresponding entry in the array (note that the $X_i$'s are immutable). It is not hard to see that this is a valid implementation of an augmented dictionary.

    We bound the expected total number of bits as follows:
    \begin{align*}
        \E[H]=\E\Bk*{n'\cdot k_{\text{max}}+n'\cdot\log V'+\sum_{i=1}^{n}X_i}.
    \end{align*}
    Fixing one $X_i$, we have that
    \begin{align*}
        \E[X_i]=\sum_{k=1}^{k_{\text{max}}}\Pr[X_i=k]\cdot k=H(X_i)+O\bk*{2^{-k_{\text{max}}}}.
    \end{align*}
    Therefore the expected total size is (where we use the fact that $k_{\text{max}}=\log \Theta(n\log n/H_0)$)
    \begin{align*}
        \E[H]=n'\cdot k_{\text{max}}+n'\cdot\log V'+\sum_{i=1}^{n}H(X_i)+O(H_0/\log n).
    \end{align*}
    Further note that
    \begin{align*}
        \log\binom{U'}{n'}=n'\cdot \log\bk*{U'/n'}+O(n'),
    \end{align*}
    which is because of the standard inequality $(U'/n')^{n'}\le \binom{U'}{n'}\le (eU'/n')^{n'}$. Plugging in the definition $n'= U'/2^{k_{\text{max}}}$, we have that $\log\binom{U'}{n'}=n'\cdot k_{\text{max}}+O(n')$. We therefore have that $\E[H]\in [(1/4)H_0-O(H_0/\log n),4H_0+O(H_0/\log n)]$. Similar to the proof of \cref{thm:arith_lb}, we can apply a concentration bound on $\sum_{i=1}^{n}X_i$ to show that it is concentrated around $\sum_{i=1}^{n}\E[X_i]=\sum_{i=1}^{n}H(X_i)+O(n2^{-k_{\text{max}}})$. Since $n2^{-k_{\text{max}}}=O(H_0/\log n)$, we have that $H\in [(1/100)H_0,100 H_0]$ with high probability. Therefore, under our assumption, our data structure only uses
    \begin{align*}
        n'\cdot k_{\text{max}}+n'\cdot\log V'+\sum_{i=1}^{n}H(X_i)+O(H_0\sqrt{\log^*{n}}/\log n)
    \end{align*}
    bits of space in expectation. On the other hand, the lower bound of \cref{cor:li2023tight_augmented} is
    \begin{align*}
        n'\cdot k_{\text{max}}+n'\cdot\log V'+\sum_{i=1}^{n}H(X_i)+\Omega(n'\log^*{n'})
    \end{align*}
    bits, where the last term can be written as $\Omega(H_0\log^*n/\log n)$. This is a contradiction.
\end{proof}

%% file: applications.tex
\section{Applications}\label{sec:applications}

Finally, in this section, we present several applications of entropy-encoded arrays.

\subsection{Succinct dynamically resizable dictionary} 

Using entropy-encoded arrays as a building block, we now describe how to construct a dynamic dictionary, storing $n$ keys from a universe $[U]$, that offers a succinct space guarantee as a function of the current value of $n$, so long as $1\le n\le U/2$.

Note that, when $n \ll U$, one can use a previous construction by \cite{bender2022optimal} that uses $\log\binom{U}{n}+O(\log^{(k)}n)$ bits of space for $k=O(1)$ and is resizable. This construction fails to be succinct only when $n$ gets relatively close to $U$. Although there do exist succinct dynamic dictionaries in the regime where $n$ is close to $U$ (see e.g., \cite{bender2022optimal, arbitman2010backyard,raman2003succinct}), the known constructions either fail to be resizable or offer only amortized-expected time guarantees. Our construction is the first to offer high-probability $O(1)$-time guarantees while being resizable and succinct, even as $n$ gets asymptotically close to $U$.

\begin{theorem}
    \label{thm:succinct_resizable_dict}
    Let $U$ be an integer, and let $V \le \poly U$. There exists a data structure in the word RAM model with word size $w=\Omega(\log U)$ that maintains a key set $S\subseteq[U]$ of size $|S|\le U/2$ with each key corresponding to a value in $[V]$, while offering the following guarantees. The data structure supports insertions, deletions and membership queries in worst-case constant time. At any point, the data structure uses
    \begin{align*}
        (1+O(\log\log\log U/\log\log U))\cdot\bk*{\log\binom{U}{|S|}+|S|\cdot \log V}
    \end{align*}
    bits of space. At any point, both the space and time bounds are with high probability in $|S|$.
\end{theorem}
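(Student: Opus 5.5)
We reduce to \cref{thm:arithmetic-code-poly-n} by the trick sketched in the introduction. We split into two regimes according to the current size $n=|S|$. When $n\le U/\log^{C}U$ for a suitably large constant $C$ (the \emph{sparse regime}), we use the resizable dictionary of \cite{bender2022optimal}. It already uses $\log\binom{U}{n}+n\log V+O(n\log\log n)$ bits (or better) with high-probability $O(1)$-time operations. In this regime $\log\binom{U}{n}\ge n\log(U/n)=\Omega(n\log\log U)$, so the redundancy is an $O(\log\log\log U/\log\log U)$ fraction of the optimum once the $O(n\log^{(k)}n)$ term is compared against $n\log(U/n)$. When $n>U/\log^{C}U$ (the \emph{dense regime}), we store an array $A[1,U]$ over the alphabet $[V]\cup\{\texttt{null}\}$, setting $A[x]=v$ if key $x$ has value $v$ and $A[x]=\texttt{null}$ otherwise, in the entropy-encoded array of \cref{thm:arithmetic-code-poly-n} with array length $U$. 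Insertions, deletions and queries then become single array updates and queries.

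\textbf{Space in the dense regime.} Let $p=n/U\in(1/\log^C U,1/2]$. Group all non-null symbols together. The entropy then splits as
\[
H(A)=U\,h(p)+\sum_{v\in[V]} f_v\log(n/f_v)\le U\,h(p)+n\log V ,
\]
where $h$ is binary entropy. Moreover $U h(p)=\log\binom{U}{n}+O(\log U)$. Applying \cref{thm:arithmetic-code-poly-n} with $n_{\text{arr}}=U$, the space is
\[
\log\binom{V+1}{m}+(1+O(\log\log U/\log U))H(A)+O(U/\log^{c}U).
\]
We bound the three terms separately. For the first, $\log\binom{V+1}{m}\le m\log(eV/m)$ is at most $O(n\log V)$ in the worst case, so bounding it naively is not enough. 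Instead, we apply the argument of \cref{clm:distinct_symbols_cheap} to the subarray of non-null entries: each distinct value contributes at least roughly $\log(n/f_v)$ to $H$. Even so, $m\log V$ can be comparable to $n\log V$ when all values are distinct. This is acceptable only if the $\log\binom{|\Sigma|}{m}$ term is charged jointly with the value entropy. When $m$ is close to $n$ and values are distinct, $\log\binom{V}{m}+\sum f_v\log(n/f_v)\approx m\log(V/m)+m\log(n)\approx n\log V$, up to lower-order terms. So the correct plan is to show
\[
\log\binom{V}{m}+\sum_{v}f_v\log(n/f_v)\le n\log V+O(n)
\]
directly, by a counting inequality. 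For the third term, $O(U/\log^cU)\le O(n/\log^{c-C}U)$, which is negligible for $c>C+1$. Altogether, the overhead relative to $\log\binom{U}{n}+n\log V\ge n\log(U/n)\ge n$ is $O(n)$ plus $O(\log\log U/\log U)$ times the total. An $O(n)$ additive overhead is too large for small $V$, so I would tighten the counting bound above to $n\log V+O(m)$, and use that $m\le n$ and the optimum is at least $n\log(2U/n)$. This step is where I expect trouble; if it fails, I would fall back on storing only $\texttt{null}$-vs-present in the entropy array and using the values' fixed uniform distribution via \cref{thm:static-arithmetic-code-alg}.

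\textbf{Resizing.} The main obstacle is transitioning between regimes in worst-case $O(1)$ time while staying succinct during migration. I would use hysteresis: switch to the dense representation when $n$ crosses $2U/\log^CU$, and back to the sparse one when $n$ drops below $U/\log^CU$. At each switch, migrate the keys incrementally in the background, with $O(1)$ work per operation. Reaching one threshold from the other takes $\Omega(U/\log^CU)$ operations, and this is enough to move all keys (the dense side allows iteration over non-null positions through the B-tree blocks). During migration, the keys are split between the two structures, and each structure is succinct for its own content. The sum of the two optima exceeds $\log\binom{U}{n}+n\log V$ by at most $O(n)$, since $\log\binom{U}{a}+\log\binom{U}{b}\le\log\binom{U}{a+b}+O(a+b)$. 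This $O(n)$ is within the allowed $O(\log\log\log U/\log\log U)$ fraction near the threshold, because there $\log(U/n)=\Theta(\log\log U)$. Finally, the guarantee of \cref{thm:arithmetic-code-poly-n} holds with high probability in $U$, and in the dense regime $n\ge U/\log^CU$, so this is also with high probability in $|S|$.
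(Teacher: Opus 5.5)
Your overall plan is the paper's proof: the dictionary of \cite{bender2022optimal} below a threshold, a length-$U$ entropy-encoded array with a null symbol above it, hysteresis, background migration, and the bound $\log\binom{U}{k}+\log\binom{U}{|S|-k}\le\log\binom{U}{|S|}+O(|S|)$ during migration. Using the threshold $U/\log^C U$ instead of the paper's $U/\log^2 U$ makes no difference. Two pieces, however, are missing or incomplete.

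\textbf{Enumerating the sparse dictionary.} You say how to iterate over keys on the dense side. You do not say how to extract keys from the \cite{bender2022optimal} dictionary during a sparse-to-dense switch.

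Keys must be deleted from the sparse structure as they are moved, at $O(1)$ cost each. Copying everything and discarding the sparse structure at the end would store every key twice during migration, which breaks succinctness.

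The danger is in how you choose the next key. If you scan the dictionary's internal memory, the sequence of deletions depends on its internal randomness. Its high-probability time and space guarantees hold only for operation sequences independent of that randomness, so they would no longer apply.

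The paper fixes this with a \texttt{RandomChoice} operation that returns a uniformly random current key in $O(1)$ time. Migration then repeats \texttt{RandomChoice} followed by a deletion, which makes the deletion order oblivious. Relatedly, the length-$U$ array must have $o(U/\log^C U)$ chunks, so that building or tearing it down fits within the migration budget.

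\textbf{The $\log\binom{V+1}{m}$ term.} Your tightened bound $n\log V+O(m)$ is correct, and it needs no counting argument. But the way you propose to absorb the $O(m)$ term does not work. The facts $m\le n$ and $\log\binom{U}{n}\ge n\log(U/n)$ alone bound the ratio $O(m)/\text{OPT}$ only by a constant when $n=\Theta(U)$. What makes the ratio small is that $m-1\le V$, so a large $m$ forces a large $n\log V$.

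Concretely, let $m'=m-1\le\min(n,V)$ be the number of distinct values present. Using $\binom{V+1}{m'+1}\le (V+1)\binom{V}{m'}$, $\binom{V}{m'}\le (eV/m')^{m'}$, and $\sum_v f_v\log(n/f_v)\le n\log m'$,
\[
\log\binom{V+1}{m}+\sum_{v\in[V]}f_v\log\frac{n}{f_v}\le \log(V+1)+m'\log\frac{eV}{m'}+n\log m'\le n\log V+m'\log e+O(\log U).
\]
The last step uses $m'\log(V/m')\le n\log(V/m')$.

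For $V\ge 2$ we have $\min(n,V)=O(n\log V/\log n)$; for $V=1$, $m'\le 1$. In the dense regime $\log n=\Theta(\log U)$, so the extra $m'\log e$ bits are an $O(1/\log U)$ fraction of $n\log V$. No fallback is needed. The paper itself only asserts this bound by analogy with the lower-bound reduction, so this step is worth writing out explicitly.
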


\begin{proof}
    We first present a dictionary that runs in amortized constant time, and then show how to deamortize it.
    
    \paragraph{Combining two cases.} Our dictionary is obtained by combining two techniques:
    \begin{itemize}
        \item \textbf{Small mode:} When $|S|=O(U/\log^2U)$, the keys are stored using the dictionary of \cite{bender2022optimal}, costing $\log\binom{U}{|S|}+|S|\cdot \log V+O(|S|\log\log \log U)$ bits of space. The extra $O(|S|\log\log \log U)$ term is acceptable since $(\log\log \log U/\log\log U)\cdot \log\binom{U}{|S|}=(\log\log \log U/\log\log U)\cdot \Omega(|S|\cdot \log(U/|S|))=\Omega(|S|\log\log \log U)$. 
        \item \textbf{Large mode:} When $|S|=\Omega(U/\log^2U)$, we use \cref{thm:arithmetic-code-poly-n} to store an array of length $U$, where the $i$-th entry is $v$ if $i\in S$ and $v$ is the value associated with $i$, and $V+1$ otherwise. This costs
        \begin{align*}
            (1+O(\log \log U/\log U))\bk*{\log\binom{U}{|S|}+|S|\cdot \log V}+U/\poly\log U
        \end{align*}bits of space (this is exactly the same as the reduction in the lower bound proof \cref{thm:arith_lb} when $V=\sqrt{U}$, and is a simple generalization when $V\gg U$). Here, the first term dominates the second because $|S|=\Omega(U/\log^2 U),|S|\le U/2$. Note that this space bound holds even when $|S|$ is small. The only problem is that the $U/\poly\log U$ term may be too large compared with $|S|$.
    \end{itemize}

    During the execution, as $|S|$ changes, we may need to switch between the two modes. Initially, the set $S$ is empty, and we are in the small mode. After handling an update, if we are currently in the small mode and $|S|>2U/\log^2U$, then we switch to the large mode. Similarly, if we are currently in the large mode, a switch to the small mode is performed when $|S|<U/\log^2U$.

    Note that by our definition, there must be $U/\log^2U$ operations between any two switches (indeed, after we switch to the large case, we need to delete at least $U/\log^2U$ keys to initiate the next switch, and vice versa). Thus, to show that switches take constant amortized time, it remains to show that each switch can be performed in $\Theta(U/\log^2 U)$ time.

    \paragraph{Performing switches in $\Theta(U/\log^2 U)$ time.} When we switch from the large case to the small case, we first enumerate all the keys currently stored in the array by traversing the B-tree (\cref{lem:B-tree}) of every chunk. For each chunk, after enumerating all the keys, we remove the keys from the array and move them to the dictionary. This costs $O((\text{\# of keys})+(\text{\# of chunks}))$ time. When we perform a switch, it must be the case that the number of keys is $|S|=\Theta(U/\log^2U)$, and the number of chunks can be set to be $o(U/\log^2U)$. Thus, the switch can be performed in $O(U/\log^2U)$ time.

    A switch from the small case to the large case is handled similarly. The only difference lies in how we enumerate the keys. Note that, for the dictionary in \cite{bender2022optimal}, one can straightforwardly implement an additional operation $\texttt{RandomChoice}$, which returns a uniformly random key in the dictionary in constant time. Using this, we can enumerate the keys by repeatedly applying $\texttt{RandomChoice}$, and removing the returned key from the dictionary.
    
    Note that it is essential for the above construction that $\texttt{RandomChoice}$ returns a random key, since otherwise, when we remove the returned keys from the dictionary, our choice of the deleted key might depend on the internal randomness of the dictionary. In this case, there would no longer be any guarantee on the performance of the dictionary.
    
    \paragraph{Deamortization.} To deamortize the switches, we use the same approach as in \cref{thm:arithmetic-code-alg}. That is, we run the switches in the background, using $O(1)$ time per operation. Since there must be $U/\log^2U$ operations between any two switches, we can guarantee that each switch is fully completed before the next one starts.

    As for the space usage: We show that, at any point during a switch, our data structure is space-efficient. At any point during a switch, each key is either stored in the small mode or in the large mode. When $k$ keys are stored in the small mode and $|S|-k$ are in the large mode, the total space usage is
    \begin{align*}
        &(1+O(\log\log \log U/\log\log U))\cdot \Big(
        \log\binom{U}{k}+k\cdot \log V \\
        &\qquad+\log\binom{U}{|S|-k}+(|S|-k)\cdot \log V\Big)+U/\poly\log U,
    \end{align*}
    where we can ignore the $U/\poly\log U$ term because we always have $|S|=\Theta(U/\log^2U)$ when undergoing a switch. It remains to compare $\log\binom{U}{k}+\log\binom{U}{|S|-k}$ with $\log\binom{U}{|S|}$. Using the classical inequality $(n/k)^k\le \binom{n}{k}\le (en/k)^k$, we have
    \begin{align*}
        \log\binom{U}{k}+\log\binom{U}{|S|-k}&\le O(|S|)+k\log(U/k)+(|S|-k)\log (U/(|S|-k)) \\
        &\le O(|S|)+|S|+|S|\log (U/|S|),\tag{Jensen's inequality}
    \end{align*}
    and
    \begin{align*}
        \log\binom{U}{|S|}\ge |S|\log (U/|S|).
    \end{align*}
    We thus have
    \begin{align*}
        \log\binom{U}{k}+\log\binom{U}{|S|-k}=(1+O(1/\log \log U))\log\binom{U}{|S|},
    \end{align*}
    which concludes the proof of space-efficiency.
    
\end{proof}

\subsection{Space-efficient fingerprint filters}

Our second application is to constructing succinct $O(1)$-time fingerprint filters in the parameter regime where $\epsilon^{-1} \le \polylog n$ and $1-\epsilon=\Omega(1)$. Recall that, in a fingerprint filter, every key in some set $S$ is hashed to a fingerprint in $[n\delta^{-1}+O(1)]$\footnote{The additive $O(1)$ term is not just for rounding $n\delta^{-1}$ to an integer. If we have exactly $n\delta^{-1}$ fingerprints, then our false positive rate would be slightly larger than $\epsilon$. This $O(1)$ is not large: For sufficiently large $n$, $\lceil n\delta^{-1}+1\rceil$ fingerprints suffice.}, where $\delta = \ln(1/(1 - \epsilon))$. This guarantees that, for a key $x\in [U]$ not in the set $S$, the probability that some key in $S$ shares its fingerprint is at most $\epsilon$. The fingerprint filter is then responsible for storing the multiset of fingerprints, while supporting insertions, deletions, and membership queries.

Let $C_i$ be the number of elements with fingerprint $i$. Then the distribution of the variables $C_i$ is very closely approximated by $\text{Pois}(\delta |S| / n)$. We can obtain a succinct filter by simply storing the variables $C_i$ directly in an entropy-encoded array. This is captured by the following theorem:

\begin{theorem}
    \label{thm:filter_ub}
    Let $n$ be an integer, let $U=\poly n$ and let $\epsilon=\Omega(1/\poly\log n)$, with $\epsilon \in (0, 1)$, be a parameter such that $1-\epsilon=\Omega(1)$. There exists a dynamic fingerprint filter in the word RAM model with word size $w=\Omega(\log U)$ that maintains a set $S\subseteq[U]$ of size at most $n$. This filter supports insertions, deletions and approximate membership queries, where the false positive rate is at most $\epsilon$, by storing a multiset of fingerprints in the range $[n \delta^{-1}+O(1)]$, where $\delta=\ln(1/(1-\epsilon))$. (Note that such a $\delta$ satisfies $\epsilon\le \delta=O(\epsilon)$.) The filter uses
    \begin{align}
        \label{eq:filter_bound}
        (1+O(\log\log n/\log n))\cdot n\delta^{-1}H(\text{Pois}(\delta))+n/\poly\log n
    \end{align}
    bits of space, and answers queries in worst-case constant time. Here $H(\text{Pois}(\delta))$ denotes the binary entropy of $\text{Pois}(\delta)$. Both the space and time bounds are with high probability in $n$.
\end{theorem}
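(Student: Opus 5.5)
The plan is to store the counter array $C_1,\dots,C_N$, with $N=\lceil n\delta^{-1}+1\rceil$, directly in the entropy-encoded array of \cref{thm:arithmetic-code-poly-n}. The alphabet is $\Sigma=\{0,1,\dots,n\}$, shifted by one to fit $[|\Sigma|]$. Each operation on a key $x$ is translated into array operations. An insertion of $x$ computes the fingerprint $h(x)\in[N]$, queries $C_{h(x)}$, and updates it to $C_{h(x)}+1$. A deletion decrements $C_{h(x)}$. A membership query reports whether $C_{h(x)}>0$. Each operation therefore takes $O(1)$ array operations, so it runs in worst-case constant time with high probability. The false-positive guarantee is the standard fingerprint-filter property: $N\ge n\delta^{-1}+1$ fingerprints give false-positive rate at most $1-(1-1/N)^{|S|}\le 1-e^{-\delta}=\epsilon$. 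The array has length $N=\Theta(n/\epsilon)=n\polylog n$, so $\log N=\Theta(\log n)$.

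Plugging into \eqref{eq:ourbound} gives space
\[
\log\binom{|\Sigma|}{m}+(1+O(\log\log n/\log n))H(C)+N/\polylog n .
\]
I handle the first and last terms as follows. Since $\epsilon^{-1}\le\polylog n$, we have $N/(\log N)^{c'}\le n/\polylog n$ once the constant $c'$ is chosen large enough. With high probability every $C_i\le O(\log n)$, so $m=O(\log n)$ and $\log\binom{|\Sigma|}{m}=O(\log^2 n)$, which is negligible.

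The main step is to show that, with high probability and at all times, the empirical entropy satisfies $H(C)\le N\cdot H(\text{Pois}(\delta|S|/n))+n/\polylog n$. It should also be at most $n\delta^{-1}H(\text{Pois}(\delta))\,(1+O(\log\log n/\log n))+n/\polylog n$. For the second inequality I would use that $N H(\text{Pois}(\lambda))$, as a function of $\lambda=\delta s/n\le\delta$, is increasing. The off-by-$O(1)$ in $N$ changes things only by $O(\log n)$.

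For the first inequality, the vector $(C_i)$ is a balls-into-bins multinomial with $|S|$ balls, where fingerprints are random and independent of the filter's internal randomness. I would compare its empirical frequency distribution with $\text{Pois}(\lambda)$ using the appendix on Poisson approximations. The idea is to Poissonize: independent $\text{Pois}(\lambda)$ counts conditioned on their sum behave like the multinomial. Then, for each value $k\le O(\log n)$, McDiarmid/Chernoff concentration shows that $f_k=N\Pr[\text{Pois}(\lambda)=k]\pm\sqrt{N}\polylog n$.

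Next I convert this into an entropy bound. The empirical entropy is $H(C)=N\cdot H(f/N)$, and entropy is continuous. Values $k$ with tiny probability contribute at most $f_k\log(N/f_k)$, which stays small for $f_k\le \sqrt N\polylog n$. Summing over $O(\log n)$ values gives additive error $\sqrt N\polylog n\ll n/\polylog n$. A union bound over the $\poly n$ operations then makes the bound hold at all times. Because the hash function is chosen independently of the filter's state and the adversary is oblivious, the current multiset is always a fresh multinomial sample, so per-time concentration suffices.

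I expect this Poisson-approximation and concentration step to be the main obstacle. The issue is keeping the error additive $n/\polylog n$ rather than multiplicative, especially for small $\delta$, where the entropy $n\delta^{-1}H(\text{Pois}(\delta))\approx n\log(e/\delta)$ is still $\Omega(n)$, so the additive error is acceptable. The fallback is the cruder bound $H(C)\le\log\binom{N+|S|-1}{|S|}$-style counting of the distinct count vectors, combined with the fact that this matches $NH(\text{Pois})$ up to $O(\log n)$ terms.
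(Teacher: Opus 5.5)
Your main line is sound, but it treats the fingerprint hash $h$ as a truly random function. That is a genuine gap for a word-RAM space bound: a random $h:[U]\to[N]$ with $U=\poly n$ takes $\Theta(U\log N)\gg n$ bits to store. Both of your key steps also rely on full independence:
\begin{itemize}
\item the false-positive computation $1-(1-1/N)^{|S|}$ needs $h(x)$ to be independent of all of $h(S)$ (pairwise independence only yields the bound $|S|/N$, which can be close to $\delta>\epsilon$);
\item McDiarmid needs the $|S|$ ball positions to be mutually independent.
\end{itemize}
The paper closes this with Dietzfelbinger's splitting trick. A $\polylog n$-wise independent hash sends keys to $n^{1/2}$ mega-buckets, and each mega-bucket gets its own filter sized for $n^{1/2}+n^{1/3}$ keys (none overflows w.h.p.). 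A single $O(n^{1/2})$-wise independent, $O(1)$-time hash (Siegel type) is shared by all mega-buckets, so it is fully random on the keys of any one mega-bucket. Your argument then goes through inside each mega-bucket: its array has length $\Theta(n^{1/2}\delta^{-1})$, so the logarithm is still $\Theta(\log n)$, and the concentration error is $n^{1/4}\polylog n$. A union bound over the $n^{1/2}$ mega-buckets finishes. This step has to be added.

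Apart from that gap, your high-probability argument takes a different route from the paper's. The paper first bounds space in expectation, using that the empirical entropy satisfies $H\le\sum_i-\log D(C_i)$ for every fixed $D$. Taking $D$ to be the law of $C_1$ gives $\E[H]\le N\,H(C_1)$ with $H(C_1)\approx H(\text{Pois}(\delta))$. It then upgrades to high probability by hashing into $n/\log^{10}n$ buckets, each with its own small filter, and concentrating over buckets. You instead concentrate the frequency profile $(f_k)_k$ directly and transfer this to entropy via continuity of $x\mapsto x\log(N/x)$. Monotonicity of $H(\text{Pois}(\lambda))$ in $\lambda$ then covers $|S|<n$, a point the paper leaves implicit. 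The paper's trick is shorter. Yours gives a per-time w.h.p.\ bound on one array of length $\Theta(N)$ and never has to run the array structure on $\polylog n$-length pieces. No Poissonization is needed: McDiarmid applies directly to the independent ball positions (moving one ball changes $f_k$ by at most $2$), and $H(\text{Bin}(s,1/N))\le H(\text{Pois}(s/N))$ (Harremo\"es) handles the comparison.

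Finally, drop your fallback; it does not work. Up to $O(\log N)$, $\log\binom{N+s-1}{s}$ equals $N$ times the entropy of the geometric law with mean $\lambda=s/N$. That entropy exceeds $H(\text{Pois}(\lambda))$ by $\Theta(\lambda^2)$ for $\lambda=O(1)$ (about $0.12$ bits at $\lambda=1$). At $|S|=n$ this is $\Theta(n\delta)$ extra bits, a constant-factor loss when $\epsilon$ is constant, so it cannot give \eqref{eq:filter_bound}.
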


To get some intuition for the space bound of $n\delta^{-1}H(\text{Pois}(\delta))$, we present the following claim. The proof is deferred to \cref{app:poisson}.

\begin{restatable}{claim}{Poisson}
    \label{clm:connect_delta_with_eps}
    For any $0<\epsilon<0.9$, we have
    \begin{align*}
        n\delta^{-1}H(\text{Pois}(\delta))\in [n\log\epsilon^{-1}, n\log\epsilon^{-1}+n\log e].
    \end{align*}
\end{restatable}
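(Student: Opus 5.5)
The plan is to expand the Poisson entropy in closed form and reduce both inclusions to elementary one-variable inequalities in $\delta$. Write $p_k=e^{-\delta}\delta^k/k!$ and $K\sim\text{Pois}(\delta)$. Then $-\log p_k=\delta\log e-k\log\delta+\log k!$, and taking expectations with $\E[K]=\delta$ gives
\begin{align*}
  \delta^{-1}H(\text{Pois}(\delta))=\log e+\log(1/\delta)+\delta^{-1}\E[\log K!].
\end{align*}
It therefore suffices to show
\begin{align*}
  0\le \log e-\log(\delta/\epsilon)+\delta^{-1}\E[\log K!]\le \log e .
\end{align*}
Throughout I will use that $\epsilon=1-e^{-\delta}$ and that $\epsilon<0.9$ means $\delta<\ln 10\approx 2.303$.

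\textbf{Lower bound.} Since $\E[\log K!]\ge 0$, it is enough to show $\delta/\epsilon\le e$. The map $\epsilon\mapsto \ln(1/(1-\epsilon))/\epsilon$ is increasing on $(0,1)$, because its power series $\sum_{j\ge1}\epsilon^{j-1}/j$ has positive coefficients. At $\epsilon=0.9$ its value is $\ln 10/0.9\approx 2.56<e$. This settles the left inclusion.

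\textbf{Upper bound.} Here I need $\E[\ln K!]\le \delta\ln(\delta/\epsilon)$, after converting $\log$ to $\ln$ on both sides by the same factor. I will proceed in three steps.
\begin{enumerate}
  \item Bound the left side by a factorial moment. I will check $\ln k!\le \binom{k}{2}\ln 2$ for all $k\ge0$. It holds with equality at $k\le 2$. For $k\ge3$, the increment $\ln(k+1)$ of the left side is at most the increment $k\ln 2$ of the right side. Since $\E\binom{K}{2}=\delta^2/2$, this yields $\E[\ln K!]\le (\ln 2/2)\,\delta^2$.
  \item Bound the right side from below. Set $x=\delta/2$. Then $\delta/\epsilon=\delta/(1-e^{-\delta})=e^{x}\cdot x/\sinh x$. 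Combining this with $\ln(\sinh x/x)\le x^2/6$ (compare the series termwise) gives $\ln(\delta/\epsilon)\ge \delta/2-\delta^2/24$.
  \item Compare. It remains to verify $(\ln2/2)\delta\le \delta/2-\delta^2/24$, that is, $\delta\le 12(1-\ln 2)\approx 3.68$. This holds since $\delta<2.31$.
\end{enumerate}

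The main obstacle is this upper bound. For small $\delta$ both sides behave like $\Theta(\delta^2)$, so the estimate must be tight to second order. Crude bounds such as $\log(\delta/\epsilon)\le \delta\log e$ go in the wrong direction. The factorial-moment bound in step 1 and the $\sinh$ rewriting in step 2 are chosen precisely so that the leading constants $\ln2/2<1/2$ separate. If the inequality $\ln(\sinh x/x)\le x^2/6$ proves awkward to justify, the fallback is to verify the needed inequality numerically on the compact range $\delta\in[1,\ln 10]$ and use Taylor expansions only for $\delta<1$.
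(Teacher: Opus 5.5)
Your proof is correct and follows the paper's argument. You use the same closed-form expansion of $\delta^{-1}H(\text{Pois}(\delta))$, reduce the lower bound to $\E[\log K!]\ge 0$ together with $\delta/\epsilon\le e$, and use the same bound $\log_2 k!\le k(k-1)/2$ to reduce the upper bound to $\log_2(\delta/\epsilon)\ge\delta/2$; beyond that, you carry out both one-variable verifications the paper omits (monotonicity of $\ln(1/(1-\epsilon))/\epsilon$, and the $\sinh$ estimate $\ln(\delta/\epsilon)\ge\delta/2-\delta^2/24$), and both are valid.
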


\begin{proof}[Proof of \cref{thm:filter_ub}]
    We first present a filter that assumes access to free randomness, and achieves the space bound \eqref{eq:filter_bound} only in expectation. Later, we will remove these assumptions using Dietzfelbinger's splitting trick \cite{Dietzfelbinger2009Applications, Siegel2004Universal}.
    
    Fix $n,\epsilon$, and let $m$ be the smallest integer such that $(1-1/m)^{n}\ge 1-\epsilon$. We use a random hash function $h:[U]\to[m]$ to hash the keys into $m$ hashes. Our data structure then stores an array of $m$ entries in an entropy-encoded fashion, where the $i$-th entry stores $C_i$, which is the number of real keys that are hashed to $i$.

    Insertions and deletions are handled by updating the value $C_{h(x)}$ (where $x$ is the queried key). When answering a membership query on a key $x\in [U]$, we check the value $C_{h(x)}$, and output yes if $C_{h(x)}\ne 0$. To see why we need $(1-1/m)^{n}\ge 1-\epsilon$, note that if $x$ is not in the set, then the probability of us falsely reporting a positive is
    \begin{align*}
        1-\bk*{1-\frac 1m}^n\le \epsilon,
    \end{align*}
    which is acceptable.

    We store the array $C_1,\dots,C_m$ using \cref{thm:arithmetic-code-poly-n}. Since \cref{thm:arithmetic-code-poly-n} encodes the array under the empirical distribution, we have that for any distribution $D$ over the symbols $\{0,1,\dots,n\}$, the encoding length $H$ of \cref{thm:arithmetic-code-poly-n} is at most $\sum_{i=1}^{m}-\log D(C_i)$. Since each $C_i$ is identically distributed, we can set $D$ to be the distribution of $C_1$, and bound $H$ as $m\cdot H(C_1)$.
    
    When $n$ is large, it can be shown that $m=n\delta^{-1}+O(1)$, so the distribution of each $C_i$ is closely approximated by $\text{Pois}(\delta)$. In particular, we have $H(C_i)=(1\pm O(1/\poly\log n))\cdot H(\text{Pois}(\delta))$. Thus, we can bound the cost of storing the array as
    \begin{align*}
        (1+O(\log\log n/\log n))\cdot n\delta^{-1}H(\text{Pois}(\delta))+n/\poly\log n
    \end{align*}
    bits in expectation. We have $C_i=O(\log n)$ for every $i$ with high probability; therefore, the cost of storing nonzero frequencies in the space bound of \cref{thm:arithmetic-code-poly-n} is negligible.

    Finally, we remove the two assumptions made at the start of the proof.
    \begin{itemize}
        \item To improve the space guarantee to high probability, we first hash the keys into $n/\log^{10}n$ buckets. With high probability in $n$, every bucket receives at most $n'=\log^{10}n+\log^{8}n$ keys. We then build a filter for each bucket (built to handle up to $n'$ keys). These filters are concatenated using \cref{lem:polylog-concat}. Since the filter of each bucket independently satisfies the bound of \eqref{eq:filter_bound} (with $n$ replaced by $n'$) in expectation, and uses space at most $\polylog n$ with high probability, a concentration bound over all buckets shows that the total space usage satisfies \eqref{eq:filter_bound} with high probability.
        \item Building on top of the previous improvement, to remove the requirement of fully random hash functions, we hash the keys into $n^{1/2}$ mega-buckets (using an $O(1)$-time $\poly\log n$-wise independent hash function from \cite{Siegel1989Universal}), and build a filter for each mega-bucket, which can handle $n''=n^{1/2}+n^{1/3}$ keys. Note that we use the improved filter for each mega-bucket, which assumes access to free randomness and satisfies \eqref{eq:filter_bound} (with $n$ replaced by $n''$) with high probability. Next, we sample an $O(n^{1/2})$-wise independent hash function, and use it in place of the random hash functions of the filters (i.e., we use the same hash function for each mega-bucket). This removes the requirement for random hash functions, and also preserves the high-probability space guarantee.
    \end{itemize}
    This concludes our construction.
\end{proof}

\subsection{Quotient filters} 

Finally, we also give an alternative filter construction, in which we show how to turn quotient filters---a classical way of building a fingerprint filter that is not, a priori, succinct---into \emph{succinct} fingerprint filters by simply storing the filter directly in an entropy-encoded array.

\begin{theorem}
     Let $n$ be an integer, let $U=\poly n$ and let $1/\sqrt{n}<\epsilon<1$ be a parameter such that $1-\epsilon=\Omega(1)$. There exists a modified quotient filter in the word RAM model with word size $w=\Omega(\log U)$ that maintains a set $S\subseteq[U]$ of size at most $n$, supporting insertions, deletions and approximate membership queries, where the false positive rate is at most $\epsilon$. The filter uses
    \begin{align}
        \label{eq:quotient_filter_bound}
        &\bk*{1+O\bk*{\frac 1{\sqrt{\log n}}}}\cdot n\delta^{-1}H(\text{Pois}(\delta))
    \end{align}
    bits of space, with high probability in $n$, and answers queries in expected constant time. Here $\delta=\ln(1/(1-\epsilon))$ as in \cref{thm:filter_ub}, and $H(\text{Pois}(\delta))$ denotes the binary entropy of $\text{Pois}(\delta)$.
\end{theorem}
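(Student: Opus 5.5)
The plan is to take a standard quotient filter, store its slot table in the entropy-encoded array of \cref{thm:arithmetic-code-poly-n}, and let the array absorb all the space redundancy of the quotient filter's format. The space bound then reduces to an expected-entropy calculation, together with concentration to get the high-probability guarantee.

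\textbf{Construction.} Fix $m=n\delta^{-1}+O(1)$ as in \cref{thm:filter_ub}, and hash each key to a fingerprint in $[m]$. Split each fingerprint into a quotient $q\in[Q]$ and a remainder $r\in[m/Q]$, with $Q=\Theta(n)$ chosen so that the load $n/Q$ is at most a constant strictly below $1$. The filter is the usual linear-probing quotient filter on $Q$ slots. Each slot holds either \texttt{empty} or a remainder together with the three metadata bits (occupied, continuation, shifted).

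We do not store slots individually. Instead we group $k$ consecutive slots into one super-symbol, with $k$ a parameter to be tuned, and store the resulting array of $Q/k$ super-symbols with \cref{thm:arithmetic-code-poly-n}. A super-symbol must fit in $O(w)$ bits. Because $\epsilon>1/\sqrt n$, a remainder has at most $(1/2)\log n+O(1)$ bits. Clusters have length $O(\log n)$ with high probability, so the carry into any window is also short. I would therefore take $k=\Theta(\sqrt{\log n})$, and where necessary use a block-local encoding of the window: the occupancy counts of its $k$ quotients, their sorted remainders, and the number of elements shifted in from the left. This keeps each super-symbol within $O(w)$ bits except with negligible probability; overflowing windows would be handled by an escape symbol. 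Queries and updates touch $O(1+\text{cluster length})/k$ super-symbols, and the standard analysis of linear probing at constant load $<1$ gives expected cluster length $O(1)$. Hence operations take expected $O(1)$ time, which is the weaker time guarantee claimed.

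\textbf{Space.} By \cref{thm:arithmetic-code-poly-n}, the space used is $(1+O(\log\log n/\log n))\cdot H(A)$ plus lower-order terms, where $H(A)$ is the empirical entropy of the super-symbol array. For any fixed distribution $D$ we have $H(A)\le\sum_j-\log D(A[j])$. I would take $D$ to be the stationary marginal law of a window under the random-hash model, which makes $\E[H(A)]\le (Q/k)\cdot H(\text{window})$.

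The window content is determined by two things: the counts and remainders of its $k$ quotient buckets, and the carry entering from the left. The first part contributes exactly the Poissonized fingerprint-multiset entropy, which sums to $n\delta^{-1}H(\text{Pois}(\delta))(1\pm 1/\poly\log n)$, using the same Poisson approximation as in \cref{thm:filter_ub}. The carry has geometric-type tails at load below $1$, so it contributes only $O(1)$ bits per window. The total overhead is therefore $O(Q/k)=O(n/\sqrt{\log n})$. By \cref{clm:connect_delta_with_eps} this is $O(1/\sqrt{\log n})\cdot n\delta^{-1}H(\text{Pois}(\delta))$.

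The $\log\binom{|\Sigma|}{m}$ term and the $n/\poly\log n$ term are handled as in \cref{thm:filter_ub}. The high-probability bound comes from the same bucketing into small independent sub-filters and the splitting trick used in \cref{thm:filter_ub}.

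\textbf{Main obstacle.} The hardest step is the entropy accounting at window boundaries. I need to show rigorously that the joint entropy of a window exceeds the Poisson-multiset entropy of its $k$ buckets by only $O(1)$ bits; equivalently, that the metadata bits are nearly determined by the counts plus the carry. This is also where the tension in choosing $k$ appears: $k$ must be large enough that per-window overhead is amortized, yet small enough that a window fits in a word. My first attempt would be to bound the entropy of the carry via the tail of the linear-probing displacement at constant load. If that bound turns out to be only $O(\log k)$ rather than $O(1)$, I would still rebalance $k$ so as to stay within the $O(1/\sqrt{\log n})$ factor.
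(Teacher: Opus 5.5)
There is a genuine gap, and it starts with choosing $k=\Theta(\sqrt{\log n})$ independently of $\epsilon$. A window of $k$ physical slots carries about $k(\log(2\delta^{-1})+3)$ bits. The theorem allows $\log\epsilon^{-1}$ up to $(1/2)\log n$, so for $w=\Theta(\log n)$ a window has $\Theta(\log^{3/2}n)=\omega(w)$ bits, and \cref{thm:arithmetic-code-poly-n} does not apply, since it needs $|\Sigma|=2^{O(w)}$.

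Even when $w$ is large enough, the $\log\binom{|\Sigma|}{m}$ term is not ``handled as in \cref{thm:filter_ub}''. There the entries are counters taking $O(\log n)$ distinct values with high probability. Here the number $m$ of distinct windows can be $\Theta(Q/k)$, and $\log\binom{|\Sigma|}{m}\approx m\log(|\Sigma|/m)$ then becomes comparable to the entire budget.

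The paper ties the group length to the remainder length, $g=\lceil\eta w/(1+\max\{0,\log\delta^{-1}\})\rceil$. A group of exactly $g$ slots then always has fewer than $(2/3)\log n$ bits, so $|\Sigma|=O(n^{2/3})$ and $\log\binom{|\Sigma|}{m}\le|\Sigma|$ is negligible. Because this size bound is deterministic, no escape mechanism is needed. Your bucket-local alternative loses exactly this property. Its size is not capped by $k$ slots, and for small $k$ a constant fraction of windows overflow. An escape symbol alone does not say where that data lives or what it costs.

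Second, your accounting of the carry undercounts. A window's physical contents are not a function of its own buckets plus the \emph{number} of keys shifted in. They also contain those keys' remainders, hash offsets and landing slots, roughly $\log\epsilon^{-1}+O(\log\log n)$ bits apiece. There are $\Theta(1)$ such keys per window in expectation. The geometric tail you invoke bounds only the entropy of the count. Getting $O(1)$ per window would require proving a cancellation against the remainders of keys shifted \emph{out}, which you do not supply.

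The paper handles your stated main obstacle as follows. Let $B_j=(\{|S_i|\},Y_j,\{S_i^*\},\text{one bit})$. It shows $B_j$ determines $A_j$ by simulating the ordered table from the sizes and the overflow data. Orderedness gives that $S_i^*$ is the $|S_i^*|$ smallest elements of $S_i$. This yields $H(A_j)\le H(Y_j)+\sum_i H(S_i)+1$. Since each key overflows with probability $O(1/g)$, $\sum_j H(Y_j)=O(n\log\epsilon^{-1}(\log\log n+\log\epsilon^{-1})/\log n)$.

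That overhead is an $O(1/\sqrt{\log n})$ fraction only when $\log\epsilon^{-1}<\sqrt{\log n}$. For $\log\epsilon^{-1}\ge\sqrt{\log n}$ the paper uses the crude bound $H(A_j)\le O(g)+\log(2\delta^{-1}+O(1))\cdot(\text{\# keys in group }j)$. This totals $n\log\epsilon^{-1}+O(n)$, which suffices because the optimum is at least $n\sqrt{\log n}$ by \cref{clm:connect_delta_with_eps}. So the $1/\sqrt{\log n}$ in the statement comes from this case split, not from the window length.
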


To simplify our exposition in the proof below, we will assume access to free randomness, and we will settle for expected-space bounds rather than high-probability ones. These relaxations can then be removed using the same splitting trick as was used in Theorem \ref{thm:filter_ub}.

\begin{proof}

    We first describe the structure of a typical quotient filter, then show how to improve its space usage.

\paragraph{Background: how quotient filters work.} A quotient filter hashes the keys in $[U]$ to \defn{fingerprints} in $[n\delta^{-1} + O(1)]$ using a random hash function $h:[U] \to [n\delta^{-1}+O(1)]$, and stores the multiset of fingerprints. The filter further partitions each fingerprint into two parts: Write $h(x)$ (uniquely) as $h_1(x)+(h_2(x)-1)\cdot 2n$, such that $h_1(x)\in [2n]$ and $h_2(x)\in [2\delta^{-1}+O(1)]$. We call $h_1(x)$ the \defn{hash} of $x$, and $h_2(x)$ the \defn{remainder} of $x$.

A quotient filter uses an ordered linear probing hash table of some size $(1 + \beta)n$ slots to store the fingerprints---in a classical quotient filter, one tends to select $\beta$ to be small (so that the filter is space efficient), but for our purposes it will be acceptable to simply use $\beta= 1$, meaning that the hash table has $2n$ slots.

When inserting a key $x$, we start from cell $h_1(x)$ and look for a cell to place $x$. At cell $i$, we do the following:
    \begin{itemize}
        \item If $i$ is empty, then we place $x$ in it and terminate.
        \item If $i$ contains a key $x'$ for which the fingerprint $h(x)$ precedes $h(x')$ (that is, either $h_1(x)<h_1(x')$ or $h_1(x)=h_1(x')\land h_2(x)<h_2(x')$), then we place $x$ in cell $i$ and remove $x'$. We then start from cell $i+1$ (wrapping around if necessary) and look for a cell to place $x'$ (this ensures that our hash table is \emph{ordered}).
        \item Otherwise, we move on to cell $i+1$  (wrapping around if necessary).
    \end{itemize}
    The process of deleting a key $x$ is similar: We find the cell $i$ that stores the fingerprint $h(x)$ (if multiple cells store $h(x)$, choose any of them) and delete the content of $i$. After this, we move fingerprints to the left to maintain the structure of the hash table: if the fingerprint stored at cell $i+1$ has hash $\le i$ (i.e., it starts from the left of cell $i$), then we move that fingerprint to cell $i$, and continue this process for cell $i+2$. Since our hash table has load factor $1/2$, all these operations can be done in expected constant time.

    To store this hash table, a na\"{\i}ve implementation would use $O(n\log (n\delta^{-1}))$ bits of space by directly storing a fingerprint in each cell. The standard quotient filter improves upon this by noting that we do not need to explicitly store the hashes of the keys, since most information about the hashes is already implicitly known (i.e., a key is placed in a cell not far away from its hash).

    Formally, in each cell $i$, a quotient filter stores the following. We explicitly store the \emph{remainder} of the key placed in $i$ (hence the name \emph{quotient} filter). We also store three extra bits of metadata for each cell $i$ for recovering the hashes:
    \begin{enumerate}
        \item One bit indicating whether the cell is empty.
        \item One bit indicating whether the key in cell $i$ has the same hash as that of $i-1$.
        \item One bit indicating whether any key in the hash table has $i$ as its hash.
    \end{enumerate}
    This metadata can be maintained in a straightforward way when performing updates. To recover the hash of the key placed at cell $i$, we find the first empty cell $i_0$ to the left of $i$ using the type 1 metadata. We then count the number of distinct hashes among the keys stored in $(i_0,i]$ using the type 2 metadata. Finally, say the number is $k$, we use the type 3 metadata to find the $k$-th cell $i_k$ in $(i_0,i]$ that has some key hashed to it, which is the hash of the key at cell $i$.

    In total, the standard implementation of a quotient filter, assuming $\beta= 1$, uses $O(n(1+\max\{0,\log\delta^{-1}\}))$ bits, while supporting $O(1)$ expected-time operations.

    \paragraph{The improved filter.} Our improvement is extremely simple: We partition the hash table into \defn{groups} of $g$ cells each (except one group which may be smaller), where $g=\lceil \eta w/(1+\max\{0,\log\delta^{-1}\})\rceil$ for a small constant $\eta>0$. We store the hash table using an entropy-encoded array \cref{thm:arithmetic-code-poly-n} of $n'=\lceil 2n/g\rceil$ entries, where each entry corresponds to a group of the hash table, and the value of this entry is just the concatenation of all remainders and metadata in the group. We will choose $\eta$ to be sufficiently small, such that each entry contains $<(2/3)\log n$ bits of information, which means that our alphabet size is only $O(n^{2/3})\ll n$, so we can ignore the effect of the alphabet size on our data structure. In the remainder of this proof, we show that our improved filter is indeed space-efficient.

    \paragraph{Notation.} We say that a key $x$ \defn{overflows} if $x$ is placed in a cell that belongs to a different group from its hash $h_1(x)$. For every cell $i$, define $S_i$ to be the multiset of remainders that correspond to keys that hash to $i$. Define $S^*_i\subseteq S_i$ to be the submultiset of remainders corresponding to the non-overflown keys, i.e., keys that are placed in the same group as cell $i$.

    \paragraph{The analysis.} Our goal is to bound the entropy $H=\sum_{j=1}^{n'}\log(n'/f_{A_j})$ of the array, where $A_j$ is the value of the $j$-th entry (interpreted as a bit string) and $f_{A_j}$ is the number of occurrences of $A_j$ in $A_1,\dots,A_{n'}$.

    We assume that $g$ divides $2n$, that is, each group contains exactly $g$ cells (if not, we directly bound the cost of the last group as $\log n$). In this case, viewing the $A_j$'s as random variables, we have that they are identically distributed (although not independent). Since $H$ is the cost of encoding the $A_j$'s under the empirical distribution, if we fix some specific distribution $D$, then it is always the case that $H\le \sum_{j=1}^{n'}-\log D(A_j)$. Since the $A_j$'s are identically distributed, we can let $D$ be the distribution of $A_1$, which gives us $H\le \sum_{j=1}^{n'}H(A_j)$, where $H(A_j)$ is the binary entropy of $A_j$. It now remains to bound $\sum_{j=1}^{n'}H(A_j)$. In the following, we present two different bounds for $\sum_{j=1}^{n'}H(A_j)$ that, when combined, implies our space bound.

    \paragraph{Large $\log\epsilon^{-1}$ case.} When $\log\epsilon^{-1}\ge \sqrt{\log n}$, we bound $H(A_j)$ as follows: To describe $A_j$, we can directly send the metadata bits, and then send the remainders of the nonempty cells. This shows that $H(A_j)$ is at most $O(g)+\log(2\delta^{-1}+O(1))\cdot (\text{\# of keys in the $j$-th group})$. Summing this over all $O(n/g)$ groups, the total entropy is at most
    \begin{align*}
        H\le n\log\delta^{-1}+O(n)=n\log\epsilon^{-1}+O(n),
    \end{align*}
    which is at most $(1+O(1/\sqrt{\log n}))\cdot n\delta^{-1}H(\text{Pois}(\delta))$ since $n\delta^{-1}H(\text{Pois}(\delta))\ge n\log\epsilon^{-1}\ge n\sqrt{\log n}$ (by \cref{clm:connect_delta_with_eps}). Note that we always have $\epsilon<0.9$ when $n$ is sufficiently large.

    \paragraph{Small $\log\epsilon^{-1}$ case.} When $\log\epsilon^{-1}<\sqrt{\log n}$, we consider the following information denoted as $B_j$:
    \begin{itemize}
        \item The size $|S_i|$ of every cell $i$ in the $j$-th group.
        \item The number of overflown keys placed in the $j$-th group, as well as their fingerprints and the cells they are placed in. Denote this information as $Y_j$.
        \item The multisets $S^*_i$ of every cell $i$ in the $j$-th group.
        \item One extra bit, which is the type 2 metadata bit of the first cell in the $j$-th group.
    \end{itemize}

    Note that $B_j$ determines $A_j$, which means that $H(B_j)\ge H(A_j)$:
    \begin{itemize}
        \item After knowing the overflown keys and the sizes $|S_i|$, we can simulate the insertions of the ordered hash table, and know whether each cell in the $j$-th group contains a key, and if so, we know this key's hash. Using this information, we already know all the metadata bits in the $j$-th group, except for the type 2 metadata bit of the first cell in the $j$-th group, which is given explicitly in $B_j$.
        \item It remains to know the remainders of the keys. For the overflown keys, this is given explicitly in $B_j$. For the non-overflown keys, we determine the remainders using the sets $S^*_i$. For each cell $i$, we know the location of all non-overflown keys with hash $i$. Since our hash table is ordered, it must be the case that the remainder of the $k$-th such key is the $k$-th smallest value in $S^*_i$. Thus, $B_j$ determines all remainders.
    \end{itemize}
    
    It remains to bound $\sum_{j=1}^{n'}H(B_j)$. For each $B_j$, we have that
    \begin{align}
        \label{equ:B_entropy_bound}
        H(B_j)\le\bk*{\sum_{i\text{ in j-th group}}H(|S_i|)}+H(Y_j)+\bk*{\sum_{i\text{ in j-th group}}H(S^*_i\mid \{|S_{i'}|\}_{i'\text{ in j-th group}},Y_j)}+1.
    \end{align}
    Given $\{|S_{i'}|\}_{i'\text{ in j-th group}}$ and $Y_j$, one knows the hashes of all keys placed in the $j$-th group, and can therefore determine the sizes $|S^*_i|$. Thus, for every cell $i$ in the $j$-th group, we have
    \begin{align*}
        H(S^*_i\mid \{|S_{i'}|\}_{i'\text{ in j-th group}},Y_j)\le H(S^*_i\mid |S_i|,|S^*_i|).
    \end{align*}
    Plugging this into \eqref{equ:B_entropy_bound} gives
    \begin{align}
        \label{equ:B_entropy_bound_2}
        H(B_j)&\le H(Y_j)+\sum_{i\text{ in $j$-th group}}\big(H(|S_i|)+H(S^*_i\mid |S_i|,|S^*_i|)\big)+1\nonumber \\
        &\le H(Y_j)+\sum_{i\text{ in $j$-th group}}\big(H(|S_i|)+H(S_i\mid |S_i|)+H(S^*_i\mid |S_i|,|S^*_i|,S_i)\big)+1\nonumber \\
        &\le H(Y_j)+\sum_{i\text{ in $j$-th group}}H(S_i)+1,
    \end{align}
    where the last line is due to the fact that $S^*_i$ contains the smallest remainders in $S_i$ (due to our hash table being ordered), so $S_i$ and $|S^*_i|$ uniquely determine $S^*_i$.

    Summing \eqref{equ:B_entropy_bound_2} for all $j$, we have that
    \begin{align}
        \label{eq:H_bound}
        H\le\sum_{j=1}^{n'}H(B_j)\le \bk*{\sum_{j=1}^{n'}H(Y_j)}+\bk*{\sum_{i=1}^{2n}H(S_i)}+O(n\log\epsilon^{-1}/\log n).
    \end{align}
    Note that the second term of \eqref{eq:H_bound} is at most $(n\delta^{-1}+O(1))H(\text{Pois}(\delta))$, since the latter is the cost of storing the number of keys of each fingerprint as if they are independent random variables. It remains to bound $\bk*{\sum_{j=1}^{n'}H(Y_j)}$.

    The variable $Y_j$ consists of the following parts:
    \begin{itemize}
        \item The number of overflown keys in the $j$-th group. This costs $\log g=O(\log\log n)$ bits per group, which sums to $O(\log\log n\cdot n/g)=O(n\log\epsilon^{-1}\log\log n/\log n)$ bits.
        \item For each overflown key $x$, $Y_j$ contains:
        \begin{itemize}
            \item The cell that $x$ is placed in, costing at most $\log g=O(\log\log n)$ bits.
            \item The hash of $x$, costing $O(\log\log n)$ bits in expectation since $x$ is placed at distance $O(\log n)$ from its hash with high probability (this is a property of linear probing hash tables).
            \item The remainder of $x$, costing $\log(2\delta^{-1}+O(1))=O(\log\epsilon^{-1})$ bits.
        \end{itemize}
    \end{itemize}
    Overall, $\bk*{\sum_{j=1}^{n'}H(Y_j)}$ costs
    \begin{align}
        \label{equ:overflown_entropy}
        O(n\log\epsilon^{-1}\log\log n/\log n)+O(\log\log n+\log\epsilon^{-1})\cdot \E[\text{\# of overflown keys}]
    \end{align}
    in expectation. For any key $x$ in a linear probing hash table, the expected distance between the cell that $x$ is placed in and the hash of $x$ is $O(1)$. Thus, since the hash function $h$ is assumed to be fully random and each group contains $g$ cells, the probability that $x$ overflows is $O(1/g)$. In total, we have $O(n/g)$ overflown keys in expectation. Plugging this back into \eqref{equ:overflown_entropy} gives $O(n\log\epsilon^{-1}\cdot (\log\log n+\log \epsilon^{-1})/\log n)$.
    
    In summary, we can bound the expected entropy $H$ of the array $A_1,\dots,A_{n'}$ as
    \begin{align*}
        H&\le (n\delta^{-1}+O(1))H(\text{Pois}(\delta))+O(n\log\epsilon^{-1}\cdot (\log\log n+\log \epsilon^{-1})/\log n).
    \end{align*}
    Since $\log\epsilon^{-1}\le \sqrt{\log n}$, the latter term is at most
    \begin{align*}
        O\bk*{\frac 1{\sqrt{\log n}}}\cdot n\log\epsilon^{-1}=O\bk*{\frac 1{\sqrt{\log n}}}\cdot n\delta^{-1}H(\text{Pois}(\delta)).
    \end{align*}
    For $\epsilon<0.9$, this is by \cref{clm:connect_delta_with_eps}. For $0.9\le \epsilon<1$, both $n\log\epsilon^{-1}$ and $n\delta^{-1}H(\text{Pois}(\delta))$ are $\Theta(n)$.
    \paragraph{Combining two cases.} We have shown that
    \begin{align*}
        H&\le\bk*{1+O\bk*{\frac 1{\sqrt{\log n}}}}\cdot n\delta^{-1}H(\text{Pois}(\delta))
    \end{align*}
    in expectation whenever $1-\epsilon=\Omega(1)$. Therefore, when we apply \cref{thm:arithmetic-code-poly-n} to store the array $A_1,\dots,A_{n'}$, our data structure only uses

    \begin{align*}
        &\bk*{1+O\bk*{\frac {\log\log n}{\log n}}}H+O(n/\poly\log n) \\
        \le{}&\bk*{1+O\bk*{\frac 1{\sqrt{\log n}}}}\cdot n\delta^{-1}H(\text{Pois}(\delta))
    \end{align*}
    bits of space in expectation. This concludes our analysis.
\end{proof}

%% file: app_poisson.tex
\section{Proof of Claim \ref{clm:connect_delta_with_eps}}
\label{app:poisson}

\Poisson*

Recall that $\delta$ is defined as $\ln(1/(1-\epsilon))$.

\begin{proof}    
    By definition of Poisson distributions, we have
    \begin{align}
        \label{eq:bound_1}
        n\delta^{-1}H(\text{Pois}(\delta))=n\log\epsilon^{-1}+n\log e-n\cdot\bk*{\log(\epsilon^{-1}\delta)-\delta^{-1}e^{-\delta}\sum_{k=0}^{\infty}\frac{\delta^k\log(k!)}{k!}}.
    \end{align}
    For the upper bound, we study the infinite summation
    \begin{align}
        \label{eq:bound_2}
        \delta^{-1}e^{-\delta}\sum_{k=0}^{\infty}\frac{\delta^k\log(k!)}{k!}.
    \end{align}
    Note that $\log(k!)=0$ when $k<2$, so we only have to consider the terms where $k\ge 2$. In this case, we have that
    \begin{claim}
        For any $k\ge 2$, $\log(k!)\le k(k-1)/2$.
    \end{claim}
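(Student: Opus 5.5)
The plan is to prove $\log(k!)\le k(k-1)/2$ by induction on $k\ge 2$, using the recursion $\log(k!)=\log((k-1)!)+\log k$. Since $k(k-1)/2-(k-1)(k-2)/2=k-1$, it suffices to check the base case and to show that each inductive step adds at most $k-1$ to the left-hand side.

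For the base case $k=2$, we have $\log(2!)=\log 2=1$ and $2\cdot 1/2=1$, so equality holds. For the inductive step with $k\ge 3$, assume $\log((k-1)!)\le (k-1)(k-2)/2$. Then
\begin{align*}
\log(k!)=\log((k-1)!)+\log k\le \frac{(k-1)(k-2)}{2}+\log k,
\end{align*}
so it remains to show $\log k\le k-1$, i.e.\ $k\le 2^{k-1}$ for every integer $k\ge 1$. This follows from a trivial induction, since $2^{k}=2\cdot 2^{k-1}\ge 2k\ge k+1$. Alternatively, it follows from concavity of $\log$: the line through the points $(1,0)$ and $(2,1)$ lies above $\log x$ for $x\ge 2$.

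As a direct alternative, we can avoid the outer induction altogether by writing
\begin{align*}
\log(k!)=\sum_{j=2}^{k}\log j\le\sum_{j=2}^{k}(j-1)=\frac{k(k-1)}{2},
\end{align*}
using the same elementary bound $\log j\le j-1$ for integers $j\ge 1$. I would present this one-line version.

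There is no real obstacle here. The only point requiring care is that the bound $\log j\le j-1$ must use base-2 logarithms. It is tight at $j=1,2$, which is exactly why the claim is stated for $k\ge 2$ with equality at $k=2$.
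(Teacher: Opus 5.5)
Your one-line version is correct and is the same argument as the paper's, just written in logarithmic form. The paper exponentiates and compares $k!$ termwise with $2^{0+1+\cdots+(k-1)}$ using $j\le 2^{j-1}$, which is exactly your bound $\log j\le j-1$ summed over $j$.
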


    \begin{proof}
        Exponentiating both sides, we get
        \begin{align*}
            k!\le 2^{k(k-1)/2}=2^{0+1+\cdots+k-1},
        \end{align*}
        which holds because $k\le 2^{k-1}$ for any $k\ge 1$.
    \end{proof}

    Plugging this into \eqref{eq:bound_2}, we obtain
    \begin{align*}
        \delta^{-1}e^{-\delta}\sum_{k=0}^{\infty}\frac{\delta^k\log(k!)}{k!}&\le (1/2)\delta^{-1}e^{-\delta}\sum_{k=2}^{\infty}\frac{\delta^k}{(k-2)!} \\
        &\le (1/2)\delta e^{-\delta}\sum_{k=0}^{\infty}\frac{\delta^k}{k!}=(1/2)\delta.
    \end{align*}
    Plugging this back in \eqref{eq:bound_1}, it remains to show that, for $0<\delta<\ln 10$,
    \begin{align*}
        &n\log\epsilon^{-1}+n\log e-n\cdot\bk*{\log(\epsilon^{-1}\delta)-(1/2)\delta}\le n\log\epsilon^{-1}+n\log e \\
        \Leftrightarrow{}&\log(\epsilon^{-1}\delta)-(1/2)\delta=-\log(1-e^{-\delta})+\log(\delta)-(1/2)\delta\ge 0,
    \end{align*}
    which can be verified using simple techniques and is omitted.

    For the lower bound, we can directly ignore the infinite summation in \eqref{eq:bound_1}, and only have to show that
    \begin{align*}
        &n\log\epsilon^{-1}+n\log e-n\log(\epsilon^{-1}\delta)\ge n\log\epsilon^{-1}\\
        \Leftrightarrow{}&\log e-\log(\epsilon^{-1}\delta)\ge 0.
    \end{align*}
    It is easy to verify that this holds when $0<\epsilon<0.9$, so we omit the verification.
\end{proof}